\newtheorem{theorem}{Theorem}[section]
\newtheorem{definition}{Definition}
\newtheorem{lemma}[theorem]{Lemma}
\newtheorem{proposition}[theorem]{Proposition}
\newtheorem{claim}[theorem]{Claim}
\def\nat{{\mathbb N}}
\def\current{{\mathtt{current}}}
\begin{document}
\normalsize

\title{Qualitative Multi-Objective Reachability for 
Ordered Branching MDPs}
\author{Kousha Etessami\thanks{\tt kousha@inf.ed.ac.uk}
\\U. of Edinburgh
\and 
Emanuel Martinov\thanks{\tt eo.martinov@gmail.com}\\U. of Edinburgh 
}

\date{}
\maketitle

\begin{abstract}
We study qualitative multi-objective reachability problems 
for Ordered Branching Markov Decision Processes (OBMDPs), 
or equivalently context-free MDPs, 
building on prior results for single-target reachability 
on Branching Markov Decision Processes (BMDPs). 

We provide two separate algorithms for ``almost-sure'' and 
``limit-sure'' multi-target reachability for OBMDPs. 
Specifically, given an OBMDP, $\mathcal{A}$, 
given a starting non-terminal, and given a set of 
\textit{target} non-terminals $K$ of size $k = |K|$, 
our first algorithm decides whether 
the supremum probability, of generating a tree 
that contains every target non-terminal in set $K$, is $1$. 
Our second algorithm decides whether there is a strategy 
for the player to almost-surely (with probability $1$) 
generate a tree that contains every target non-terminal 
in set $K$.

The two separate algorithms are needed: we show 
that indeed, in this context, ``almost-sure'' $\not=$ 
``limit-sure'' for multi-target reachability, meaning 
that there are OBMDPs for which the player may not 
have any strategy to achieve probability exactly $1$ of 
reaching all targets in set $K$ in the same generated tree, 
but may have a sequence of strategies that achieve 
probability arbitrarily close to $1$. 
Both algorithms run in time $2^{O(k)} \cdot |\mathcal{A}|^{O(1)}$, 
where $|\mathcal{A}|$ is the total bit encoding length 
of the given OBMDP, $\mathcal{A}$. 
Hence they run in polynomial time when $k$ is fixed, and are 
fixed-parameter tractable with respect to $k$. 
Moreover, we show that even the qualitative almost-sure 
(and limit-sure)
multi-target reachability decision problem is in general NP-hard, 
when the size $k$ of the set $K$ of target non-terminals 
is not fixed.

\end{abstract}

\section{Introduction}

Ordered Branching Markov Decision Processes (OBMDPs) can
be viewed as controlled/probabilistic context-free
grammars, but without any terminal symbols, and
where moreover the non-terminals are partitioned
into two sets: controlled non-terminals and 
probabilistic non-terminals.
Each non-terminal, $N$, has an associated set of 
grammar rules of the form $N \rightarrow \gamma$,
where $\gamma$ is a (possibly empty) 
sequence of non-terminals.
Each probabilistic non-terminal is equipped
with a given probability distribution on its associated grammar rules.
For each controlled non-terminal,
$M$, there is an associated non-empty set of available actions,
$A_M$, which is in one-to-one correspondence with
the grammar rules of $M$. So, for each action, $a \in A_M$, there is an
 associated grammar rule $M \stackrel{a}{\rightarrow} \gamma$.
Given an OBMDP, given a ``start'' non-terminal,
and given a ``strategy'' for the controller, these together determine
a probabilistic process that generates a (possibly infinite)
random ordered tree.
The tree is formed via the usual parse tree
expansion of grammar rules, proceeding generation by generation,
in a top-down manner. Starting with a root node 
labeled by the ``start'' non-terminal, the ordered tree is generated
based on the controller's (possibly randomized) 
choice of action at each node of the tree that is labeled by a
controlled non-terminal, 
and based on the probabilistic choice of a grammar rule
at nodes that are labeled by a probabilistic non-terminal.

We assume that a general {\em strategy} for 
the controller can operate
as follows: at each node $v$ of the ordered tree, 
labeled by a controlled non-terminal,
the controller (player) can choose its action (or its probability distribution on actions) at $v$ based on the entire
``ancestor history'' of $v$, meaning based on the entire sequence of
labeled nodes and actions leading from the root node to $v$, 
{\em as well as} based on the ordered position of each
of its ancestors (including $v$ itself) among its siblings in the tree.

Ordered Branching Processes (OBPs) are OBMDPs without
any controlled non-terminals.  Both OBPs and OBMDPs are 
very similar to classic multi-type branching processes (BPs), 
and to Branching MDP (BMDPs), respectively.
The only difference is that for OB(MD)Ps the generated tree is 
{\em ordered}. 
In particular,
the rules for an OBMDP have an ordered {\em sequence} of non-terminals on their right hand side,
whereas there is no
such ordering in BPs or BMDPs: each rule for a given type associates
an unordered multi-set of ``offsprings'' of various types to that given type.
Branching processes and stochastic context-free grammars
have well-known applications in many fields, including
in natural language processing,
biology/bioinformatics
(e.g., \cite{KA02}, population genetics \cite{HJV05},
RNA modeling \cite{DEKM98},
and cancer tumor growth modelling \cite{Bozic13,RBCN13}), and physics
(e.g., nuclear chain reactions).
Generalizing these models to MDPs is natural, and
can allow us to study, and to optimize algorithmically, settings
where such random processes can partially be controlled.

The single-target reachability objective for OBMDPs amounts to
optimizing (maximizing or minimizing) the probability
that, starting at a given start (root) non-terminal,
the generated tree contains some given target non-terminal.
This objective has already been thoroughly studied for
BMDPs, as well as for (concurrent) stochastic game generalizations of BMDPs
(\cite{ESY-icalp15-IC,EMSY-icalp19-bcsg}).
Moreover, it turns out that there is really no difference at all
between BMDPs and OBMDPs when it comes to the single-target
reachability objective: all the algorithmic results from
\cite{ESY-icalp15-IC,EMSY-icalp19-bcsg}
carry over, mutatis
mutantis, for OBMDPs, and for their stochastic game generalizations.

A natural generalization of single-target reachability is 
multi-objective reachability, 
where the goal is to optimize each of the respective probabilities 
that the generated tree contains each of several 
different target non-terminals. 
Of course, there may be a trade-offs between 
these different objectives.

Our main concern in this paper is  \textit{qualitative}
multi-objective reachability problems,
where the aim is to determine whether there is a
strategy that guarantees that each of the given set of target non-terminals
is almost-surely (respectively, limit-surely) contained in the generated tree, i.e., with probability 1 (respectively, with probability arbitrarily close to $1$).
In fact, we show that the
\textit{almost-sure} and \textit{limit-sure} problems do
not coincide. That is, there are OBMDPs for which there is no single
strategy that achieves probability exactly $1$ for
reaching all targets, but where nevertheless,
for every $\epsilon > 0$, there is a strategy that
guarantees a probability $\ge 1 - \epsilon$, 
of reaching all targets.

By contrast, for both BMDPs and OBMDPs,
for single-target reachability, the {\em qualitative} 
almost-sure and limit-sure questions do coincide: 
there is a strategy that guarantees reaching 
the target non-terminal with probability $1$ if and only if 
there is a sequence of strategies that guarantee 
reaching the target with probabilities arbitrarily close to $1$ 
(\cite{ESY-icalp15-IC}).\footnote{The notion of general 
``strategy'' employed for BMDPs 
in \cite{ESY-icalp15-IC} is somewhat different than what 
we define in this paper for OBMDPs: 
it allows the controller to not only base its choice at a tree node on
 the ancestor chain of that node, but on the entire tree up to that ``generation''. This is needed for BMDPs because
there is no ordering available on ``siblings'' in the tree generated by a BMDP.    However, a careful look shows that 
the results of \cite{ESY-icalp15-IC} imply that, for OBMDPs,
for single-target reachability, almost-sure and limit-sure reachability also coincide under the notion of ``strategy''
we have defined in this paper, where choices are based only on the  ``ancestor history'' (with ordering information) of each node in the
{\em ordered} tree.  In particular the key ``queen/workers" strategy employed for almost-sure (=limit-sure) reachability in
\cite{ESY-icalp15-IC} can be mimicked using the ordering with respect to siblings that is available in ancestor histories of OBMDPs.  A natural question is what happens for multi-objective qualitative reachability in OBMDPs, if we allow the more general definition of strategy, which can depend at each node on the entire tree up to the ``generation" of that node (even on nodes that are not among its ancestors).  We leave this question open in this paper, but we conjecture that under that 
richer notion of strategy   ``almost-sure" = ``limit-sure" for multi-target reachability for (O)BMDPs, and that essentially the same algorithm that we provide for limit-sure multi-target reachability for OBMDPs under the weaker notion of strategy used in this paper works also to decide both limit-sure and almost-sure multi-target reachability under that richer notion of strategy for (O)BMDPs.}

We give two separate algorithms for  almost-sure and limit-sure
multi-objective reachability. For the
\textit{almost-sure} problem, we are 
given an OBMDP, a start non-terminal, 
and a set of target non-terminals, 
and we must decide whether there 
exists a strategy using which the process 
generates, with probability 1, a tree 
that contains all the given target non-terminals. 
If the answer is ``yes'', the 
algorithm can also construct a (randomized) witness strategy 
that achieves this.\footnote{This strategy is, 
however, necessarily not ``static'',
meaning it must actually use 
the ancestor history: the action distribution cannot be 
defined solely based on which 
non-terminal is being expanded.}
The algorithm for the \textit{limit-sure} problem 
decides whether the supremum probability of generating 
a tree that contains all given target non-terminals is $1$. 
If the answer is ``yes'', the algorithm can also construct,
given any $\epsilon > 0$, a randomized non-static
strategy that guarantees probability $\ge 1 - \epsilon$.
The limit-sure algorithm is only slightly more involved.

Both algorithms run in time $2^{O(k)} \cdot |\mathcal{A}|^{O(1)}$,
where $|\mathcal{A}|$ is the total bit encoding length
of the given OBMDP, $\mathcal{A}$, and $k = |K|$ is the size of 
the given set $K$ of target non-terminals. 
Hence they run in polynomial time when $k$ is fixed, and are
fixed-parameter tractable with respect to $k$.
Moreover, we show that the qualitative almost-sure (and limit-sure)
multi-target reachability decision problem is in general NP-hard,
when $k$ is not fixed.

Going beyond the goal of assuring probability $1$ of 
reaching each of a set of target non-terminals, 
we also consider more general qualitative multi-objective 
reachability/non-reachability problems, 
where we are given a set of target non-terminals, $K$, 
and where for each non-terminal $M \in K$, 
we are also given a $0$/$1$ probability $b_M \in \{0,1\}$, 
and an inequality $\Delta_M \in \{ = , < , > \}$, 
and where we wish to decide whether the controller 
has a single strategy using which, for all $M \in K$ 
the probability that the generated tree contains 
the non-terminal $M$ is $\Delta_M b_M$. 
We show that in some special cases these problems are decidable 
(efficiently). However, we leave open the decidability of 
the most general case of \textit{arbitrary} boolean combinations 
of such qualitative reachability and non-reachability queries 
over different target non-terminals. Furthermore, we leave open all
(both decision and approximation) {\em quantitative} 
multi-objective reachability questions, 
including when the goal is to approximate 
the tradeoff {\em pareto curve} of optimal probabilities 
for different reachability objectives.   These are intriguing questions for future research.

\medskip
\noindent \textbf{Related work.}
As already mentioned, the single-target reachability problem
for OBMDPs (and its stochastic game generalization)
is equivalent to the same problem for BMDPs,
and was studied in detail in \cite{ESY-icalp15-IC,EMSY-icalp19-bcsg},
even in the quantitative sense.
The same holds for another
fundamental objective, namely
\textit{termination/extinction},
i.e., where the objective is to
optimize the probability that the generated tree is finite.
The extinction objective for BMDPs, and the closely
related model of 1-exit recursive MDPs, was thoroughly studied 
in \cite{rmdp,rcsg2008,esy-icalp12}, including both qualitative
and quantitative algorithmic questions.   In particular, it was shown
in \cite{rmdp} that qualitative decision problems for termination of (O)BMDPs and 1-exit RMDPs
can be decided in polynomial time.   By directly using this result and building on it, it was shown in \cite{BBFK08}
that ``almost-sure" single-target reachability in 1-exit RMDPs, or equivalently in context-free MDPs with
{\em leftmost derivation}, can be decided in polynomial time.  However,   
context-free MDPs with leftmost derivation are very different than (O)BMDPs, which allow {\em simultaneous} derivation
of the tree from all unexpanded non-terminals in each generation (not just the leftmost one).  
Indeed, unlike  single-target reachability for OBMDPs (equivalently, context-free MDPs with simultaneous derivation), even for single-target reachability
for 1-exit RMDPs (equivalently, context-free MDPs with leftmost derivation),  ``almost-sure" $\neq$ ``limit-sure" and 
the decidability of ``limit-sure" reachability 
of a given target non-terminal remains an open question (despite the fact that there is a polynomial time algorithm for almost-sure reachability).

Algorithms for checking other properties of BPs
and BMDPs have also been investigated before, some of which
generalize termination and reachability.
In particular, model checking of BPs with properties given
by a deterministic parity tree automaton
was studied in \cite{CDK12}, and in  \cite{MichMio15}
for properties represented
by a subclass of alternating parity tree
automata. More recently, \cite{PrzSkrz16} investigated the
determinacy and the complexity of decision problems for 
ordered branching simple (turn-based)
stochastic games with respect to properties
defined by finite tree automata defining regular
languages on infinite trees. They showed that
(unlike the case with reachability) already for some basic regular
properties these games are not even determined, meaning they do
not have a value. Moreover, they show that for what
amounts to OBMDPs with a regular tree objective it is undecidable
to compare the optimal probability to a threshold value.
Their results do not have implications for (neither quantitative nor qualitative) multi-objective reachability.

Multi-objective reachability and model checking 
(with respect to omega-regular properties) has been studied 
for finite-state MDPs in \cite{multi-obj-MDPs2008}, 
both with respect to qualitative and quantitative problems. 
In particular, it was shown in \cite{multi-obj-MDPs2008} 
that for multi-objective reachability in finite-state MDPs, 
memoryless (but randomized) strategies are sufficient, 
that both qualitative and quantitative multi-objective 
reachability queries can be decided in P-time, and 
the \textit{Pareto curve} for them can be approximated 
within a desired error $\epsilon > 0$ in P-time in 
the size of the MDP and $1 / \epsilon$.

\medskip
\noindent {\bf Organization of the paper.}
Section 2 provides background and basic definitions. 
Section 3 gives an algorithm 
for determining the non-terminals starting from which there is 
a strategy that ensures that with a positive probability all 
target non-terminals in the given target set are in the 
generated tree. Sections 4 and 5 provide, respectively, 
our algorithms for the limit-sure and almost-sure 
multi-target reachability problems. 
Section 6 considers other special cases of 
qualitative multi-objective reachability/non-reachability.

\section{Background}

This section introduces background and definitions for 
Ordered Branching Markov Decision Processes (OBMDPs), and 
for the analysis of multi-objective reachability. 
First, we define OBMDPs 
in a general way that combines both control and 
probabilistic rules at each non-terminal, and that allows 
rules to have an arbitrarily long string of non-terminals 
on their right-hand side (RHS). Then we show that any 
OBMDP can be converted efficiently to 
an ``equivalent''\footnote{Equivalent w.r.t. all 
(multi-objective) reachability objectives we consider.} 
one in ``normal'' form.

\begin{definition}
    An \textbf{Ordered Branching Markov Decision Process (OBMDP)},
$\mathcal{A}$, is a 1-player controlled stochastic process,
represented by a tuple $\mathcal{A} = (V, \Sigma, \Gamma, R)$,
where $V = \{T_1, \ldots, T_n\}$ is a finite set of non-terminals, 
and $\Sigma$ is a finite non-empty action alphabet. 
For each $i \in [n]$, $\Gamma^i \subseteq \Sigma$ 
is a finite non-empty set
of actions for non-terminal $T_i \in V$,
and for	each $a \in \Gamma^i$,
$R(T_i, a)$ is a finite set of probabilistic rules
associated with the pair $(T_i, a)$. Each rule $r \in R(T_i, a)$
is a triple,  denoted by $T_i \xrightarrow{p_r} s_r$, where
$s_r \in V^*$ is a (possibly empty) ordered sequence (string)
of non-terminals and $p_r \in (0,1] \cap \mathbb{Q}$
is the positive probability of the rule $r$
(which we assume to be a rational number for
computational purposes). We assume that for each non-terminal
$T_i \in V$ and each $a \in \Gamma^i$, the rule probabilities in $R(T_i, a)$
sum to 1, i.e., $\sum_{r \in R(T_i, a)} p_r = 1$.
    \label{def:OBMDP}
\end{definition}

We denote by $|\mathcal{A}|$ 
the total bit encoding length of the OBMDP. 
If $|\Gamma^i| = 1$ for all non-terminals $T_i \in V$, 
then the model is called 
an \textbf{Ordered Branching Process (OBP)}.

In order to simplify the structure of the OBMDP model 
and to facilitate the proofs throughout the paper, 
we observe a simplified ``equivalent'' normal form 
for OBMDPs (Proposition \ref{prop:SNF-form} later on shows 
that OBMDPs can always be translated efficiently 
into this normal form). 
We extend the notation for rules in the model to adopt 
actions and not only probabilities, i.e., we will be 
using $T_i \xrightarrow{a} T_j$, where $a \in \Gamma^i$, 
to denote a rule where a non-terminal $T_i$ generates 
as a child (under player's choice of action $a \in \Gamma^i$) 
a copy of non-terminal $T_j$ (with probability $1$).

\begin{definition}
    An OBMDP is in \textbf{simple normal form (SNF)} if each 
non-terminal $T_i$ is in one of three possible forms:
\begin{itemize}
    \item \textsc{L-Form}: $T_i$ is a ``linear'' or 
``probabilistic'' non-terminal 
(i.e., the player has no choice of actions), and 
the associated rules for $T_i$ are given by: 
$T_i \xrightarrow{p_{i, 0}} \varnothing, 
T_i \xrightarrow{p_{i, 1}} T_1, \ldots, 
T_i \xrightarrow{p_{i, n}} T_n$, where for all 
$0 \le j \le n$, $p_{i, j} \ge 0$ denotes the probability of 
each rule, and $\sum_{j = 0}^{n} p_{i, j} = 1$.

	\item \textsc{Q-Form}: $T_i$ is a ``quadratic'' 
(or ``branching'') non-terminal, 
with a single associated rule (and no associated 
actions), of the form $T_i \xrightarrow{1} T_j \; T_r$.
	
	\item \textsc{M-Form}: $T_i$ is a ``controlled'' 
non-terminal, with a non-empty set of associated actions 
$\Gamma^i = \{a_1, \ldots, a_{m_i}\} \subseteq \Sigma$, 
and the associated rules have the form 
$T_i \xrightarrow{a_1} T_{j_1}, \ldots, 
T_i \xrightarrow{a_{m_i}} T_{j_{m_i}}$.\footnote{We assume, 
without loss of generality, that for 
$0 \le t < t' \le m_i$, $T_{j_t} \not= T_{j_{t'}}$.}
\end{itemize}
	\label{def:SNF-form}
\end{definition}

A \textit{derivation} for an OBMDP, starting at some start 
non-terminal $T_{start} \in V$, 
is  a (possibly infinite) labeled ordered tree, 
$X = (B,s)$, defined as follows. 
The set of nodes $B \subseteq \{l,r,u\}^*$ of the tree, $X$, 
is a \textit{prefix-closed} subset of $\{l,r,u\}^*$.\footnote{Here 
`l', `r', and `u', stand for `left', `right', and `unique' child, 
respectively.} 
So each node in $B$ is a string over $\{l,r,u\}$, and if 
$w = w' a \in B$, where $a \in \{l,r,u\}$, then $w' \in B$. 
As usual, when $w \in B$ and $w' = wa \in B$, for some 
$a \in \{l,r,u\}$, we call $w$ the \textit{parent} of $w'$, 
and we call $w'$ a \textit{child} of $w$ in the tree. 
A \textit{leaf} of $B$ is a node $w \in B$ that has no children 
in $B$. Let $\mathcal{L}_B \subseteq B$ denote the set of 
all leaves in $B$. 
The \textit{root} node is the empty string $\varepsilon$ 
(note that $B$ is prefix-closed,  so $\varepsilon \in B$). 
The function $s:B \rightarrow V \cup \{ \varnothing \}$ 
assigns either a non-terminal or the empty symbol as a label 
to each node of the tree, and must satisfy the following 
conditions: Firstly, $s(\varepsilon) = T_{start}$, in other 
words the root must be labeled by the start non-terminal; 
Inductively, if for any \textit{non-leaf} node 
$w \in B \setminus \mathcal{L}_B$ we have $s(w) = T_i$, 
for some $T_i \in V$, then:
\begin{itemize}
    \setlength{\topsep}{0em}
    \item if $T_i$ is a \textsf{Q}-form (branching) non-terminal, 
whose associated unique rule is $T_i \xrightarrow{1} T_j \; T_{j'}$, 
then $w$ must have exactly two children in $B$, 
namely $wl \in B$ and $wr \in B$, 
and moreover we must have $s(wl) = T_j$ and $s(wr) = T_{j'}$.

    \item if $T_i$ is a \textsf{L}-form (linear/probabilistic) 
non-terminal, then $w$ must have exactly one child in $B$, 
namely $wu$, and it must be the case that either $s(wu) = T_j$, 
where there exists some rule $T_i \xrightarrow{p_{i, j}} T_j$ 
with a positive probability $p_{i, j} > 0$, 
or else $s(wu) = \varnothing$, where there exists a rule 
$T_i \xrightarrow{p_{i, 0}} \varnothing$, 
with an empty right-hand side, and a positive probability 
$p_{i, 0} > 0$.

    \item if $T_i$ is a \textsf{M}-form (controlled) non-terminal, 
then $w$ must have exactly one child in $B$, namely $wu$, and it must 
be the case that $s(wu) = T_{j_t}$, where there exists some rule 
$T_i \xrightarrow{a_{t}} T_{j_t}$, associated with some action 
$a_t \in \Gamma^i$, having non-terminal $T_i$ as its left-hand side.
\end{itemize}

A derivation $X = (B,s)$ is \textit{finite} if the set $B$ 
is finite. A derivation $X' = (B',s')$ is called 
a \textit{subderivation} of a derivation $X = (B,s)$, 
if $B' \subseteq B$ and $s' = s|_{B'}$ 
(i.e., $s'$ is the function $s$, restricted 
to the domain $B'$). We use $X' \preceq X$ to denote the fact 
that $X'$ is a subderivation of $X$.

A \textit{complete} derivation, or a \textit{play}, $X = (B,s)$, 
is by definition a derivation in which for all leaves 
$w \in \mathcal{L}_B$, $s(w) = \varnothing$. 
For a play $X = (B,s)$, and a node $w \in B$, 
we define the \textit{subplay of $X$ rooted at $w$}, 
to be the play $X^w=(B^w,s^w)$, 
where $B^w = \{w' \in \{l,r,u\}^* \mid  w w' \in B \}$ and 
$s^w : B^w \rightarrow V \cup \{ \varnothing \}$ 
is given by, $s^w(w') := s(ww')$ for all $w' \in B^w$.\footnote{To 
avoid confusion, note that subderivation and subplay have 
very different meanings. Saying derivation $X$ is 
a ``subderivation'' of $X'$,  means that in a sense 
$X$ is a ``prefix'' of $X'$, as an ordered tree. 
Saying play $X$ is a subplay of play $X'$, means $X$ is 
a ``suffix'' of $X'$, more specifically $X$ is a subtree 
rooted at a specific node of $X'$.} 
Consider any derivation $X = (B,s)$, 
and any node $w = w_1 \ldots w_m \in B$, where $w_t \in \{l,r,u\}$ 
for all $t \in [m]$. We define the \textit{ancestor history} 
of $w$ to be a sequence $h_w \in V (\{l,r,u\} \times V)^*$, given by 
$h_w :=  s(\varepsilon) (w_1, s(w_1)) (w_2, s(w_1 w_2)) 
(w_3, s(w_1 w_2 w_3)) \ldots (w_m, s(w_1 w_2 \ldots w_m))$. 
In other words, the ancestor history $h_w$ of node $w$ specifies 
the sequence of moves that determine each ancestor of $w$ 
(starting at $\varepsilon$ and including $w$ itself), 
and also specifies the sequence of non-terminals that 
label each ancestor of $w$. 

For an OBMDP, $\mathcal{A}$, a sequence 
$h \in V(\{l,r,u\} \times V)^*$ 
is called a \textit{valid} ancestor history 
if there is some derivation $X = (B',s')$ of $\mathcal{A}$, 
and node $w \in B'$ such that $h = h_w$. 
We define the \textit{current non-terminal} of such 
a valid ancestor history $h$ to be $s'(w)$. 
In other words, it is the non-terminal that labels the last node 
of the ancestor history $h$. 
Let $\current(h)$ denote the current non-terminal of $h$. 
Let $H_{\mathcal{A}} \subseteq V (\{l,r,u\} \times V)^*$ 
denote the set of all valid ancestor histories of $\mathcal{A}$. 
A valid ancestor history $h \in H_{\mathcal{A}}$ is said to 
\textit{belong to the controller}, 
if $\current(h)$ is a \textsf{M}-form (controlled) non-terminal. 
Let $H^C_{\mathcal{A}}$ denote the set of all 
valid ancestor histories of the OBMDP, $\mathcal{A}$, 
that belong to the controller.

For an OBMDP, $\mathcal{A}$, 
a \textit{strategy} for the controller is a function, 
$\sigma : H^C_{\mathcal{A}} \rightarrow \Delta(\Sigma)$ 
from the set of valid ancestor histories belonging to the controller, 
to probability distributions on actions, such that moreover 
for any $h \in H^{C}_{\mathcal{A}}$, 
if $\current(h) = T_i$, then $\sigma(h) \in \Delta(\Gamma^i)$. 
(In other words, the probability distribution 
must have support only on the actions 
available at the current non-terminal.) 
Note that the strategy can choose different distributions 
on actions at different occurrences of the same non-terminal 
in the derivation tree, even when these occurrences 
happen to be ``siblings'' in the tree.

Let $\Psi$ be the set of all strategies. 
We say $\sigma \in \Psi$ is \textit{deterministic} if 
for all $h \in H^{C}_{\mathcal{A}}$, 
$\sigma(h)$ puts probability $1$ on a single action. 
We say $\sigma \in \Psi$ is \textit{static} if for each 
\textsf{M}-form (controlled) non-terminal $T_i$, 
there is some distribution $\delta_i \in \Delta(\Gamma^i)$, 
such that for any $h \in H^C_{\mathcal{A}}$ with $\current(h) = T_i$, 
$\sigma(h) = \delta_i$. 
In other words, a static strategy $\sigma$ plays, for each 
\textsf{M}-form non-terminal $T_i$, exactly 
the same distribution on actions at every occurrence of $T_i$, 
regardless of the ancestor history.

For an OBMDP, $\mathcal{A}$, fixing a start non-terminal $T_{i}$, 
and fixing a strategy $\sigma$ for the controller, 
determines a stochastic process that generates a random play, 
as follows. The process generates a sequence of finite derivations, 
$X_0$, $X_1$, $X_2$, $X_3$, $\ldots$, one for each ``generation'', 
such that for all $t \in \nat$,  $X_t \preceq X_{t+1}$. 
$X_0 = (B_0,s_0)$ is the initial derivation, at generation $0$, 
and consists of a single (root) node $B_0 = \{ \varepsilon \}$, 
labeled by the start non-terminal, 
$s_0(\varepsilon) = T_i$.\footnote{We can assume, without 
loss of generality, that the initial derivation consists 
of a single given root, because for any given collection 
$\mu \in V^*$ of multiple roots, we can always add 
an auxiliary non-terminal $T_f$ to the set $V$, 
where $\Gamma^f = \{a\}$ and the set $R(T_f, a)$ contains 
a single probabilistic rule, $T_f \xrightarrow{1} \mu$.} 
Inductively, for all $t \in \nat$ the derivation 
$X_{t+1} = (B_{t+1}, s_{t+1})$ is obtained from $X_t = (B_t, s_t)$ 
as follows. For each leaf $w \in \mathcal{L}_{B_t}$:
\begin{itemize}
    \setlength{\topsep}{0em}
    \item if $s_t(w) = T_i$ is a \textsf{Q}-form (branching) 
non-terminal, whose associated unique rule is 
$T_i \xrightarrow{1} T_j \; T_{j'}$, 
then $w$ must have exactly two children in $B_{t+1}$, namely 
$wl \in B_{t+1}$ and $wr \in B_{t+1}$, and moreover 
we must have $s_{t+1}(wl) = T_j$ and $s_{t+1}(wr) = T_{j'}$.

    \item if $s_t(w) = T_i$ is a \textsf{L}-form (probabilistic) 
non-terminal, then $w$ has exactly one child in $B_{t+1}$, namely $wu$, 
and for each rule $T_i \xrightarrow{p_{i, j}} T_j$ 
with  $p_{i, j} > 0$,  the probability that 
$s_{t+1}(wu) = T_j$ is $p_{i, j}$, and likewise 
when  $T_i \xrightarrow{p_{i, 0}} \varnothing$ 
is a rule with $p_{i, 0} > 0$, then 
$s_{t+1}(wu) = \varnothing$ with probability $p_{i, 0}$.

    \item if $s_t(w) = T_i$ is a \textsf{M}-form (controlled) 
non-terminal, then $w$ has exactly one child in $B_{t+1}$, 
namely $wu$, and for each action $a_z \in \Gamma^i$, 
with probability $\sigma(h_w)(a_z)$,  $s_{t+1}(wu) = T_{j_z}$, 
where $T_i \xrightarrow{a_z} T_{j_z}$ is the rule associated 
with $a_z$.
\end{itemize}

There are no other nodes in $B_{t+1}$. In particular, 
if $s_t(w) = \varnothing$, then in $B_{t+1}$ the node $w$ 
has no children. This defines a stochastic process, 
$X_0, X_1, X_2, \ldots$,  where $X_{t} \preceq X_{t+1}$, 
for all $t \in \nat$, and such that there is a unique play, 
$X = \lim_{t \rightarrow \infty} X_t$, such that 
$X_t \preceq X$ for all $t \in \nat$. In this sense, 
the random process defines a probability space of plays.

For our purposes, an \textit{objective} is specified 
by a property (i.e., a measurable set), $\mathcal{F}$, 
of plays, whose probability the player wishes to optimize 
(maximize or minimize). Different objectives can be 
considered for OBMDPs (and for their game extensions). 
In the \textit{termination} objective, 
the player aims to maximize/minimize the probability that 
the process terminates, i.e., that the play is a finite 
tree. This was studied in \cite{rmc} for 
purely stochastic OBPs, and in \cite{rmdp} 
(and \cite{rcsg2008}) for their MDP and (concurrent) 
stochastic game generalizations. 
Another objective is (single-target) \textit{reachability}, 
where the goal is to optimize (maximize or minimize) 
the probability of the play containing a given target 
non-terminal, starting at a given non-terminal. 
This objective was studied 
in \cite{ESY-icalp15-IC} (and \cite{EMSY-icalp19-bcsg}) 
for OBMDPs and their (concurrent) stochastic game generalizations.\footnote{The 
models analysed in \cite{ESY-icalp15-IC} and 
\cite{EMSY-icalp19-bcsg} are game generalizations 
of Branching Processes, but for the case of a single target 
computing reachability probabilities in Branching Processes 
is equivalent to computing reachability probabilities in 
Ordered Branching Processes (same holds for the MDP and 
game generalizations of these models).}

This paper considers the 
\textit{multi-objective reachability problem}, which is a 
natural extension of the previously studied 
(single-target) reachability problem. 
In the multi-objective setting we have multiple 
target non-terminals, and we want to optimize each of 
the respective probabilities of achieving multiple given 
objectives, each one being a boolean combination of 
reachability and non-reachability properties over 
different target non-terminals. 
Of course, there may be tradeoffs between 
optimizing the probabilities of achieving 
the different objectives.

To formalize things, we need some notation. Given 
a target non-terminal $T_q$, $q \in [n]$, let $Reach(T_q)$ 
denote the set of plays that contain some copy 
(some node) of non-terminal $T_q$. 
Respectively, let $Reach^{\complement}(T_q)$  denote the 
complement event, i.e., the set of plays that do 
\textit{not} contain a node labelled by non-terminal $T_q$. 
For any measurable set (i.e., property) of plays, 
$\mathcal{F}$, and for any strategy $\sigma$ for the player 
and a given start non-terminal $T_i$, 
we denote by $Pr_{T_i}^{\sigma}[\mathcal{F}]$ 
the probability that, starting at a non-terminal $T_i$ and 
under strategy $\sigma$, the generated play is in 
the set $\mathcal{F}$. 
Let $Pr_{T_i}^*[\mathcal{F}] := \sup_{\sigma \in \Psi} 
Pr_{T_i}^{\sigma}[\mathcal{F}]$.

The \textit{quantitative} multi-objective decision problem 
for OBMDPs is the following problem. We are given an OBMDP, 
a starting non-terminal $T_s \in V$, 
a collection of objectives (properties) 
$\mathcal{F}_1, \ldots, \mathcal{F}_k$ 
and corresponding probabilities $p_1, \ldots, p_k$. 
The problem asks to decide whether there exists a strategy 
$\sigma' \in \Psi$ such that $\bigwedge_{i \in [k]} 
Pr_{T_s}^{\sigma'}[\mathcal{F}_i] \triangle_i p_i$ holds, 
where $\triangle_i \in \{<, \le, =, \ge, >\}$. Observe that 
terms (i.e., probability queries 
$Pr_{T_s}^{\sigma'}[\mathcal{F}_i] \triangle_i p_i$, 
for any $i \in [k]$) with $\triangle_i := \; \le$ and 
$\triangle_i := \; \ge$ inequalities can be converted to asking 
whether either $Pr_{T_s}^{\sigma'}[\mathcal{F}_i] = p_i$, 
or $Pr_{T_s}^{\sigma'}[\mathcal{F}_i] < p_i$ 
(respectively, $Pr_{T_s}^{\sigma'}[\mathcal{F}_i] > p_i$). 
Moreover, we could in general allow for any boolean combination 
of terms (not just a conjunction). In any case, 
the whole query can be put into disjunctive normal form and 
the quantification over strategies can be 
pushed inside the disjunction. So any multi-objective question 
can eventually be transformed into a disjunction of finite 
number of (smaller) queries. (Note that, 
of course, this number can be exponential in the size of the 
original multi-objective question.) Hence, we can define a 
multi-objective decision problem only as a conjunction 
of equality and strict inequality queries.

One could also ask the \textit{limit} 
version of this question. For instance, 
whether for all $\epsilon > 0$, there exists 
a strategy $\sigma'_\epsilon$, such that  
$\bigwedge_{i \in [k]} Pr_{T_s}^{\sigma'_\epsilon} 
[\mathcal{F}_i] \ge p_i - \epsilon$. 
Moreover, we can also ask quantitative 
questions regarding computing (or approximating) 
the {\em Pareto curve} for the multiple objectives, but we will
not consider such questions in this paper.

The qualitative \textit{almost-sure} multi-objective decision 
problems for OBMDPs are the special case where $p_i = \{0, 1\}$ 
for each $i \in [k]$. In other words, these problems are 
phrased as asking whether, starting at a given non-terminal 
$T_s \in V$, there exists a strategy $\sigma \in \Psi$ 
such that $\bigwedge_{i \in [k]} Pr_{T_s}^{\sigma}[\mathcal{F}_i] 
\triangle_i \{0, 1\}$ (where as mentioned $\triangle_i \in 
\{<, =, >\}$). We can simplify the expression by transforming 
each clause of the form $Pr_{T_s}^{\sigma}[\mathcal{F}_i] > 0$ 
and $Pr_{T_s}^{\sigma}[\mathcal{F}_i] = 0$ into 
$Pr_{T_s}^{\sigma}[\mathcal{F}_i^{\complement}] < 1$ and 
$Pr_{T_s}^{\sigma}[\mathcal{F}_i^{\complement}] = 1$, 
respectively, where each $\mathcal{F}_i^{\complement}$ 
is the complement objective of $\mathcal{F}_i$.

Then, for a strategy $\sigma \in \Psi$ and a starting 
non-terminal $T_s \in V$, the expression can be rephrased as: 
$\bigwedge_{i \in [k_1]} Pr_{T_s}^{\sigma}[\mathcal{F}_i] < 1 
\wedge \bigwedge_{i \in [k_2]} 
Pr_{T_s}^{\sigma}[\mathcal{F}_i] = 1$, where 
$k_1 + k_2 = k$. And by Proposition \ref{prop:equiv}(1.) below, 
the qualitative (almost-sure) multi-objective decision problem 
reduces to asking whether there exists a strategy 
$\sigma' \in \Psi$ such that 
$\bigwedge_{i \in [k_1]} Pr_{T_s}^{\sigma'}[\mathcal{F}_i] < 1 
\wedge Pr_{T_s}^{\sigma'}[\bigcap_{i \in [k_2]} \mathcal{F}_i] = 1$.

The qualitative \textit{limit-sure} multi-objective decision 
problem for OBMDPs asks to decide whether, 
for every $\epsilon > 0$, there exists a strategy 
$\sigma'_{\epsilon} \in \Psi$ such that 
$\bigwedge_{i \in [k]} Pr_{T_s}^{\sigma'_{\epsilon}} 
[\mathcal{F}_i] \ge 1 - \epsilon$. Again by Proposition 
\ref{prop:equiv}(5.) below, it follows that the qualitative 
limit-sure multi-objective decision problem can be rephrased 
as asking whether, for all $\epsilon > 0$, there exists 
a strategy $\sigma'_{\epsilon} \in \Psi$ such that 
$Pr_{T_s}^{\sigma'_{\epsilon}}[\bigcap_{i \in [k]} \mathcal{F}_i] \ge 1 - \epsilon$.

The following proposition shows scenarios where the qualitative 
multi-objective problem for OBMDPs can be rephrased as a 
qualitative single-objective problem, but with multiple targets!

\begin{proposition}
    Given an OBMDP, with a starting non-terminal 
$T_s \in V$ and a collection 
$\mathcal{F}_1, \ldots, \mathcal{F}_{k}$ of $k$ objectives:
	\begin{enumerate}[label=(\arabic*.)]
        \item $\exists \sigma' \in \Psi:\; \bigwedge_{i \in [k]} Pr_{T_s}^{\sigma'}[\mathcal{F}_i] = 1$ \ \ {\rm if and only if} \ \  $\exists \sigma' \in \Psi:\; Pr_{T_s}^{\sigma'}[\bigcap_{i \in [k]} \mathcal{F}_i] = 1$.
		
		\item $\exists \sigma' \in \Psi:\; \bigvee_{i \in [k]} Pr_{T_s}^{\sigma'}[\mathcal{F}_i] < 1$ \ \ {\rm if and only if} \ \  $\exists \sigma' \in \Psi:\; Pr_{T_s}^{\sigma'}[\bigcap_{i \in [k]} \mathcal{F}_i] < 1$.
		
		\item $\exists \sigma' \in \Psi:\; \bigwedge_{i \in [k]} Pr_{T_s}^{\sigma'}[\mathcal{F}_i] = 0$ \ \ {\rm if and only if} \ \ $\exists \sigma' \in \Psi:\; Pr_{T_s}^{\sigma'}[\bigcup_{i \in [k]} \mathcal{F}_i] = 0$.
		
		\item $\exists \sigma' \in \Psi:\; \bigvee_{i \in [k]} Pr_{T_s}^{\sigma'}[\mathcal{F}_i] > 0$ \ \ {\rm if and only if} \ \ $\exists \sigma' \in \Psi:\; Pr_{T_s}^{\sigma'}[\bigcup_{i \in [k]} \mathcal{F}_i] > 0$.

Moreover, in each of the equivalence statements (1.) - (4.), 
a witness strategy $\sigma'$ for one of the sides 
is also a witness strategy for the other.
		
		\item Similar equivalence holds for the qualitative limit-sure multi-objective problem: $\forall \epsilon > 0, \exists \sigma'_{\epsilon} \in \Psi:\; \bigwedge_{i \in [k]} Pr_{T_s}^{\sigma'_{\epsilon}}[\mathcal{F}_i] \ge  1 - \epsilon$ \  \ {\rm if and only if} \ \  $\forall \epsilon > 0, \exists \sigma'_{\epsilon} \in \Psi:\; Pr_{T_s}^{\sigma'_{\epsilon}}[\bigcap_{i \in [k]} \mathcal{F}_i] \ge 1 - \epsilon$.

And from a witness strategy $\sigma'_{\epsilon}$ 
(for $\epsilon > 0$) for one of the two sides 
a witness strategy $\sigma''_{\epsilon'}$ 
(for potentially different $\epsilon' > 0$) 
can be obtained for the other.
	\end{enumerate}
	\label{prop:equiv}
\end{proposition}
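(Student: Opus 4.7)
The plan is to prove all five equivalences by combining two elementary facts about any probability measure $\Pr[\cdot]$: monotonicity ($B \subseteq A \Rightarrow \Pr[B] \le \Pr[A]$) and the finite union bound ($\Pr[\bigcup_i E_i] \le \sum_i \Pr[E_i]$). Both facts hold for the measure $\Pr_{T_s}^{\sigma'}[\cdot]$ induced by any fixed strategy $\sigma'$ on the probability space of plays, so in every case the \emph{same} strategy $\sigma'$ will serve as a witness for both sides (except possibly in (5.), where a different value of $\epsilon$ is needed). This is why the proposition asserts that witnesses transfer directly.

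For (1.), the direction ($\Leftarrow$) is immediate from monotonicity since $\bigcap_i \mathcal{F}_i \subseteq \mathcal{F}_j$ for each $j$. For ($\Rightarrow$), I would pass to complements: $\Pr_{T_s}^{\sigma'}[\mathcal{F}_i]=1$ means $\Pr_{T_s}^{\sigma'}[\mathcal{F}_i^{\complement}] = 0$, and the union bound gives $\Pr_{T_s}^{\sigma'}[\bigcup_i \mathcal{F}_i^{\complement}] = 0$, i.e.\ $\Pr_{T_s}^{\sigma'}[\bigcap_i \mathcal{F}_i] = 1$. For (2.), the direction ($\Rightarrow$) is again monotonicity: if $\Pr_{T_s}^{\sigma'}[\mathcal{F}_j] < 1$ for some $j$, then $\Pr_{T_s}^{\sigma'}[\bigcap_i \mathcal{F}_i] \le \Pr_{T_s}^{\sigma'}[\mathcal{F}_j] < 1$. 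For ($\Leftarrow$), if $\Pr_{T_s}^{\sigma'}[\bigcap_i \mathcal{F}_i] < 1$ then $\Pr_{T_s}^{\sigma'}[\bigcup_i \mathcal{F}_i^{\complement}] > 0$, so the union bound forces $\Pr_{T_s}^{\sigma'}[\mathcal{F}_j^{\complement}] > 0$ for some $j$, giving $\Pr_{T_s}^{\sigma'}[\mathcal{F}_j] < 1$. Parts (3.) and (4.) are entirely dual to (1.) and (2.): monotonicity of $\Pr$ on subsets handles one direction, the union bound (applied directly to $\mathcal{F}_i$ rather than to the complements $\mathcal{F}_i^{\complement}$) handles the other.

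For the limit-sure equivalence (5.), the direction ($\Leftarrow$) is again monotonicity applied uniformly in $\epsilon$: given $\sigma'_{\epsilon}$ with $\Pr_{T_s}^{\sigma'_{\epsilon}}[\bigcap_i \mathcal{F}_i] \ge 1-\epsilon$, monotonicity yields $\Pr_{T_s}^{\sigma'_{\epsilon}}[\mathcal{F}_j] \ge 1-\epsilon$ for every $j$. For ($\Rightarrow$), given a desired tolerance $\epsilon' > 0$, I would apply the hypothesis with the rescaled tolerance $\epsilon := \epsilon'/k$, obtaining some $\sigma'_{\epsilon'/k}$ with $\Pr_{T_s}^{\sigma'_{\epsilon'/k}}[\mathcal{F}_i^{\complement}] \le \epsilon'/k$ for every $i$. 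The union bound then gives $\Pr_{T_s}^{\sigma'_{\epsilon'/k}}[\bigcup_i \mathcal{F}_i^{\complement}] \le k \cdot (\epsilon'/k) = \epsilon'$, hence $\Pr_{T_s}^{\sigma'_{\epsilon'/k}}[\bigcap_i \mathcal{F}_i] \ge 1 - \epsilon'$. Setting $\sigma''_{\epsilon'} := \sigma'_{\epsilon'/k}$ completes the proof.

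There is really no main obstacle: the claim is a general probabilistic fact that has nothing to do with the OBMDP structure, and uses no property of strategies beyond the fact that each strategy induces a well-defined probability measure on the space of plays. The only subtlety worth flagging is the factor $1/k$ rescaling of $\epsilon$ in part (5.), which is precisely why the proposition takes care to say ``for potentially different $\epsilon' > 0$'' when transferring witnesses in the limit-sure direction.
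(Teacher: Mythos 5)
Your proposal is correct and follows essentially the same route as the paper: both directions of each equivalence are obtained from monotonicity of the measure and the finite union bound applied to the complements, with the same strategy serving as witness on both sides, and with the same $\epsilon/k$ rescaling in part (5.). The only cosmetic difference is that you handle (3.) and (4.) by direct duality rather than by formally reducing them to (1.) and (2.), which is immaterial.
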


\begin{proof}
$ $ \newline
(1.). For one direction of the statement, suppose there is 
a strategy $\sigma' \in \Psi$ for the player such that 
$Pr_{T_s}^{\sigma'}[\bigcap_{i \in [k]} \mathcal{F}_i] = 1$, 
i.e., almost-surely all objectives are satisfied in 
the same generated play. It follows that 
$Pr_{T_s}^{\sigma'}[\bigcup_{i \in [k]} \mathcal{F}_i^{\complement}] = 0$. 
Clearly, for each $i \in [k]$, 
$Pr_{T_s}^{\sigma'}[\mathcal{F}_i^{\complement}] = 0$ and hence, 
for each $i \in [k]:\; Pr_{T_s}^{\sigma'}[\mathcal{F}_i] = 1$.

Showing the other direction, suppose that there exists 
a strategy $\sigma' \in \Psi$ for the player such that 
$\bigwedge_{i \in [k]} Pr_{T_s}^{\sigma'}[\mathcal{F}_i] = 1$. 
Then, $\forall i \in [k]$, 
$Pr_{T_s}^{\sigma'}[\mathcal{F}_i^{\complement}] = 0$. 
By the union bound, $Pr_{T_s}^{\sigma'}[\bigcup_{i \in [k]} \mathcal{F}_i^{\complement}] = 0$ and, hence, $Pr_{T_s}^{\sigma'}[\bigcap_{i \in [k]} \mathcal{F}_i] = 1$.

\medskip \medskip

\noindent (2.). For one direction of the statement, suppose 
there is a strategy $\sigma' \in \Psi$ such that 
$Pr_{T_s}^{\sigma'}[\bigcap_{i \in [k]} \mathcal{F}_i] < 1$. 
Then $Pr_{T_s}^{\sigma'}[\bigcup_{i \in [k]} \mathcal{F}_i^{\complement}] > 0$. Clearly, $\exists i' \in [k]$ such that 
$Pr_{T_s}^{\sigma'}[\mathcal{F}_{i'}^{\complement}] > 0$ 
(otherwise, by the union bound the probability of 
the union of the events is $0$). 
Hence, $\bigvee_{i \in [k]} Pr_{T_s}^{\sigma'}[\mathcal{F}_i] < 1$.

As for the other direction, suppose there is a strategy 
$\sigma' \in \Psi$ and some $i' \in [k]$ such that 
$Pr_{T_s}^{\sigma'}[\mathcal{F}_{i'}] < 1$. 
Then $Pr_{T_s}^{\sigma'}[\bigcap_{i \in [k]} \mathcal{F}_i] \le Pr_{T_s}^{\sigma'}[\mathcal{F}_{i'}] < 1$.

\medskip \medskip
\noindent (3.) and (4.) follow directly from (1.) and (2.), 
respectively.

\medskip \medskip

\noindent (5.). For one direction of the statement, suppose 
that for every $\epsilon > 0$ there is a strategy 
$\sigma'_{\epsilon} \in \Psi$ such that 
$Pr_{T_s}^{\sigma'_{\epsilon}}[\bigcap_{i \in [k]} \mathcal{F}_i] \ge  1 - \epsilon$, i.e., 
limit-surely (with probability arbitrarily close to $1$) 
all objectives are satisfied in the same generated play. 
It follows that $Pr_{T_s}^{\sigma'_{\epsilon}}[\bigcup_{i \in [k]} \mathcal{F}_i^{\complement}] \le \epsilon$. 
Clearly, for each $i \in [k]$, 
$Pr_{T_s}^{\sigma'_{\epsilon}}[\mathcal{F}_i^{\complement}] \le \epsilon$, and hence, for each $i \in [k]:\; Pr_{T_s}^{\sigma'_{\epsilon}}[\mathcal{F}_i] \ge 1 - \epsilon$.

\medskip
Showing the other direction, suppose that for every 
$\epsilon > 0$ there exists a strategy 
$\sigma'_{\epsilon} \in \Psi$ such that 
$\bigwedge_{i \in [k]} Pr_{T_s}^{\sigma'_{\epsilon}} [\mathcal{F}_i] \ge 1 - \epsilon$. 
Then, for every $i \in [k]$, 
$Pr_{T_s}^{\sigma'_{\epsilon}}[\mathcal{F}_i^{\complement}] \le \epsilon$. By the union bound, 
$Pr_{T_s}^{\sigma'_{\epsilon}}[\bigcup_{i \in [k]} \mathcal{F}_i^{\complement}] \le k \epsilon$, and hence, 
$Pr_{T_s}^{\sigma'_{\epsilon}}[\bigcap_{i \in [k]} \mathcal{F}_i] \ge 1 - k \epsilon$. 
So for any $\epsilon > 0$, let $\epsilon' := \epsilon / k$ 
and $\sigma_{\epsilon} := \sigma'_{\epsilon'}$, 
where $\sigma'_{\epsilon'}$ satisfies 
$\bigwedge_{i \in [k]} Pr_{T_s}^{\sigma'_{\epsilon'}} [\mathcal{F}_i] \ge 1 - \epsilon' = 1 - \epsilon / k$. 
Then it follows that $Pr_{T_s}^{\sigma_{\epsilon}}[\bigcap_{i \in [k]} \mathcal{F}_i] \ge 1 - k \epsilon' = 1 - \epsilon$.
\end{proof}

In this paper, we address the qualitative 
(almost-sure and limit-sure) multi-objective 
reachability decision problems for OBMDPs. 
We are given a collection of \textit{generalized} 
reachability objectives $\mathcal{F}_1, \ldots, \mathcal{F}_k$,
where each such generalized reachability objective 
$\mathcal{F}_i, i \in [k]$ represents a set of plays 
described by a boolean combination in CNF form 
over the sets (of plays) $Reach(T_q), T_q \in V$ 
and under the operators \textit{union}, 
\textit{intersection} and \textit{complementation}. 
That is, each generalized reachability objective 
$\mathcal{F}_i, i \in [k]$ is of the form 
$\bigcap_{t \in [z_i]} (\bigcup_{t' \in [z_{i, t}]} 
\Phi(T_{q_{i, t, t'}}))$, where $\Phi \in \{Reach, Reach^{\complement}\}$, $T_{q_{i, t, t'}} \in V$ and 
the values $z_i, z_{i, t}$ are part of 
the objective $\mathcal{F}_i$.

We will show that, even in the case of having a single 
objective that asks to reach multiple target non-terminals 
from a given set in the same play, the almost-sure and 
limit-sure questions do not coincide and we give 
separate algorithms for detecting almost-sure and 
limit-sure multi-target reachability. 
(Recall from the related work section, that in the case of 
a single target the almost-sure and limit-sure questions are 
equivalent.)
The following example indeed illustrates that there are 
OBMDPs where, even though 
the supremum probability of reaching all target non-terminals 
from a given set in the same play is $1$, there may not exist 
a strategy for the player that actually achieves probability 
exactly $1$.

\medskip
\noindent \textbf{Example 1} 
Consider the following OBMDP with non-terminals 
$\{M, A, R_1, R_2\}$, where $R_1$ and $R_2$ are 
the target non-terminal. $M$ is the only ``controlled" 
non-terminal, and the rules are:
\begin{align*}
	& M \xrightarrow{a} MA && A \xrightarrow{1/2} R_1\\
	& M \xrightarrow{b} R_2 && A \xrightarrow{1/2} \varnothing
\end{align*}

The supremum probability, $Pr_M^*[Reach(R_1) \cap Reach(R_2)]$, 
starting at a non-terminal $M$, 
of reaching both targets is $1$. To see this, for 
any $\epsilon > 0$, let the strategy keep choosing 
deterministically action $a$ until 
$l := \lceil \log_2 (\frac{1}{\epsilon}) \rceil$ copies 
of non-terminal $A$ have been created, i.e., until 
the play reaches generation $l$. 
Then in the (unique) copy of non-terminal $M$ in 
generation $l$ the strategy switches 
deterministically to action $b$. The probability of 
reaching target $R_2$ is $1$. The probability of reaching 
target $R_1$ is $1 - 2^{-l} \ge 1 - \epsilon$. 
The player can delay arbitrarily long the moment 
when to switch from choosing action $a$ to choosing 
action $b$ for a non-terminal $M$.
Hence, $Pr_M^*[Reach(R_1) \cap Reach(R_2)] = 1$. 

However, $\nexists \sigma \in \Psi:\; 
Pr_M^{\sigma}[Reach(R_1) \cap Reach(R_2)] = 1$. 
To see this, note that if the strategy ever puts 
a positive probability on action $b$ in any ``round'', 
then with a positive probability target $R_1$ will not be reached 
in the play. So, to reach target $R_1$ with probability $1$, 
the strategy must deterministically choose action $a$ forever, 
from every occurrence of non-terminal $M$. But if it does this 
the probability of reaching target $R_2$ would be $0$.
\qed

\medskip
The following proposition is easy to prove 
(similar to analogous propositions in 
\cite{ESY-icalp15-IC,EMSY-icalp19-bcsg}) and shows 
that we can always efficiently convert an OBMDP into 
its SNF form (Definition \ref{def:SNF-form}).

\begin{proposition}
  Every OBMDP, $\mathcal{A}$, can be converted in P-time to an 
``equivalent'' OBMDP, $\mathcal{A'}$, in SNF form, such that 
$|\mathcal{A'}| \in O(|\mathcal{A}|)$. More precisely, 
the non-terminals $V = \{T_i \mid i \in [n]\}$ of $\mathcal{A}$ 
are a subset of the non-terminals of $\mathcal{A'}$, 
and any strategy $\sigma$ of $\mathcal{A}$ can be converted 
to a strategy $\sigma'$ of $\mathcal{A'}$ (and vice versa), 
such that starting at any non-terminal $T_s \in V$, and 
for any generalized reachability objective $\mathcal{F}$, 
using the strategies $\sigma$ and $\sigma'$ in $\mathcal{A}$ 
and $\mathcal{A'}$, respectively, the probability that the 
resulting play is in the set of plays, $\mathcal{F}$, 
is the same in both $\mathcal{A}$ and $\mathcal{A'}$.
	\label{prop:SNF-form}
\end{proposition}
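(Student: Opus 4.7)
The plan is a standard two-phase syntactic rewrite, followed by an equivalence argument showing that the auxiliary non-terminals introduced are invisible to any objective over the original set $V$. \emph{Phase 1} separates control from probability: for each $T_i \in V$ with $|\Gamma^i| \geq 2$, I rewrite $T_i$ as an M-form non-terminal whose only rules are $T_i \xrightarrow{a} T_i^{a}$ for each $a \in \Gamma^i$, where each $T_i^{a}$ is a fresh auxiliary non-terminal carrying exactly the probabilistic rules from the original $R(T_i,a)$; when $|\Gamma^i|=1$ no wrapping is needed and $T_i$ is made directly probabilistic. \emph{Phase 2} shortens right-hand sides: for each remaining probabilistic rule $T \xrightarrow{p_r} s_r$ with $m := |s_r| \geq 2$ and $s_r = T_{j_1}\cdots T_{j_m}$, I replace it with the L-form rule $T \xrightarrow{p_r} U^{r}_1$ into a fresh chain of Q-form non-terminals, adding $U^{r}_t \xrightarrow{1} T_{j_t}\, U^{r}_{t+1}$ for $1 \le t \le m-2$ and $U^{r}_{m-1} \xrightarrow{1} T_{j_{m-1}}\, T_{j_m}$. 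Rules with $|s_r| \in \{0,1\}$ already fit L-form. Both phases are clearly computable in polynomial time and introduce $O(\sum_r |s_r|)$ fresh symbols and rules, so $|\mathcal{A}'| \in O(|\mathcal{A}|)$.

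For the equivalence, I would set up a canonical projection $\pi$ from plays of $\mathcal{A}'$ back to ``conceptual'' plays of $\mathcal{A}$: every occurrence of an auxiliary $T_i^{a}$ born under an M-move at $T_i$ is contracted into a single controller step that chose action $a$, and every Q-form chain rooted at some $U^{r}_1$ is collapsed back to the ordered list of $m$ children $T_{j_1},\ldots,T_{j_m}$ of the node where rule $r$ was applied. This $\pi$ restricts to a bijection between controller-owned ancestor histories of $\mathcal{A}'$ (which necessarily end at some $T_i \in V$ with $|\Gamma^i|\ge 2$, since these are the only M-form non-terminals of $\mathcal{A}'$) and controller-owned ancestor histories of $\mathcal{A}$. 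The strategy translation is then $\sigma'(h') := \sigma(\pi(h'))$, and conversely an $\mathcal{A}$-strategy $\sigma$ lifts to $\mathcal{A}'$ via $\pi^{-1}$ on controller histories — the inverse is well-defined because the shape of every auxiliary expansion is deterministically forced by the sequence of original non-terminals and actions along the history. Because for each $T_q \in V$ the $T_q$-labeled nodes of any $X' \in \mathcal{A}'$ correspond one-to-one under $\pi$ with the $T_q$-labeled nodes of $\pi(X')$, the event $Reach(T_q)$ pulls back identically, and hence every generalized reachability objective (a boolean combination of such events) is preserved with the same probability under corresponding strategies.

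The main obstacle is really just careful bookkeeping: verifying that the bijection between controller ancestor histories respects the \emph{ordered position among siblings} that strategies are permitted to depend on. In Phase 2 the sequence of $l$/$r$ symbols along the introduced Q-form chain encodes exactly which ordinal position $t \in [m]$ a child occupies in the original RHS $s_r$, and this data is recoverable from the $\{l,r,u\}^*$ address of a node in $\mathcal{A}'$; hence $\pi$ loses no positional information that a strategy of $\mathcal{A}$ can see. The remainder of the proof is a routine induction on the generation number $t$ of the process $X_0, X_1, X_2, \ldots$, showing that under the translated strategies the distribution of $X_t$ in $\mathcal{A}$ equals the distribution of $\pi(X'_t)$ in $\mathcal{A}'$, whence equality of probabilities for every generalized reachability objective $\mathcal{F}$ follows by $\sigma$-additivity in the limit.
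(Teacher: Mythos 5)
Your construction is correct and follows essentially the same route as the paper's: wrap each action of a controlled non-terminal into a fresh probabilistic successor (the M-form separation), expand long right-hand sides into chains of fresh Q-form non-terminals, and argue equivalence via the natural correspondence between plays and controller-owned ancestor histories. The only structural difference is that the paper inserts an extra ``binary doubling'' step to collapse repeated occurrences of a non-terminal on a right-hand side before chaining --- which is not needed for correctness or for the $O(|\mathcal{A}|)$ bound under the explicit string encoding of rules --- and your equivalence argument via the projection $\pi$ is in fact more detailed than the paper's, which essentially just asserts the strategy correspondence.
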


\begin{proof}
    For a rule $T_i \xrightarrow{p_r} s_r,\; s_r \in V^*$ in 
$\mathcal{A}$ and a non-terminal $T_j$, let 
$m_{r, j} := |\{d \mid (s_r)_d = T_j,\; 1 \le d \le |s_r|\}|$ be 
the number of copies of $T_j$ in string $s_r$. 
We use the following procedure to convert, in P-time, 
any OBMDP, $\mathcal{A}$, into its SNF-form OBMDP, 
$\mathcal{A'}$.
\begin{enumerate}
	\item Initialize $\mathcal{A'}$ by adding all 
the non-terminals $T_i \in V$ from $\mathcal{A}$ and 
their corresponding action sets $\Gamma^i$.

	\item For each non-terminal $T_i$, such that 
$m_{r, i} > 1$ for some non-terminal $T_j$, 
action $a \in \Gamma^j$ and rule $r \in (T_j, a)$ 
from $\mathcal{A}$, create new non-terminals 
$T_{i_1}, \ldots, T_{i_z}$ in $\mathcal{A'}$ where 
$z = \lfloor \log_2 (\max_{r \in R} \{m_{r, i}\}) \rfloor$. 
Then add the rules $T_{i_1} \xrightarrow{1} T_i \; T_i,\; T_{i_2} \xrightarrow{1} T_{i_1} \; T_{i_1},\; \ldots,\; T_{i_z} \xrightarrow{1} T_{i_{z-1}} \; T_{i_{z-1}}$ to $\mathcal{A'}$. 
For every rule $r \in R$ in OBMDP, $\mathcal{A}$, where 
$m_{r, i} > 1$, if the binary representation of $m_{r, i}$ is 
$l_z \ldots l_2 l_1 l_0$, then we remove all copies of $T_i$ 
in string $s_r$ (i.e., the right-hand side of rule $r$) and add 
a copy of non-terminal $T_{i_t}$ to string $s_r$ if bit $l_t = 1$, 
for every $0 \le t \le [z]$. After this step, for every rule 
$r \in R$, the string $s_r$ consists of at most one copy of 
any non-terminal.
	
	\item For each non-terminal $T_i$, for each action 
$a_d \in \Gamma^i$, create a new non-terminal $T_d$ in 
$\mathcal{A'}$ and add the rule $T_i \xrightarrow{a_d} T_d$ 
to $\mathcal{A'}$.
	
	\item Next, for each such new non-terminal $T_d$ from 
point 3., for each rule $r$ from set $R(T_i, a_d)$ 
in $\mathcal{A}$: if $s_r = \varnothing$ (i.e., the set of 
offsprings under rule $r$ is empty), then add the rule 
$T_d \xrightarrow{p_r} \varnothing$ to $\mathcal{A'}$; 
if the set of offsprings consists of a single copy of some 
non-terminal $T_j$, then add the rule $T_d \xrightarrow{p_r} T_j$ 
to $\mathcal{A'}$; and if the set of offsprings is larger and 
$s_r$ does not have an associated non-terminal already, 
then create a new non-terminal $T_{d_r}$, associated with 
string $s_r$, in $\mathcal{A'}$ and add the rule 
$T_d \xrightarrow{p_r} T_{d_r}$ to $\mathcal{A'}$.
	
	\item Next, for each such new non-terminal $T_{d_r}$, 
associated with $s_r, r \in R(T_i, a_d)$, where 
$s_r$ has $m \ge 2$ non-terminals 
$T_{j_1}, \ldots, T_{j_m}$: if $m = 2$, add rule 
$T_{d_r} \xrightarrow{1} T_{j_1} \; T_{j_2}$ to $\mathcal{A'}$; 
and if $m > 2$, create $m - 2$ new non-terminals 
$T_{l_1}, \ldots, T_{l_{m-2}}$ in $\mathcal{A'}$ and 
add the rules $T_{d_r} \xrightarrow{1} T_{j_1} \; T_{l_1},\; 
T_{l_1} \xrightarrow{1} T_{j_2} \; T_{l_2},\; 
T_{l_2} \xrightarrow{1} T_{j_3} \; T_{l_3},\; \ldots,\; 
T_{l_{m-2}} \xrightarrow{1} T_{j_{m-1}} \; T_{j_m}$ 
to $\mathcal{A'}$.
\end{enumerate}
Now all non-terminals are of form \textsf{L}, \textsf{Q} or 
\textsf{M}.

The above procedure converts any OBMDP, $\mathcal{A}$, into 
one in SNF form by introducing $O(|\mathcal{A}|)$ new 
non-terminals and blowing up the size of $\mathcal{A}$ 
by a constant factor $O(1)$. 
Moreover, any strategy $\sigma$ of the original OBMDP, 
$\mathcal{A}$, can be converted to a strategy $\sigma'$ 
of the SNF-form OBMDP, $\mathcal{A'}$ (and vice versa), 
such that, under strategies $\sigma$ and $\sigma'$ 
in $\mathcal{A}$ and $\mathcal{A'}$, respectively, the 
probability that the resulting play is in the set of plays 
of a given generalized reachability objective $\mathcal{F}$ 
is the same in both $\mathcal{A}$ and $\mathcal{A'}$.
\end{proof}

From now on, throughout the rest of the paper 
\textit{we may assume, without loss of generality, 
that any OBMDP is in SNF form}. 
We shall hereafter use the notation $T_i \rightarrow T_j$ 
(respectively, $T_i \not\rightarrow T_j$), 
to denote that for non-terminal $T_i$ there {\em exists} 
(respectively, there does {\em not} exist) 
either an associated (controlled) rule $T_i \xrightarrow{a} T_j$, 
where $a \in \Gamma^i$, or an associated probabilistic rule 
$T_i \xrightarrow{p_{i,j}} T_j$ with a positive probability 
$p_{i, j} > 0$. Similarly, let $T_i \rightarrow \varnothing$ 
(respectively, $T_i  \not\rightarrow \varnothing$) denote 
that the rule $T_i \xrightarrow{p_{i, 0}} \varnothing$ has 
a positive probability $p_{i, 0} > 0$ (respectively, 
has probability $p_{i, 0} = 0$).

\begin{definition}
    The \textbf{dependency graph} of a SNF-form OBMDP, 
$\mathcal{A}$, is a directed graph that has a node $T_i$ 
for each non-terminal $T_i$, and contains an edge $(T_i, T_j)$ 
if and only if: either $T_i \rightarrow T_j$ or there is 
a rule $T_i \xrightarrow{1} T_j \; T_r$ or a rule 
$T_i \xrightarrow{1} T_r \; T_j$ in $\mathcal{A}$.
	\label{def:dependency_graph}
\end{definition}
Throughout this paper, for (SNF-form) OBMDP, $\mathcal{A}$, 
with non-terminals set $V$, we let $G = (U, E)$, 
with $U = V$, denote the dependency graph of $\mathcal{A}$ 
and let $G[C]$ denote the subgraph of $G$ induced by 
the subset $C \subseteq U$ of nodes (non-terminals).

Sometimes when the specific OBMDP, $\mathcal{A}$, is not clear 
from the context, we use $\mathcal{A}$ as 
superscript to specify the OBMDP in our notations. 
So, for instance, $\Psi^{\mathcal{A}}$ is the set of all 
strategies for $\mathcal{A}$; $G^{\mathcal{A}}$ is the 
dependency graph of $\mathcal{A}$; and 
$Pr_{T_i}^{\sigma, \mathcal{A}}[\mathcal{F}]$ is 
the probability of event $\mathcal{F}$, 
starting at a non-terminal $T_i$, under strategy $\sigma$, 
in $\mathcal{A}$.

We also extend the notation regarding probabilities 
of properties to ``start'' at a given ancestor history. 
That is, for an ancestor history $h$, we use 
$Pr_h^{\sigma, \mathcal{A}}[\mathcal{F}]$ to denote 
the conditional probability that, using 
$\sigma \in \Psi^{\mathcal{A}}$, conditioned on the event 
that there is a node in the play whose ancestor history is $h$, 
the \textit{subplay} rooted at $\current(h)$, is in the set 
$\mathcal{F}$. Whenever we use the notation 
$Pr_h^{\sigma, \mathcal{A}}[\mathcal{F}]$, the underlying 
conditional probability will be well defined. 
Again, the superscript $\mathcal{A}$ will be omitted when 
clear from context.

Note that one ancestor history $h$ can be a prefix 
of another ancestor history. 
We use the notation $h' := h(x,T_i)$, for some 
$x \in \{l,r,u\}$, to denote that $h$ is 
the immediately prior ancestor history to $h'$, 
which is obtained by concatenating the pair $(x,T_i)$ 
at the end of $h$.

\begin{definition}
    For a directed graph $G = (U, E)$, and a partition of its 
vertices $U = (U_1, U_P)$, an \textbf{end-component} is 
a set of vertices $C \subseteq U$ such that $G[C]$: 
(1) is strongly connected; (2) for all $u \in U_P \cap C$ 
and all $(u,u') \in E$, $u' \in C$; 
(3) and if $C = \{u\}$ (i.e., $|C| = 1$), then $(u, u) \in E$. 
A \textbf{maximal end-component (MEC)} is an end-component 
not contained in any larger end-component. 
A \textbf{MEC-decomposition} is a partition of the graph 
into MECs and nodes that do not belong to any MEC.
    \label{def:MEC}
\end{definition}

MECs are disjoint and the unique MEC-decomposition of such 
a directed graph $G$ (with partitioned nodes) can be computed 
in P-time (\cite{CY98}).\footnote{In \cite{CY98}, maximal 
end-components are referred to as \textit{closed components}.} 
More recent work provides more efficient algorithms for 
MEC-decomposition (see \cite{Chat-Hen14}). 
We will also be using the notion of a strongly 
connected component (SCC), which can be defined as a MEC 
where condition (2) from Definition \ref{def:MEC} above is not 
required. It is also well-known that an SCC-decomposition of 
a directed graph can be done in linear time.

For our setting here, given a SNF-form OBMDP with 
its dependency graph $G = (U, E), U = V$, the partition of $U$ 
that we will use is the following: 
$U_P := \{T_i \in U \mid T_i$ is of \textsf{L}-form$\}$ and 
$U_1 := \{T_i \in U \mid T_i$ is of \textsf{M}-form or 
\textsf{Q}-form$\}$.

Before we continue with the algorithms, let us observe 
that the qualitative multi-target reachability problems are 
in general NP-hard (coNP-hard), if the size of the set $K$ of 
target non-terminals is not bounded by a fixed constant.

\begin{proposition}
$ $
	\begin{enumerate}[label=(\arabic*.)]
		\item The following two problems are 
both NP-hard: given an OBMDP, a set $K \subseteq [n]$ of 
target non-terminals and a starting non-terminal $T_i \in V$, 
decide whether: (i) 
$\exists \sigma \in \Psi: Pr_{T_i}^\sigma[\bigcap_{q \in K} 
Reach(T_q)] = 1$, and (ii) 
$Pr_{T_i}^*[\bigcap_{q \in K} Reach(T_q)] = 1$.

		\item The following problem is coNP-hard: 
given an OBP (i.e., an OBMDP with no controlled non-terminals, 
and hence with only one trivial strategy $\sigma$), 
a set $K \subseteq [n]$ of 
target non-terminals and a starting non-terminal $T_i \in V$, 
decide whether $Pr_{T_i}^\sigma[\bigcap_{q \in K} Reach(T_q)] = 0$.
	\end{enumerate}
	\label{prop:NP-hardness}
\end{proposition}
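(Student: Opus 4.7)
The plan is to give two polynomial-time reductions: from 3-SAT for part~(1.), and from UNSAT for part~(2.). For the NP-hardness of (1.), given a 3-CNF formula $\phi(x_1,\ldots,x_n)=\bigwedge_{j=1}^m C_j$ I would construct an OBMDP (in SNF) as follows. The start non-terminal $T_s$ is equipped with a deterministic chain of \textsf{Q}-form auxiliaries whose generated subtree contains, as leaves, exactly one copy of each controlled non-terminal $M_1,\ldots,M_n$. Each $M_i$ is \textsf{M}-form with two actions $t$ and $f$: action $t$ leads, via another short \textsf{Q}-form chain, to one copy of the target $T_{C_j}$ for every clause $C_j$ containing the literal $x_i$, and action $f$ does the analogous thing for clauses containing $\neg x_i$. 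Each target $T_{C_j}$ is \textsf{L}-form with the single rule $T_{C_j}\to\varnothing$, and the target set is $K=\{T_{C_j}\}_{j=1}^m$.

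The correctness argument handles items (i) and (ii) simultaneously. If $\phi$ is satisfiable by some $\vec a\in\{0,1\}^n$, then the deterministic static strategy that plays $t$ at $M_i$ iff $a_i=1$ produces a single deterministic tree that contains every clause target, so $Pr_{T_s}^{\sigma}[\bigcap_{q\in K}Reach(T_q)]=1$, and both (i) and (ii) answer ``yes''. If $\phi$ is unsatisfiable, I would argue the stronger statement that $Pr_{T_s}^{*}[\bigcap_{q\in K}Reach(T_q)]=0$, killing both (i) and (ii): since the skeleton above each $M_i$ is fully deterministic, each $M_i$ has a unique occurrence in every generated tree with a fixed ancestor history $h_{M_i}$; thus any strategy $\sigma$ is characterised by the tuple $(q_1,\ldots,q_n)\in[0,1]^n$ with $q_i:=\sigma(h_{M_i})(t)$, and the induced distribution on plays is exactly that of drawing an assignment $\vec a\in\{0,1\}^n$ with independent Bernoulli marginals $Pr[a_i=1]=q_i$. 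The event $\bigcap_{q\in K}Reach(T_q)$ occurs iff the drawn $\vec a$ satisfies every clause, so its probability equals $\sum_{\vec a\models\phi}\prod_i q_i^{a_i}(1-q_i)^{1-a_i}$, which vanishes when $\phi$ is unsatisfiable.

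For the coNP-hardness in (2.), I would adapt the same gadget into an OBP by replacing each \textsf{M}-form $M_i$ with an \textsf{L}-form non-terminal that fires $M_i\xrightarrow{1/2} R_i^+$ and $M_i\xrightarrow{1/2} R_i^-$, where $R_i^+$ (resp.\ $R_i^-$) reuses the \textsf{Q}-form chain that generates the positive-polarity (resp.\ negative-polarity) clause targets of $x_i$. The unique (trivial) strategy then induces a uniformly random assignment $\vec a\in\{0,1\}^n$, and $Pr_{T_s}^{\sigma}[\bigcap_{q\in K}Reach(T_q)]=2^{-n}\,|\{\vec a:\vec a\models\phi\}|$, which equals $0$ iff $\phi$ is unsatisfiable.

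The only subtle step is ruling out clever history-dependent randomized strategies in the unsatisfiable case of~(1.); this is handled entirely by the observation that each $M_i$ has a unique occurrence with a deterministic ancestor history, which collapses the strategy space to a product distribution over variable assignments. The gadget sizes are clearly polynomial in $|\phi|$, and the conversion to SNF, if not already transparent from the rules as written, is handled by a direct application of Proposition~\ref{prop:SNF-form}.
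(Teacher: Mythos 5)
Your reduction is correct and essentially the same as the paper's: both encode a 3-CNF formula by making each clause a target non-terminal and each variable a two-action controlled (resp.\ fair probabilistic) gadget, so that strategies correspond to distributions over truth assignments and an unsatisfiable formula forces probability $0$ for both the almost-sure and limit-sure questions. The only difference is cosmetic: you spawn the variable gadgets in parallel from a \textsf{Q}-form tree, which collapses the strategy space to a product (independent Bernoulli) distribution over assignments, whereas the paper chains the gadgets sequentially ($C_1 \to \cdots \to C_2 \to \cdots$) and simply observes that the possibly correlated distribution over assignments induced by a history-dependent strategy is equally harmless when the formula is unsatisfiable.
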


\begin{proof}
    For (1.) we reduce from 3-SAT, and for (2.) from 
the complement problem (i.e., deciding unsatisfiability 
of a 3-CNF formula). The reductions are nearly identical, 
so we describe them both together. 
Consider a 3-CNF formula over variables $\{x_1,\ldots,x_n\}$:
\begin{align*}
  \bigwedge_{q \in [m]} (l_{q,1} \vee l_{q,2} \vee l_{q,3})
\end{align*}
where every $l_{q,j}$ is either $x_r$ or $\neg x_r$ 
for some $r \in [n]$. We construct an OBMDP as follows: 
to each clause $q \in [m]$ we associate 
a target non-terminal $R_q$ with a single associated rule 
$R_q \xrightarrow{1} \varnothing$; 
for each variable $x_r, r \in [n]$, we associate two purely 
probabilistic non-terminals $T_{r_a}, T_{r_b}$, and

\begin{itemize}
    \item for (1.), a controlled non-terminal $C_r$ with rules 
$C_r \xrightarrow{a} T_{r_a}$ and $C_r \xrightarrow{b} T_{r_b}$, 
or

    \item for (2.), a probabilistic non-terminal $C_r$ with 
rules $C_r \xrightarrow{1/2} T_{r_a}$ and 
$C_r \xrightarrow{1/2} T_{r_b}$.
\end{itemize}
For each non-terminal $T_{r_a}$, $r \in [n]$ we would 
in principle like to create a single rule, 
with probability $1$, whose RHS consists of the following 
non-terminals (in any order): 
$\{R_q \mid \exists j \in \{1,2,3\} \; s.t. \; l_{q,j} = x_r\}$, 
as well as the non-terminal $C_{r+1}$ if $r < n$; 
likewise, for each non-terminal $T_{r_b}$, $r \in [n]$, 
we would like to create a single rule, with probability $1$, 
whose RHS consists of 
$\{R_q \mid \exists j \in \{1,2,3\} \; s.t. \; l_{q,j} = 
\neg x_r\}$, as well as $C_{r+1}$ if $r < n$.

However, due to the simple normal form we have adopted 
in our definition of OBMDPs, such rules need to be 
``expanded'' (as shown in Proposition \ref{prop:SNF-form}) 
into a sequence of rules whose RHS has length $\leq 2$, 
using auxiliary non-terminals. 
So, for example, instead of a single rule of the form 
$T_{1_b} \stackrel{1}{\rightarrow} R_2 R_3 R_4  C_2$, 
we will have the following rules (using auxiliary 
non-terminals $T_{1_b}^j$): 
$T_{1_b} \xrightarrow{1} T_{1_b}^1 \; C_2$, 
$T_{1_b}^1 \xrightarrow{1} T_{1_b}^2 \; R_2$, 
and $T_{1_b}^2 \xrightarrow{1} R_3 \; R_4$. 
See Figure \ref{fig:NP-hardness-construction-example} 
for an example.

\begin{figure}[!ht]
    \begin{align*}
    & C_1 \xrightarrow{a} T_{1_a} && T_{1_a}
\xrightarrow{1} R_1 \; C_2 && C_2 \xrightarrow{a} T_{2_a} &&
T_{2_a} \xrightarrow{1} T_{2_a}^1 \; C_3 && C_3 \xrightarrow{a} T_{3_a} && T_{3_a} \xrightarrow{1} R_1 \; R_3 \\
    & C_1 \xrightarrow{b} T_{1_b} && T_{1_b} \xrightarrow{1}
T_{1_b}^1 \; C_2 && C_2 \xrightarrow{b} T_{2_b} && T_{2_a}^1
\xrightarrow{1} R_2 \; R_3 && C_3 \xrightarrow{b} T_{3_b} &&
T_{3_b} \xrightarrow{1} R_2 \; R_4 \\
    & && T_{1_b}^1 \xrightarrow{1} T_{1_b}^2 \; R_2 && && T_{2_b}
\xrightarrow{1} T_{2_b}^1 \; C_3 \\
    & && T_{1_b}^2 \xrightarrow{1} R_3 \; R_4 && && T_{2_b}^1
\xrightarrow{1} R_1 \; R_4
    \end{align*}
    \caption{Reduction example: an OBMDP obtained from 
the 3-SAT formula $(x_1 \vee \neg x_2 \vee x_3) \wedge 
(\neg x_1 \vee x_2 \vee \neg x_3) \wedge 
(\neg x_1 \vee x_2 \vee x_3) \wedge 
(\neg x_1 \vee \neg x_2 \vee \neg x_3)$. 
This construction is for problem (1.); 
the construction for problem (2.) is very similar, with 
the controlled non-terminals $C_r, r \in [n]$ changed 
to purely probabilistic non-terminals instead 
(with $1/2$ probability on each of their two rules).}
    \label{fig:NP-hardness-construction-example}
\end{figure}

This reduction closely resembles a well-known 
reduction (\cite[Theorem 3.5]{siscla85}) 
for NP-hardness of model checking  eventuality formulas 
in linear temporal logic. 
The immediate children of the branching non-terminals 
$T_{r_a}$ and $T_{r_b}$ keep track of which clauses 
are satisfied under each of the two truth assignments 
to the variable $x_r$ (`true' corresponds to $T_{r_a}$, 
and `false' corresponds to $T_{r_b}$). 
In fact, for the OBMDP obtained for problem (1.), 
there is a one-to-one correspondence between 
truth assignments to all variables of the formula and 
deterministic static strategies.

It follows that, for the OBMDP in statement (1.), 
if there exists a satisfying truth assignment 
for the formula, then starting at non-terminal $C_1$, 
there exists a (deterministic and static) strategy 
$\sigma'$ for the player such that 
$Pr_{C_1}^{\sigma'}[\bigcap_{q \in [m]} Reach(R_q)] = 1$.

Otherwise, if the formula is unsatisfiable, 
then we claim that $\forall \sigma \in \Psi:\; 
Pr_{C_1}^{\sigma}[\bigcap_{q \in [m]} Reach(R_q)] = 0$. 
(And hence, that $Pr_{C_1}^{*}[\bigcap_{q \in [m]} Reach(R_q)] 
= 0 < 1$.) To see this, note that an arbitrary 
(possibly randomized, and not necessarily static) strategy 
in the constructed OBMDP corresponds to a 
(possibly correlated) probability distribution on 
assignments of truth values to the variables in 
the corresponding formula. (The distribution may be 
correlated, because the strategy may be non-static, 
but this doesn't matter.) So if the formula is unsatisfiable, 
then under any strategy for the player (i.e., 
any probability distribution on assignments of truth values), 
there is probability $0$ that the generated play tree 
contains all target non-terminals (respectively, 
that the random truth assignment satisfies all clauses 
in the formula).

For the OBP obtained for problem (2.), it follows from 
the same arguments that the formula is unsatisfiable 
if and only if 
$Pr_{C_1}^{\sigma}[\bigcap_{q \in [m]} Reach(R_q)] = 0$ 
(where $\sigma$ is just the trivial strategy, since 
there are no controlled non-terminals in 
the OBP obtained for (2.)).
\end{proof}

\section{Algorithm for deciding 
$\max_{\sigma} Pr_{T_i}^{\sigma}[\bigcap_{q \in K} Reach(T_q)] \stackrel{?}{=} 0$}

In this section we present an algorithm that, given an OBMDP 
and a set $K \subseteq [n]$ of $k = |K|$ target non-terminals, 
computes, for every subset of target non-terminals 
$K' \subseteq K$, the set $Z_{K'} \subseteq V$ of non-terminals 
such that, starting at a non-terminal $T_i \in Z_{K'}$, 
using any strategy $\sigma$, the probability 
that the generated play contains a copy of every non-terminal 
in $K'$ is $0$. In other words, the algorithm 
(Figure \ref{fig:QualMoR-NonReach}) computes, 
$\forall K' \subseteq K$, the set $Z_{K'} := \{T_i \in V \mid 
\forall \sigma \in \Psi: \; Pr_{T_i}^{\sigma}[\bigcap_{q \in K'} 
Reach(T_q)] = 0\}$. The algorithm uses as a preprocessing step 
an algorithm from \cite[Proposition 4.1]{ESY-icalp15-IC}. 
Namely, let us denote by $W_q$ the set $\{T_q\} \cup 
\{T_i \in V \mid \exists \sigma \in \Psi:\; 
Pr_{T_i}^{\sigma}[Reach(T_q)] > 0\}$. We can compute, 
for each $q \in K$, the set $W_q$ in P-time using the algorithm 
from \cite[Proposition 4.1]{ESY-icalp15-IC}, 
together with a single deterministic static witness strategy 
for every non-terminal in $W_q$. 
Let $K'_{-i}$ denote the set $K' - \{i\}$.

\begin{figure}[ht!]
	\begin{enumerate}[label=\Roman*.]
		\item Initialize $\bar{Z}_{\{q\}} := W_q$, 
		for each $q \in K$. Let $\bar{Z}_{\emptyset} := V$.
		
		\item For $l = 2 \ldots k$: \\
				\hspace*{0.1in} For every subset of 
				target non-terminals $K' \subseteq K$ of 
				size $|K'| = l$:
		\begin{enumerate}[label=\arabic*.]
			\item Initialize $\bar{Z}_{K'} := \big\{T_i \in V \mid$  one of the following holds:
			\begin{itemize}
				\item[-] $T_i$ is of \textsf{L}-form where $i \in K'$ and $T_i \rightarrow T_j,\; T_j \in \bar{Z}_{K'_{-i}}$.
				
				\item[-] $T_i$ is of \textsf{M}-form where $i \in K'$ and $\exists a' \in \Gamma^i:\; T_i \xrightarrow{a'} T_j,\; T_j \in \bar{Z}_{K'_{-i}}$.
				
				\item[-] $T_i$ is of \textsf{Q}-form ($T_i \xrightarrow{1} T_j \; T_r$) where $i \in K'$ and $\exists K_L \subseteq K'_{-i}:\; T_j \in \bar{Z}_{K_L} \wedge T_r \in \bar{Z}_{K'_{-i} - K_L}$.
				
				\item[-] $T_i$ is of \textsf{Q}-form ($T_i \xrightarrow{1} T_j \; T_r$) and $\exists K_L \subset K' \; (K_L \not= \emptyset): T_j \in \bar{Z}_{K_L} \wedge T_r \in \bar{Z}_{K' - K_L}. \}$
			\end{itemize}
			
			\item Repeat until no change has occurred 
			to $\bar{Z}_{K'}$:
			\begin{enumerate}[label=(\alph*)]
				\item add $T_i \not\in \bar{Z}_{K'}$ to $\bar{Z}_{K'}$, if of \textsf{L}-form and $T_i \rightarrow T_j,\; T_j \in \bar{Z}_{K'}$.
				
				\item add $T_i \not\in \bar{Z}_{K'}$ to $\bar{Z}_{K'}$, if of \textsf{M}-form and $\exists a' \in \Gamma^i:\; T_i \xrightarrow{a'} T_j,\; T_j \in \bar{Z}_{K'}$.
				
				\item add $T_i \not\in \bar{Z}_{K'}$ to $\bar{Z}_{K'}$, if of \textsf{Q}-form ($T_i \xrightarrow{1} T_j \; T_r$) and $T_j \in \bar{Z}_{K'} \vee T_r \in \bar{Z}_{K'}$.
			\end{enumerate}
			
			\item $Z_{K'} := V - \bar{Z}_{K'}$.
		\end{enumerate}
	\end{enumerate}
	\caption{Algorithm for computing set 
	$\{T_i \in V \mid \forall \sigma \in \Psi:\; 
	Pr_{T_i}^{\sigma}[\bigcap_{q \in K'} Reach(T_q)] = 0\}$ 
	for every subset of target non-terminals $K' \subseteq K$ 
	in a given OBMDP.}
	\label{fig:QualMoR-NonReach}
\end{figure}

\begin{proposition}
    The algorithm in Figure \ref{fig:QualMoR-NonReach} computes, 
given an OBMDP, $\mathcal{A}$, and a set $K \subseteq [n]$ 
of $k = |K|$ target non-terminals, for every subset of 
target non-terminals $K' \subseteq K$, the set 
$Z_{K'} := \{T_i \in V \mid \forall \sigma \in \Psi:\; 
Pr_{T_i}^{\sigma}[\bigcap_{q \in K'} Reach(T_q)] = 0\}$. 
The algorithm runs in time $4^k \cdot |\mathcal{A}|^{O(1)}$.
The algorithm can also be augmented to compute 
a deterministic (non-static) strategy $\sigma'_{K'}$ 
and a rational value $b_{K'} > 0$, such that 
for all $T_i \not\in Z_{K'}$, 
$Pr_{T_i}^{\sigma'_{K'}}[\bigcap_{q \in K'} Reach(T_q)] \ge b_{K'} > 0$.
	\label{prop:QualMoR-NonReach}
\end{proposition}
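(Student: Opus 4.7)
Let $\hat{Z}_{K'} := V \setminus Z_{K'} = \{T_i \in V : \exists \sigma \in \Psi,\; Pr_{T_i}^{\sigma}[\bigcap_{q \in K'} Reach(T_q)] > 0\}$. The goal is to show that the set $\bar{Z}_{K'}$ produced by the algorithm equals $\hat{Z}_{K'}$, and that along the way a deterministic witness strategy $\sigma'_{K'}$ together with a rational lower bound $b_{K'} > 0$ can be extracted. The entire argument is by outer induction on $|K'|$. The base case $|K'| = 1$ reduces to \cite[Proposition 4.1]{ESY-icalp15-IC}, which supplies $\bar{Z}_{\{q\}} = W_q = \hat{Z}_{\{q\}}$ along with a deterministic static witness strategy and a uniform positive bound. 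For the inductive step, fix $K'$ with $|K'| \ge 2$ and assume $\bar{Z}_{K''} = \hat{Z}_{K''}$ with witnesses $(\sigma'_{K''}, b_{K''})$ for every proper subset $K'' \subsetneq K'$.

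The soundness inclusion $\bar{Z}_{K'} \subseteq \hat{Z}_{K'}$ is the easier direction: each of the four initialization rules is a ``combiner'' that glues previously constructed witness strategies. For example, a Q-form $T_i \xrightarrow{1} T_j\,T_r$ added to $\bar{Z}_{K'}$ because $T_j \in \bar{Z}_{K_L}$ and $T_r \in \bar{Z}_{K'-K_L}$ inherits the strategy that plays $\sigma'_{K_L}$ inside the left subtree and $\sigma'_{K' - K_L}$ inside the right subtree; conditional independence of sibling subtrees, which holds because strategies depend only on ancestor histories, gives $Pr_{T_i}^{\sigma'_{K'}}[\bigcap_{q \in K'} Reach(T_q)] \ge b_{K_L}\cdot b_{K' - K_L} > 0$. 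The L- and M-form init rules are analogous, up to an extra factor of the rule probability or a deterministic action choice. The three closure rules are handled by an inner induction on the iteration at which a non-terminal is added: we prepend one deterministic rule or action choice (plus possibly a factor of its probability) to the previously constructed strategy of the witnessing child. Setting $b_{K'}$ to be the minimum over all these combinations yields a uniform positive bound.

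The completeness inclusion $\hat{Z}_{K'} \subseteq \bar{Z}_{K'}$ is the main obstacle, because we must rule out \emph{every} (possibly non-static, randomized) strategy, and the dependency graph may contain cycles, so structural induction on the graph alone is insufficient. The key device is to approximate $\hat{Z}_{K'}$ from below by the bounded-horizon sets
\begin{equation*}
\hat{Z}_{K'}^{(t)} := \big\{ T_i \in V : \exists \sigma \in \Psi,\ Pr_{T_i}^{\sigma}[\text{every } T_q,\ q \in K',\ \text{appears in the first } t \text{ generations}] > 0 \big\}.
\end{equation*}
Since the reachability event is the monotone union of its bounded-horizon approximants, $\hat{Z}_{K'} = \bigcup_{t \ge 0} \hat{Z}_{K'}^{(t)}$, so it suffices to prove, by a nested induction on $t$ carried out simultaneously for all $K'' \subseteq K$, that $\hat{Z}_{K''}^{(t)} \subseteq \bar{Z}_{K''}$. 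The inductive step is a first-step analysis at the root $T_i$: if $T_i$ is L- or M-form, positivity forces some positive-probability child $T_j$ to lie in $\hat{Z}_{K' \setminus \{i\}}^{(t)}$ when $i \in K'$, or in $\hat{Z}_{K'}^{(t)}$ when $i \notin K'$, and the nested IH then places $T_j$ in $\bar{Z}_{K' \setminus \{i\}}$ or $\bar{Z}_{K'}$, matching an init or closure rule. If $T_i$ is Q-form, conditional independence of the two sibling subtrees yields a partition $(B_j, B_r)$ of the residual target set $X = K' \setminus (\{i\} \cap K')$ with $T_j \in \hat{Z}_{B_j}^{(t)}$ and $T_r \in \hat{Z}_{B_r}^{(t)}$; depending on whether $i \in K'$ and whether one of $B_j, B_r$ is empty, one of the Q-form init rules or the Q-form closure rule forces $T_i \in \bar{Z}_{K'}$.

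Finally, the running-time bound is straightforward: the outer loop enumerates $2^k$ subsets $K' \subseteq K$, the Q-form initialization for a set of size $l$ enumerates $O(2^l)$ partitions, and each closure iteration takes $|\mathcal{A}|^{O(1)}$ time, yielding a total of $\sum_{l=0}^{k}\binom{k}{l}(2^l + 1)\cdot|\mathcal{A}|^{O(1)} = O(3^k) \cdot |\mathcal{A}|^{O(1)} \le 4^k \cdot |\mathcal{A}|^{O(1)}$. The deterministic witness strategy $\sigma'_{K'}$ and rational bound $b_{K'}$ are emitted alongside the soundness construction above.
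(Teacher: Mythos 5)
Your soundness direction ($\bar{Z}_{K'} \subseteq \hat{Z}_{K'}$, i.e., everything the algorithm places in $\bar{Z}_{K'}$ admits a positive-probability witness with a uniform rational bound) matches the paper's argument essentially step for step: a nested induction on the iteration of insertion, gluing previously built deterministic witnesses along ancestor histories and multiplying the lower bounds. The genuine divergence is in the completeness direction, which is also where the real work is. The paper shows that $Z_{K'}$ is closed under all moves (from any $T_i \in Z_{K'}$ the play stays confined to $Z_{K'}$, and no \textsf{Q}-form node in $Z_{K'}$ can split the targets between its two children), concludes that a positive-probability witness would have to place all of $K'$ on a single root-to-node path, and then derives a contradiction at the first target copy on that path. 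You instead truncate $Reach$ to bounded horizons, use continuity from below to write $\hat{Z}_{K'} = \bigcup_t \hat{Z}_{K'}^{(t)}$, and induct on $t$ with a first-step analysis at the root; in the \textsf{Q}-form case, independence of the two sibling subtrees (which holds because strategies see only ancestor histories) lets you extract a partition of the residual targets matching one of the initialization rules, and the trivial-partition and L/M cases land on the closure rules. Both arguments are valid. Your unfolding argument is the more systematic one: it is the standard least-fixed-point/finite-horizon induction and sidesteps the paper's somewhat delicate ``all targets on one path'' step, at the cost of setting up the horizon-indexed sets and checking well-foundedness (lexicographic on $(|K'|,t)$, or simultaneous over all subsets in $t$ — as written your phrasing is slightly loose here, but either reading is sound). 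Your running-time count $\sum_{l}\binom{k}{l}2^{l}=3^{k}$ is in fact a slightly sharper bookkeeping than the paper's $2^{k}\cdot 2^{k}$, and gives the same stated bound.
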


\begin{proof}
    The running time of the algorithm follows from the facts that 
step II. executes for $2^k$ iterations and 
inside each iteration, step II.1. requires time at most 
$2^k \cdot |\mathcal{A}|^{O(1)}$ and the loop at 
step II.2. executes in time at most $|\mathcal{A}|^{O(1)}$.

We need to prove that for every $K' \subseteq K: T_i \in Z_{K'}$ 
if and only if $\forall \sigma \in \Psi:\; 
Pr_{T_i}^{\sigma}[\bigcap_{q \in K'} Reach(T_q)] = 0 
\Leftrightarrow Pr_{T_i}^{\sigma}[\bigcup_{q \in K'} 
Reach^{\complement}(T_q)] = 1$ (or equivalently, 
that $T_i \in \bar{Z}_{K'}$ 
if and only if $\exists \sigma'_{K'} \in \Psi: Pr_{T_i}^{\sigma'_{K'}}[\bigcap_{q \in K'} Reach(T_q)] > 0$). 
We in fact show that there is a value $b_{K'} > 0$ and a 
strategy $\sigma'_{K'} \in \Psi$ such that 
$T_i \in \bar{Z}_{K'}$ if and only if 
$Pr_{T_i}^{\sigma'_{K'}}[\bigcap_{q \in K'} Reach(T_q)] \ge b_{K'}$. 
We analyse this by a double induction with the top-layer 
induction based on the size of set $K'$, or in other words 
the time of constructing set $\bar{Z}_{K'}$. 
Clearly for the base case (step I.) of a single 
target non-terminal $T_q, q \in K$, by the P-time algorithm 
from \cite[Proposition 4.1]{ESY-icalp15-IC}, 
there is a (deterministic static) strategy $\sigma'_{\{q\}}$ 
for the player and a value $b_{\{q\}} > 0$ where 
$T_i \in \bar{Z}_{\{q\}}$ if and only if 
$Pr_{T_i}^{\sigma'_{\{q\}}}[Reach^{\complement}(T_q)] 
\le 1 - b_{\{q\}} < 1 \Leftrightarrow 
Pr_{T_i}^{\sigma'_{\{q\}}}[Reach(T_q)] \ge b_{\{q\}} > 0$. 
Now, constructing set $\bar{Z}_{K'}$ for a subset 
$K' \subseteq K$ of target non-terminals of size $l$, 
assume that for each $K'' \subset K'$ of size $\le l - 1$, 
there is a strategy $\sigma'_{K''}$ for the player 
and a value $b_{K''} > 0$ such that for all 
$T_j \in \bar{Z}_{K''},\; Pr_{T_j}^{\sigma'_{K''}}[\bigcup_{q \in K''} Reach^{\complement}(T_q)] \le 1 - b_{K''} < 1 \Leftrightarrow Pr_{T_j}^{\sigma'_{K''}}[\bigcap_{q \in K''} Reach(T_q)] \ge b_{K''} > 0$. And for all $T_j \in Z_{K''}$, 
it holds that $\forall \sigma \in \Psi: Pr_{T_j}^{\sigma}[\bigcap_{q \in K''} Reach(T_q)] = 0$.
	
First, let us prove the direction where if $T_i \in \bar{Z}_{K'}$, 
then $\exists \sigma'_{K'} \in \Psi:\;$ $Pr_{T_i}^{\sigma'_{K'}} [\bigcup_{q \in K'} Reach^{\complement}(T_q)] \le 1 - b_{K'} < 1 \Leftrightarrow Pr_{T_i}^{\sigma'_{K'}}[\bigcap_{q \in K'} Reach(T_q)] \ge b_{K'} > 0$, 
for some value $b_{K'} > 0$. We use a second (nested) induction, 
based on the iteration in which non-terminal $T_i$ 
was added to set $\bar{Z}_{K'}$. Consider the base case 
where $T_i$ is a non-terminal added to set $\bar{Z}_{K'}$ 
at the initialization step II.1.

\begin{enumerate}[label=(\roman*)]
	\item Suppose $T_i$ is of \textsf{L}-form where $i \in K'$ 
(i.e., $T_i$ is a target non-terminal in set $K'$) and 
$T_i \rightarrow T_j,\; T_j \in \bar{Z}_{K'_{-i}}$, where 
$\exists \sigma'_{K'_{-i}} \in \Psi: Pr_{T_j}^{\sigma'_{K'_{-i}}}[\bigcap_{q \in K'_{-i}} Reach(T_q)] \ge b_{K'_{-i}}$, for some value $b_{K'_{-i}} > 0$. 
Due to the fact that the play up to (and including) a copy 
of non-terminal $T_i, i \in K'$ has already reached the 
target $T_i$ and using strategy $\sigma'_{K'_{-i}}$ from the 
next generation as if the play starts in it, it follows 
that there exists a strategy $\sigma'_{K'}$ such that, 
for an ancestor history $h := T_i(u, T_j)$:
\begin{align*}
    & Pr_{T_i}^{\sigma'_{K'}} \Big[ \bigcap_{q \in K'} Reach(T_q) \Big]
= Pr_{T_i}^{\sigma'_{K'}} \Big[ \bigcap_{q \in K'_{-i}} Reach(T_q) 
\; \Big\vert \; Reach(T_i) \Big] \cdot Pr_{T_i}^{\sigma'_{K'}} 
\Big[ Reach(T_i) \Big] \\
    & = Pr_{T_i}^{\sigma'_{K'}} \Big[ \bigcap_{q \in K'_{-i}} 
Reach(T_q) \Big] \ge p_{ij} \cdot Pr_h^{\sigma'_{K'}} 
\Big[ \bigcap_{q \in K'_{-i}} Reach(T_q) \Big] \\
    & = p_{ij} \cdot Pr_{T_j}^{\sigma'_{K'_{-i}}} \Big[ \bigcap_{q \in K'_{-i}} Reach(T_q) \Big] \ge p_{ij} \cdot b_{K'_{-i}} > 0
\end{align*}
where $p_{ij} > 0$ is the probability of the rule 
$T_i \xrightarrow{p_{ij}} T_j$. 
So let $b_{K'}^i := p_{ij} \cdot b_{K'_{-i}}$.

	\item Suppose $T_i$ is of \textsf{M}-form where $i \in K'$ 
and $\exists a' \in \Gamma^i:\; T_i \xrightarrow{a'} T_j,\; T_j \in \bar{Z}_{K'_{-i}}$. Again let $h := T_i(u, T_j)$. 
By combining the witness strategy $\sigma'_{K'_{-i}}$ 
from the induction assumption for a starting non-terminal $T_j$ 
with the initial local choice of 
choosing deterministically action $a'$ starting at a 
non-terminal $T_i$, we obtain a combined strategy $\sigma'_{K'}$, 
such that starting at a (target) non-terminal $T_i$, 
we satisfy $Pr_{T_i}^{\sigma'_{K'}}[\bigcap_{q \in K'} Reach(T_q)] = Pr_h^{\sigma'_{K'}}[\bigcap_{q \in K'_{-i}} Reach(T_q)] = Pr_{T_j}^{\sigma'_{K'_{-i}}}[\bigcap_{q \in K'_{-i}} Reach(T_q)] \ge b_{K'_{-i}} > 0$. So let $b_{K'}^i := b_{K'_{-i}}$.

	\item Suppose $T_i$ is of \textsf{Q}-form 
(i.e., $T_i \xrightarrow{1} T_j \; T_r$) and there exists 
a proper split of the target non-terminals from $K'$, 
implied by $K_L \subset K'$ (where $K_L \not= \emptyset$) and 
$K' - K_L$, such that 
$T_j \in \bar{Z}_{K_L} \wedge T_r \in \bar{Z}_{K' - K_L}$. 
So, by the inductive assumption, 
$\exists \sigma'_{K_L} \in \Psi: Pr_{T_j}^{\sigma'_{K_L}}[\bigcap_{q \in K_L} Reach(T_q)] \ge b_{K_L} > 0$ 
and $\exists \sigma'_{K' - K_L} \in \Psi: Pr_{T_r}^{\sigma'_{K' - K_L}}[\bigcap_{q \in K' - K_L} Reach(T_q)] \ge b_{K' - K_L} > 0$, 
for some values $b_{K_L}, b_{K' - K_L} > 0$. 
Let $h_l := T_i(l, T_j)$ and $h_r := T_i(r, T_r)$. 
Hence, by combining the two strategies $\sigma'_{K_L}$ and 
$\sigma'_{K' - K_L}$ to be used from the next generation 
from the left and right child, respectively, as if the play 
starts in them, it follows that 
$\exists \sigma'_{K'} \in \Psi: 
Pr_{T_i}^{\sigma'_{K'}}[\bigcap_{q \in K'} Reach(T_q)] \ge 
Pr_{h_l}^{\sigma'_{K'}}[\bigcap_{q \in K_L} Reach(T_q)] \cdot 
Pr_{h_r}^{\sigma'_{K'}}[\bigcap_{q \in K' - K_L} Reach(T_q)] = 
Pr_{T_j}^{\sigma'_{K_L}}[\bigcap_{q \in K_L} Reach(T_q)] 
\cdot Pr_{T_r}^{\sigma'_{K' - K_L}}[\bigcap_{q \in K' - K_L} 
Reach(T_q)] \ge b_{K_L} \cdot b_{K' - K_L} > 0$, and so let 
$b_{K'}^i := b_{K_L} \cdot b_{K' - K_L}$.
	
	\item Suppose $T_i$ is of \textsf{Q}-form 
(i.e., $T_i \xrightarrow{1} T_j \; T_r$) where $i \in K'$ and 
there exists a split of the target non-terminals from set $K'_{-i}$, 
implied by $K_L \subseteq K'_{-i}$ and $K'_{-i} - K_L$, such that 
$T_j \in \bar{Z}_{K_L} \wedge T_r \in \bar{Z}_{K'_{-i} - K_L}$. 
Combining in the same way as in (iii) above 
the two witness strategies from 
the induction assumption for non-terminals $T_j$ and $T_r$, and 
the fact that the play starts in the target non-terminal $T_i, i \in K'$, 
it follows that $\exists \sigma'_{K'} \in \Psi: 
Pr_{T_i}^{\sigma'_{K'}}[\bigcap_{q \in K'} Reach(T_q)] = 
Pr_{T_i}^{\sigma'_{K'}}[\bigcap_{q \in K'_{-i}} Reach(T_q)] \ge 
Pr_{T_j}^{\sigma'_{K_L}}[\bigcap_{q \in K_L} Reach(T_q)] \cdot 
Pr_{T_r}^{\sigma'_{K'_{-i} - K_L}}[\bigcap_{q \in K'_{-i} - K_L} 
Reach(T_q)] \ge b_{K_L} \cdot b_{K'_{-i} - K_L} > 0$, and 
so let $b_{K'}^i := b_{K_L} \cdot b_{K'_{-i} - K_L}$.
\end{enumerate}

Now consider the inductive step of the nested induction, 
i.e., non-terminals $T_i$ added to set $\bar{Z}_{K'}$ at step II.2. 
If $T_i$ is of \textsf{L}-form, 
then for a non-terminal $T_i$ there is a positive probability 
of generating a child of a non-terminal $T_j \in \bar{Z}_{K'}$, 
for which we already know that $\exists \sigma'_{K'} \in \Psi:\; 
Pr_{T_j}^{\sigma'_{K'}}[\bigcap_{q \in K'} Reach(T_q)] 
\ge b_{K'}^j > 0$. Let $h := T_i(u, T_j)$.
Using the strategy $\sigma'_{K'}$ in the next generation as if the play 
starts in it, we get an augmented strategy $\sigma'_{K'}$, such 
that $Pr_{T_i}^{\sigma'_{K'}} [\bigcap_{q \in K'} Reach(T_q)] 
\ge p_{ij} \cdot Pr_h^{\sigma'_{K'}} [\bigcap_{q \in K'} Reach(T_q)] 
= p_{ij} \cdot Pr_{T_j}^{\sigma'_{K'}} [\bigcap_{q \in K'} 
Reach(T_q)] \ge p_{ij} \cdot b_{K'}^j > 0$, 
where $p_{ij} > 0$ is the probability 
of the rule $T_i \xrightarrow{p_{ij}} T_j$. 
Let $b_{K'}^i := p_{ij} \cdot b_{K'}^j$.

If type $T_i$ is of \textsf{M}-form, then 
$\exists a' \in \Gamma^i:\; T_i \xrightarrow{a'} T_j,\; 
T_j \in \bar{Z}_{K'}$, where $\exists \sigma'_{K'} \in \Psi$ 
such that $Pr_{T_j}^{\sigma'_{K'}}[\bigcap_{q \in K'} Reach(T_q)] \ge b_{K'}^j > 0$. Again let $h := T_i(u, T_j)$.
Hence, by combining the witness strategy $\sigma'_{K'}$ 
for a starting non-terminal $T_j$ (from the nested 
induction assumption) with the initial local choice of 
choosing deterministically action $a'$ starting at a 
non-terminal $T_i$, we obtain an augmented strategy 
$\sigma'_{K'}$ for a starting non-terminal $T_i$, such that 
$Pr_{T_i}^{\sigma'_{K'}}[\bigcap_{q \in K'} Reach(T_q)] = 
Pr_h^{\sigma'_{K'}}[\bigcap_{q \in K'} Reach(T_q)] = 
Pr_{T_j}^{\sigma'_{K'}}[\bigcap_{q \in K'} Reach(T_q)] \ge 
b_{K'}^i > 0$, where let $b_{K'}^i := b_{K'}^j$.

If type $T_i$ is of \textsf{Q}-form 
(i.e., $T_i \xrightarrow{1} T_j \; T_r$), then 
$T_j \in \bar{Z}_{K'} \vee T_r \in \bar{Z}_{K'}$, and so 
$\exists \sigma'_{K'} \in \Psi:\; Pr_{T_y}^{\sigma'_{K'}} 
[\bigcup_{q \in K'} Reach^{\complement}(T_q)] \le 1 - b_{K'}^y < 1$, 
where $y \in \{j, r\}$. Let $h_y := T_i(x, T_y)$ and 
$h_{\bar{y}} := T_i(\bar{x}, T_{\bar{y}})$, 
where $\bar{y} \in \{j, r\} - \{y\}$, 
$x \in \{l, r\}$ and $\bar{x} \in \{l, r\} - \{x\}$. 
By augmenting this $\sigma'_{K'}$ to be used from 
the next generation from the child of non-terminal $T_y$ 
as if the play starts in it and using an arbitrary strategy 
from the child of non-terminal $T_{\bar{y}}$, it holds that 
$Pr_{T_i}^{\sigma'_{K'}} [\bigcup_{q \in K'} Reach^{\complement}(T_q)] 
\le Pr_{h_y}^{\sigma'_{K'}}[\bigcup_{q \in K'} Reach^{\complement}(T_q)] 
\cdot Pr_{h_{\bar{y}}}^{\sigma'_{K'}}[\bigcup_{q \in K'} 
Reach^{\complement}(T_q)] \le Pr_{T_y}^{\sigma'_{K'}} 
[\bigcup_{q \in K'} Reach^{\complement}(T_q)] \le 1 - b_{K'}^i < 1$, 
where let $b_{K'}^i := b_{K'}^y$.

Finally, let $b_{K'} := \min_{T_i \in \bar{Z}_{K'}} \{b_{K'}^i\}$. 

Clearly, the constructed non-static strategy $\sigma'_{K'}$ 
can be described in time $4^k \cdot |\mathcal{A}|^{O(1)}$.

\medskip \medskip
Secondly, let us show the opposite direction, i.e., 
where if non-terminal $T_i \in Z_{K'}$, then 
$\forall \sigma \in \Psi:\; 
Pr_{T_i}^{\sigma}[\bigcap_{q \in K'} Reach(T_q)] = 0$. 
For all non-terminals 
$T_i \in Z_{K'}$, for a copy of non-terminal $T_i$ in the play, 
it holds that: if $T_i$ is of \textsf{L}-form, only a child 
of a non-terminal in $Z_{K'}$ can be generated; if $T_i$ 
is of \textsf{M}-form, regardless of player's choice on 
actions $\Gamma^i$, similarly only a child of a non-terminal 
in $Z_{K'}$ is generated as an offspring; 
if $T_i$ is of \textsf{Q}-form, both children have non-terminals 
belonging to $Z_{K'}$. This is due to non-terminals not being 
added to set $\bar{Z}_{K'}$ at step II.2.

Fix an arbitrary strategy $\sigma$ for the player. 
Then starting at a non-terminal $T_i \in Z_{K'}$ 
and under $\sigma$, the generated tree can contain only 
copies of non-terminals in set $Z_{K'}$, i.e., 
the play stays confined to non-terminals from set $Z_{K'}$ 
(note that the play may terminate). What is more, there is 
\textit{no} \textsf{Q}-form non-terminal $T_i$ in $Z_{K'}$ 
(whether $T_i$ is a target from $K'$ or not) such that 
non-terminal $T_i$ splits the job, of reaching 
the target non-terminals from set $K'$, amongst its two children. 
In other words, for each \textsf{Q}-form non-terminal 
$T_i \in Z_{K'}$ (i.e., $T_i \xrightarrow{1} T_j \; T_r$), 
$\forall K_L \subset K'$ (where $K_L \not= \emptyset$): 
$T_j \in Z_{K_L} \vee T_r \in Z_{K' - K_L}$; and 
if $T_i$ happens to be 
a target non-terminal itself from set $K'$ (i.e., $i \in K'$), 
then $\forall K_L \subseteq K'_{-i}: T_j \in Z_{K_L} \vee T_r \in Z_{K'_{-i} - K_L}$ (this is due to non-terminal $T_i$ not added 
to set $\bar{Z}_{K'}$ at step II.1.). 
So the only possibility, under $\sigma$ and starting at 
some non-terminal $T_i \in Z_{K'}$, to generate 
with a positive probability a tree (play) that contains copies 
of all targets from set $K'$, is (1) if all target non-terminals 
from $K'$ were never added to set $\bar{Z}_{K'}$ and, thus, 
belong to set $Z_{K'}$, and (2) if it is, in fact, some path $w$ 
(starting at the root) in the generated tree that contains 
copies of all the target non-terminals from set $K'$. 
Consider such a path $w$ and the very first copy $o$ of 
any of the target non-terminals $T_q, q \in K'$ along path $w$. 
Let $o$ be of a \textsf{L}-form target non-terminal $T_v$, 
let $o'$ be the successor child of $o$ along the path $w$ 
(say of some non-terminal $T_j$), and let $h$ be 
an ancestor history that follows along path $w$ up until 
(and including) $o'$ and ends in $o'$ 
(i.e., $\current(h) = T_j$). 
Then it follows that 
$Pr_h^{\sigma}[\bigcap_{q \in K'_{-v}} Reach(T_q)] > 0$. 
But it is easy to see that from $\sigma$ one can easily 
construct a strategy $\sigma'_{K'_{-v}}$ such that 
$Pr_{T_j}^{\sigma'_{K'_{-v}}}[\bigcap_{q \in K'_{-v}} Reach(T_q)] > 0$, i.e., $T_j \in \bar{Z}_{K'_{-v}}$.
But this contradicts the fact that the \textsf{L}-form 
non-terminal $T_v$ hasn't been added to 
set $\bar{Z}_{K'}$ at step II.1. 
Similarly follows the argument for if 
$T_v$ is of \textsf{M}-form or \textsf{Q}-form. 

So for all non-terminals $T_i \in Z_{K'}$, regardless of 
strategy $\sigma$ for the player, there is a zero probability 
of generating a tree that contains all target non-terminals 
from set $K'$ 
(i.e., $\forall \sigma \in \Psi:\; Pr_{T_i}^{\sigma} 
[\bigcap_{q \in K'} Reach(T_q)] = 0$). That concludes the proof.
\end{proof}

\section{Algorithm for deciding 
$Pr_{T_i}^*[\bigcap_{q \in K} 
Reach(T_q)] \stackrel{?}{=} 1$}

In this section we present an algorithm for deciding, 
given an OBMDP, $\mathcal{A}$, a set $K \subseteq [n]$ of 
$k = |K|$ target non-terminals and a starting non-terminal $T_i$, 
whether $Pr_{T_i}^*[\bigcap_{q \in K} Reach(T_q)] := 
\sup_{\sigma \in \Psi} Pr_{T_i}^{\sigma}[\bigcap_{q \in K} 
Reach(T_q)] = 1$, i.e., the optimal probability of 
generating a play (tree) that contains all target non-terminals 
from set $K$ is $= 1$. Recall, from Example 1, 
that there need not be 
a strategy for the player that achieves probability 
exactly $1$, which is the question in the next section 
(almost-sure multi-target reachability). 
However, there may nevertheless 
be a sequence of strategies that achieve probabilities 
arbitrarily close to 1 (limit-sure multi-target reachability), 
and the question of the existence of such a sequence is 
what we address in this section. 
In other words, we are asking whether there exists a sequence 
of strategies $\langle \sigma^*_{\epsilon_j} \mid 
j \in \mathbb{N} \rangle$ such that $\forall j \in \mathbb{N}$, 
$\epsilon_j > \epsilon_{j+1} > 0$ 
(i.e., $\lim_{j \rightarrow \infty} \epsilon_j = 0$) and 
$Pr_{T_i}^{\sigma^*_{\epsilon_j}}[\bigcap_{q \in K} 
Reach(T_q)] \ge 1 - \epsilon_j$. 
The algorithm runs in time $4^k \cdot |\mathcal{A}|^{O(1)}$, and 
hence is fixed-parameter tractable with respect to $k$.

First, as a preprocessing step, for each subset of 
target non-terminals $K' \subseteq K$, we compute the set 
$Z_{K'} := \{T_i \in V \mid \forall \sigma \in \Psi:\; 
Pr_{T_i}^{\sigma}[\bigcap_{q \in K'} Reach(T_q)] = 0\}$, 
using the algorithm from Proposition 
\ref{prop:QualMoR-NonReach}. Let also denote by $AS_q$, 
for every $q \in K$, the set of non-terminals $T_j$ 
(including the target non-terminal $T_q$ itself) for which 
$Pr_{T_j}^*[Reach(T_q)] = 1$. These sets can be computed 
in P-time by applying the algorithm from 
\cite[Theorem 9.3]{ESY-icalp15-IC} to each target non-terminal 
$T_q, \; q \in K$. Recall that it was shown in 
\cite{ESY-icalp15-IC} that for OBMDPs with a single target 
the almost-sure and limit-sure reachability problems coincide. 
So in fact, for every $q \in K$, there exists a strategy $\tau_q$ 
such that for every $T_j \in AS_q:\; 
Pr_{T_j}^{\tau_q}[Reach(T_q)] = 1$.

After this preprocessing step, we apply the algorithm in Figure 
\ref{fig:QualMoR-LS-Reach} to identify the non-terminals $T_i$ 
for which $Pr_{T_i}^*[\bigcap_{q \in K} Reach(T_q)] = 1$. 
Again let $K'_{-i}$ denote the set $K' - \{i\}$.

\begin{figure}[hp!]
	\small
	\begin{enumerate}[label=\Roman*.]
		\item Let $F_{\{q\}} := AS_q$, for each $q \in K$. 
		$F_{\emptyset} := V$.
		
		\item For $l = 2 \ldots k$: \\
			\hspace*{0.1in} For every subset of target non-terminals 
			$K' \subseteq K$ of size $|K'| = l$:
		\begin{enumerate}[label=\arabic*.]
			\item $D_{K'} := \{T_i \in V - Z_{K'} \mid $ one of the 
			following holds:
			\begin{itemize}
				\item[-] $T_i$ is of \textsf{L}-form where 
				$i \in K'$, $T_i \not\rightarrow \varnothing$ and 
				$\forall T_j \in V$: if $T_i \rightarrow T_j$, 
				then $T_j \in F_{K'_{-i}}$. 
				
				\item[-] $T_i$ is of \textsf{M}-form where 
				$i \in K'$ and $\exists a^* \in \Gamma^i: 
				T_i \xrightarrow{a^*} T_j,\; T_j \in F_{K'_{-i}}$.
				
				\item[-] $T_i$ is of \textsf{Q}-form 
				($T_i \xrightarrow{1} T_j \; T_r$) where $i \in K'$ 
				and $\exists K_L \subseteq K'_{-i}: 
				T_j \in F_{K_L} \wedge T_r \in F_{K'_{-i} - K_L}$.
				
				\item[-] $T_i$ is of \textsf{Q}-form 
				($T_i \xrightarrow{1} T_j \; T_r$) where 
				$\exists K_L \subset K'\; (K_L \not= \emptyset): 
				T_j \in F_{K_L} \wedge T_r \in F_{K' - K_L}.\}$
			\end{itemize}
			
			\item Repeat until no change has occurred to $D_{K'}$:
			\begin{enumerate}[label=(\alph*)]
				\item add $T_i \not\in D_{K'}$ to $D_{K'}$, if of 
				\textsf{L}-form, $T_i \not\rightarrow \varnothing$ 
				and $\forall T_j \in V$: if $T_i \rightarrow T_j$, 
				then $T_j \in D_{K'}$.
				
				\item add $T_i \not\in D_{K'}$ to $D_{K'}$, if of 
				\textsf{M}-form and $\exists a^* \in \Gamma^i: 
				T_i \xrightarrow{a^*} T_j,\; T_j \in D_{K'}$.
				
				\item add $T_i \not\in D_{K'}$ to $D_{K'}$, if of 
				\textsf{Q}-form ($T_i \xrightarrow{1} T_j \; T_r$) 
				and $T_j \in D_{K'} \vee T_r \in D_{K'}$.
			\end{enumerate}
			
			\item Let $X := V - (D_{K'} \cup Z_{K'})$.
			
			\item Initialize $S_{K'} := \{T_i \in X \mid $ 
			either $i \in K'$, or $T_i$ is of \textsf{L}-form and 
			$T_i \rightarrow \varnothing \vee T_i \rightarrow T_j,\; 
			T_j \in Z_{K'} \} \cup \bigcup_{\emptyset \subset K'' \subset K'} (X \cap S_{K''})$.
			
			\item Repeat until no change has occurred to $S_{K'}$:
			\begin{enumerate}[label=(\alph*)]
        		\item add $T_i \in X - S_{K'}$ to $S_{K'}$, if of 
        		\textsf{L}-form and $T_i \rightarrow T_j,\; 
        		T_j \in S_{K'} \cup Z_{K'}$.
        			
            	\item add $T_i \in X - S_{K'}$ to $S_{K'}$, if of 
            	\textsf{M}-form and $\forall a \in \Gamma^i:\; 
            	T_i \xrightarrow{a} T_j,\; T_j \in S_{K'} \cup Z_{K'}$.
            		
            	\item add $T_i \in X - S_{K'}$ to $S_{K'}$, if of 
            	\textsf{Q}-form ($T_i \xrightarrow{1} T_j \; T_r$) 
            	and $T_j \in S_{K'} \cup Z_{K'}\; \wedge \; 
            	T_r \in S_{K'} \cup Z_{K'}$.
			\end{enumerate}
			
			\item $\mathcal{C} \leftarrow$ MEC decomposition of $G[X - S_{K'}]$.
			
			\item For every $q \in K'$, let 
			$H_q := \{T_i \in X - S_{K'} \mid T_i$ is of 
			\textsf{Q}-form ($T_i \xrightarrow{1} T_j \; T_r$) and 
			$( (T_j \in X - S_{K'} \wedge 
			T_r \in \bar{Z}_{\{q\}} ) \vee 
			(T_j \in \bar{Z}_{\{q\}} \wedge T_r \in X - S_{K'}) ) \}$.
			
			\item Let $F_{K'} := \bigcup \{C \in \mathcal{C} \mid 
			P_C = K' \vee (P_C \not= \emptyset \wedge P_C \not= K' 
			\wedge \exists T_i \in C, \exists a \in \Gamma^i:\; 
			T_i \xrightarrow{a} T_j,\; T_j \in F_{K' - P_C})\}$, 
			where $P_C = \{q \in K' \mid C \cap H_q \not= \emptyset\}$.
			
			\item Repeat until no change has occurred to $F_{K'}$:
			\begin{enumerate}[label=(\alph*)]
 	       		\item add $T_i \in X - (S_{K'} \cup F_{K'})$ to 
 	       		$F_{K'}$, if of \textsf{L}-form and 
 	       		$T_i \rightarrow T_j,\; T_j \in F_{K'} \cup D_{K'}$.
 	       		
    		   	\item add $T_i \in X - (S_{K'} \cup F_{K'})$ to 
    		   	$F_{K'}$, if of \textsf{M}-form and 
    		   	$\exists a^* \in \Gamma^i: 
    		   	T_i \xrightarrow{a^*} T_j,\; T_j \in F_{K'}$.
    		   	
    		   	\item add $T_i \in X - (S_{K'} \cup F_{K'})$ to 
    		   	$F_{K'}$, if of \textsf{Q}-form 
    		   	($T_i \xrightarrow{1} T_j \; T_r$) and 
    		   	$T_j \in F_{K'} \vee T_r \in F_{K'}$.
			\end{enumerate}
			
			\item If $X \not= S_{K'} \cup F_{K'}$, let $S_{K'} := X - F_{K'}$ and go to step 5.
			
			\item Else, i.e., if $X = S_{K'} \cup F_{K'}$, let $F_{K'} := F_{K'} \cup D_{K'}$.
		\end{enumerate}
		
		\item \textbf{Output} $F_K$.
	\end{enumerate}
	\caption{Algorithm for limit-sure 
	multi-target reachability.  The output is the set  
	$F_K = \{T_i \in V \mid 
	Pr_{T_i}^*[\bigcap_{q \in K} Reach(T_q)] = 1\}$.}
	\label{fig:QualMoR-LS-Reach}
\end{figure}

\begin{theorem}
	The algorithm in Figure \ref{fig:QualMoR-LS-Reach} computes, 
given an OBMDP, $\mathcal{A}$, 
and a set $K \subseteq [n]$ of $k = |K|$ target non-terminals, 
for each subset $K' \subseteq K$, 
the set of non-terminals $F_{K'} := \{T_i \in V \mid 
Pr_{T_i}^*[\bigcap_{q \in K'} Reach(T_q)] = 1\}$. 
The algorithm runs in time $4^k \cdot |\mathcal{A}|^{O(1)}$. 
Moreover, for each $K' \subseteq K$, given $\epsilon > 0$, 
the algorithm can also be augmented to compute 
a randomized non-static strategy 
$\sigma^{\epsilon}_{K'}$ such that 
$Pr_{T_i}^{\sigma^{\epsilon}_{K'}}[\bigcap_{q \in K'} 
Reach(T_q)] \ge 1 - \epsilon$ for all non-terminals $T_i \in F_{K'}$.
	\label{theorem:QualMoR-LS-Reach}
\end{theorem}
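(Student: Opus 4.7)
The plan is to prove the theorem by induction on $|K'|$, using Proposition \ref{prop:QualMoR-NonReach} to pre-compute the sets $Z_{K'}$ and the single-target limit-sure$=$almost-sure result of \cite{ESY-icalp15-IC} as the base case $|K'|=1$. For $K' \subseteq K$ of size $l \ge 2$, I would assume inductively that the theorem holds for every $K'' \subsetneq K'$, so that $F_{K''}$ is correct and that, for every $\epsilon > 0$, a randomized non-static witness strategy $\sigma^{\epsilon}_{K''}$ achieving probability $\ge 1-\epsilon$ from every $T_j \in F_{K''}$ has already been constructed. The runtime bound then follows routinely: the outer loop iterates over the $2^k$ subsets, and inside each iteration the enumerations of splits $K_L$ in Steps II.1 and II.8 contribute another factor of at most $2^k$, while each monotone closure (Steps II.2, II.5, II.9), the MEC decomposition, and the fixed-point loop of Steps II.10--11 take polynomial time in $|\mathcal{A}|$.

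To prove the inclusion $F_{K'} \subseteq \{T_i \mid Pr_{T_i}^*[\bigcap_{q \in K'} Reach(T_q)] = 1\}$, I would construct $\sigma^{\epsilon}_{K'}$ by a nested induction on the iteration at which each non-terminal enters $D_{K'}$ or $F_{K'}$. For $T_i \in D_{K'}$ the construction mirrors Proposition \ref{prop:QualMoR-NonReach}: on a Q-form split witnessed by $K_L, K' \setminus K_L$ continue with $\sigma^{\epsilon/2}_{K_L}$ and $\sigma^{\epsilon/2}_{K' \setminus K_L}$ on the left and right subtrees; on an L/M-form target non-terminal, consume the target and continue with $\sigma^{\epsilon}_{K'_{-i}}$ at the child. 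For a non-terminal added via the MEC analysis at Step II.8, the key construction is a variant of the \emph{queen/workers} strategy of \cite{ESY-icalp15-IC}: from an MEC $C$ with $P_C = K'$, the strategy keeps the current ``queen'' non-terminal circulating inside $C$ while the mandatory Q-form expansions at nodes of $H_q$ spawn an ever-larger fleet of ``worker'' subtrees whose roots are in $\bar{Z}_{\{q\}}$ and so, under a fixed single-target witness $\tau_q$, reach $T_q$ with positive probability each. After enough workers have accumulated --- with wait time tuned to $\epsilon$, exactly as in Example~1 --- the queen takes an exit action. When $P_C \subsetneq K'$, the MEC is used only to route to a successor in $F_{K' \setminus P_C}$, from which one continues with $\sigma^{\epsilon/2}_{K' \setminus P_C}$.

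For the converse inclusion, I would show that every $T_i \in V \setminus F_{K'}$ has $\sup_\sigma Pr_{T_i}^\sigma[\bigcap_{q \in K'} Reach(T_q)]$ bounded away from $1$. The argument shows that $S_{K'}$ at termination of the loop is absorbing in a quantitative sense: from any $T_j \in S_{K'}$ and under any strategy, each generation yields, with a bounded positive probability, either termination, entry into $Z_{K'}$, or a Q-form split whose two children cannot jointly cover $K'$ (the last being forced by the simultaneous failure of the $D_{K'}$ initialization and the MEC condition). Iterating along any infinite play produces a uniform positive lower bound on the probability of missing some target in $K'$. The fixed-point loop in Steps II.10--11 terminates in at most $|V|$ rounds because each iteration strictly enlarges $S_{K'}$, and it is precisely the analysis at the final fixed point that yields this uniform lower bound.

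The main obstacle will be rigorously justifying the MEC step in both directions. The positive direction requires coupling the Q-form expansions inside $C$ with an infinite sequence of independent trials each of which, under $\tau_q$, almost-surely reaches $T_q$, then using a Borel--Cantelli-style estimate to drive the total failure probability below $\epsilon$ while controlling the dependencies among worker subtrees (which share some of the queen's history but are otherwise independent given that history). The negative direction requires showing that ``good'' MECs are the \emph{only} mechanism by which limit-sure reachability of $K'$ can be achieved from outside $D_{K'}$, so that the non-existence of such an MEC inside the current candidate region $X$ translates into a persistent positive probability of missing at least one target no matter what the controller does. The tension between the tree-shaped branching of the process and the sequential nature of the single-target arguments from \cite{ESY-icalp15-IC}, together with the explicit $\epsilon$-dependent delay that is essential to ruling out a probability-exactly-$1$ strategy (per Example~1), is where the proof will be most delicate.
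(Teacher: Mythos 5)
Your plan follows essentially the same route as the paper's proof: an outer induction on $|K'|$, nested inductions over the fixed-point computations of $D_{K'}$, $S_{K'}$ and $F_{K'}$, a queen/workers construction inside the MECs with an $\epsilon$-tuned delay, and a converse argument showing that the absence of a suitable MEC in the residual region forces a uniform gap below $1$. One small correction: the $\epsilon$-tuned wait followed by an exit action is what the paper uses for MECs with $\emptyset \neq P_C \subsetneq K'$ (first collect the targets in $P_C$ via workers, then hand control to a successor in $F_{K' - P_C}$); when $P_C = K'$ the queen never needs to exit, and the infinitely many independent workers cover all of $K'$ with conditional probability $1$.
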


\begin{proof}
    We will refer to the loop executing steps II.5. 
through II.10. for a specific subset $K' \subseteq K$ as 
the ``inner" loop and the iteration through all subsets of $K$ 
as the ``outer" loop. Clearly the inner loop terminates, 
due to step II.10. always adding at least one non-terminal 
to set $S_{K'}$ and step II.11. eventually executing. 
The running time of the algorithm follows from the facts that 
the outer loop executes for $2^k$ iterations and 
inside each iteration of the outer loop, steps II.1. and II.4. 
require time at most $2^k \cdot |\mathcal{A}|^{O(1)}$ 
and the inner loop executes for at most $|V|$ iterations, 
where during each inner loop iteration the nested loops execute 
in time at most $|\mathcal{A}|^{O(1)}$.

For the proof of correctness, we show that for every subset 
of target non-terminals $K' \subseteq K$, $F_{K'}$ 
(from the decomposition $V = F_{K'} \cup S_{K'} \cup Z_{K'}$) is the 
set of non-terminals $T_i$ for which the following property holds:
\begin{center}
	$(A)_{K'}^i$: $\sup_{\sigma \in \Psi} 
	Pr_{T_i}^{\sigma}[\bigcap_{q \in K'} Reach(T_q)] = 
	Pr_{T_i}^*[\bigcap_{q \in K'} Reach(T_q)] = 1$, 
	i.e., $\forall \epsilon > 0,\; \exists \sigma_{K'}^{\epsilon}$ 
	such that $Pr_{T_i}^{\sigma_{K'}^{\epsilon}} 
	[\bigcap_{q \in K'} Reach(T_q)] \ge 1 - \epsilon$.
\end{center}
Otherwise, if $T_i \in S_{K'}$, then the following property holds:
\begin{center}
	$(B)_{K'}^i$: $\sup_{\sigma \in \Psi} 
	Pr_{T_i}^{\sigma}[\bigcap_{q \in K'} Reach(T_q)] < 1$, 
	i.e., there exists a value $g > 0$ such that 
	$\forall \sigma \in \Psi:\; 
	Pr_{T_i}^{\sigma}[\bigcap_{q \in K'} Reach(T_q)] 
	\le 1 - g$.
\end{center}
Clearly, for non-terminals $T_i \in Z_{K'}$, 
property $(B)_{K'}^i$ holds, since 
$\sup_{\sigma \in \Psi} Pr_{T_i}^{\sigma}[\bigcap_{q \in K'} 
Reach(T_q)] = 0 < 1$ (by Proposition \ref{prop:QualMoR-NonReach}). 
Finally, the answer for the full set of targets is $F := F_K$.

We base this proof 
on an induction on the size of subset $K'$, 
i.e., on the time of computing sets $S_{K'}$ and $F_{K'}$ for 
$K' \subseteq K$. 
And in the process, for each subset $K' \subseteq K$ of 
target non-terminals, we show how to construct a randomized 
\textit{non-static} strategy $\sigma_{K'}^{\epsilon}$ 
(for any given $\epsilon > 0$) that ensures 
$Pr_{T_i}^{\sigma_{K'}^{\epsilon}}[\bigcap_{q \in K'} 
Reach(T_q)] \ge 1 - \epsilon$ for each non-terminal 
$T_i \in F_{K'}$.

Clearly for any subset of target non-terminals, 
$K' := \{q\} \subseteq K$, of size $l = 1$, each non-terminal 
$T_i \in F_{\{q\}}$ (respectively, $T_i \in V - F_{\{q\}}$) 
satisfies property $(A)_{\{q\}}^i$ (respectively, $(B)_{\{q\}}^i$), 
due to step I. and the definition of the $AS_q, q \in K$ sets. 
Furthermore, for each such subset $\{q\} \subseteq K$, 
there is in fact a strategy $\sigma_{\{q\}}$ such that 
$\forall T_i \in F_{\{q\}}:\; 
Pr_{T_i}^{\sigma_{\{q\}}}[Reach(T_q)] = 1$. 
Moreover, by \cite[Theorem 9.4]{ESY-icalp15-IC}, this strategy $\sigma_{\{q\}}$ is non-static and deterministic. 
Analysing subset $K'$ of target non-terminals of size $l$ 
as part of step II., assume that, for every $K'' \subset K'$ 
of size $\le l - 1$, sets $S_{K''}$ and $F_{K''}$ 
have already been computed, and for each non-terminal $T_j$ 
belonging to set $F_{K''}$ (respectively, set $S_{K''}$) 
property $(A)_{K''}^j$ (respectively, $(B)_{K''}^j$) holds. 
That is, by induction assumption, for each $K'' \subset K'$, 
for every $\epsilon > 0$ there is a randomized non-static 
strategy $\sigma_{K''}^{\epsilon}$ such that for any 
$T_j \in F_{K''}$: $Pr_{T_j}^{\sigma_{K''}}[\bigcap_{q \in K''} 
Reach(T_q)] \ge 1 - \epsilon$, and also for any 
$T_j \in S_{K''}$: $\sup_{\sigma \in \Psi} 
Pr_{T_j}^{\sigma}[\bigcap_{q \in K''} Reach(T_q)] < 1$. 
We now need to show that at end of the inner loop analysis 
of subset $K'$, property $(A)_{K'}^i$ 
(respectively, $(B)_{K'}^i$) holds for every non-terminal 
$T_i \in F_{K'}$ (respectively, $T_i \in S_{K'}$).

First we show that property $(A)_{K'}^i$ holds for each 
non-terminal $T_i$ belonging to set $D_{K'}$ ($\subseteq F_{K'}$), 
pre-computed prior to the execution of the inner loop for $K'$.

\begin{lemma}
	Every non-terminal $T_i \in D_{K'}$ satisfies property $(A)_{K'}^i$.
	\label{lemma:D_K'-analysis-LS}
\end{lemma}

\begin{proof}
    The lemma is proved via a nested induction based on the time 
when a non-terminal is added to set $D_{K'}$. 
Consider the base case where $T_i \in D_{K'}$ is a non-terminal, 
added at the initialization step II.1.
	\begin{enumerate}[label=(\roman*)]
        \item Suppose $T_i$ is of \textsf{L}-form where 
$i \in K'$ and for all associated rules a child is generated 
that is of a non-terminal $T_j \in F_{K'_{-i}}$, where 
property $(A)_{K'_{-i}}^j$ holds. Then, for every 
$\epsilon > 0$, using the witness strategy 
$\sigma_{K'_{-i}}^{\epsilon}$ from the induction assumption 
for all such non-terminals $T_j$ in the next generation 
as if the play starts in it, we obtain a strategy 
$\sigma_{K'}^{\epsilon}$ for a starting (target) non-terminal $T_i$ 
such that:
        \begin{align*}
& Pr_{T_i}^{\sigma_{K'}^{\epsilon}} 
    \Big[ \bigcap_{q \in K'} Reach(T_q) \Big] = 
    Pr_{T_i}^{\sigma_{K'}^{\epsilon}} 
    \Big[ \bigcap_{q \in K'_{-i}} Reach(T_q) \; \Big\vert \; 
    Reach(T_i) \Big] \cdot Pr_{T_i}^{\sigma_{K'}^{\epsilon}} 
    \Big[  Reach(T_i) \Big]  \\
= & Pr_{T_i}^{\sigma_{K'}^{\epsilon}} \Big[ 
    \bigcap_{q \in K'_{-i}} Reach(T_q) \Big] = 
    \sum_{j} p_{ij} \cdot 
    Pr_{T_i(u, T_j)}^{\sigma_{K'}^{\epsilon}} 
    \Big[ \bigcap_{q \in K'_{-i}} Reach(T_q) \Big] \\
= & \sum_{j} p_{ij} \cdot Pr_{T_j}^{\sigma_{K'_{-i}}^{\epsilon}} 
    \Big[ \bigcap_{q \in K'_{-i}} Reach(T_q) \Big] \ge  
    \sum_{j} p_{ij} \cdot (1 - \epsilon) = 1 - \epsilon
        \end{align*}
where $p_{ij}$ is the probability of rule $T_i \xrightarrow{p_{ij}} T_j$.

		\item Suppose $T_i$ is of \textsf{M}-form where 
$i \in K'$ and $\exists a^* \in \Gamma^i$ such that 
$T_i \xrightarrow{a^*} T_j,\; T_j \in F_{K'_{-i}}$, 
where property $(A)_{K'_{-i}}^j$ holds. 
Let $h := T_i(u, T_j)$. By combining every 
witness strategy $\sigma_{K'_{-i}}^{\epsilon},\; \epsilon > 0$ 
from property $(A)_{K'_{-i}}^j$ from the induction assumption 
for non-terminal $T_j$, as if the play starts in it, 
with the initial local choice 
of choosing action $a^*$ deterministically starting 
at a non-terminal $T_i$, we obtain for every $\epsilon > 0$ 
a combined strategy $\sigma_{K'}^{\epsilon}$ such that 
starting at a (target) non-terminal $T_i$, it follows that
$Pr_{T_i}^{\sigma_{K'}^{\epsilon}}[\bigcap_{q \in K'} 
Reach(T_q)] = Pr_h^{\sigma_{K'}^{\epsilon}}[\bigcap_{q \in K'_{-i}} 
Reach(T_q)] = Pr_{T_j}^{\sigma_{K'_{-i}}^{\epsilon}} 
[\bigcap_{q \in K'_{-i}} Reach(T_q)] \ge 1 - \epsilon$.

		\item Suppose $T_i$ is of \textsf{Q}-form 
($T_i \xrightarrow{1} T_j \; T_r$) where $i \in K'$ and 
there exists a split of the rest of the target non-terminals, 
implied by $K_L \subseteq K'_{-i}$ and $K'_{-i} - K_L$, 
such that $T_j \in F_{K_L} \wedge T_r \in F_{K'_{-i} - K_L}$. 
Let $h_l := T_i(l, T_j)$ and $h_r := T_i(r, T_r)$. 
For every $\epsilon > 0$, if we let 
$\epsilon' := 1 - \sqrt{1 - \epsilon}$, then by combining the 
two witness strategies $\sigma_{K_L}^{\epsilon'}$ and 
$\sigma_{K'_{-i} - K_L}^{\epsilon'}$ from the induction assumption 
for non-terminals $T_j$ and $T_r$, respectively, 
to be used in the next generation as if the play starts in it, 
we obtain a strategy $\sigma_{K'}^{\epsilon}$ for a 
starting (target) non-terminal $T_i$ such that 
$Pr_{T_i}^{\sigma_{K'}^{\epsilon}}[\bigcap_{q \in K'} 
Reach(T_q)] = Pr_{T_i}^{\sigma_{K'}^{\epsilon}} 
[\bigcap_{q \in K'_{-i}} Reach(T_q)] \ge 
Pr_{h_l}^{\sigma_{K'}^{\epsilon}}[\bigcap_{q \in K_L} Reach(T_q)] 
\cdot Pr_{h_r}^{\sigma_{K'}^{\epsilon}} 
[\bigcap_{q \in K'_{-i} - K_L} Reach(T_q)] = 
Pr_{T_j}^{\sigma_{K_L}^{\epsilon'}} 
[\bigcap_{q \in K_L} Reach(T_q)] \cdot 
Pr_{T_r}^{\sigma_{K'_{-i} - K_L}^{\epsilon'}}[\bigcap_{q \in K'_{-i} - K_L} Reach(T_q)] \ge (1 - \epsilon')^2 = 
(\sqrt{1 - \epsilon})^2  = 1 - \epsilon$.

		\item Suppose $T_i$ is of \textsf{Q}-form 
($T_i \xrightarrow{1} T_j \; T_r$) where there exists 
a proper split of the target non-terminals from set $K'$, 
implied by $K_L \subset K'$ (where $K_L \not= \emptyset$) and 
$K' - K_L$, such that $T_j \in F_{K_L} \wedge T_r \in F_{K' - K_L}$. 
Similarly, for each $\epsilon > 0$, let 
$\epsilon' := 1 - \sqrt{1 - \epsilon}$ and combine 
the two witness strategies $\sigma_{K_L}^{\epsilon'}$ and 
$\sigma_{K' - K_L}^{\epsilon'}$ from the induction assumption 
for non-terminals $T_j$ and $T_r$ in the same way as in (iii). 
It follows that property $(A)_{K'}^i$ is satisfied.
	\end{enumerate}

Now consider non-terminals $T_i$ added to set $D_{K'}$ at step II.2. 
If $T_i$ is of \textsf{L}-form, then all associated rules generate 
a child of non-terminal $T_j$ already in $D_{K'}$, 
where $(A)_{K'}^j$ holds by the (nested) induction. 
So using for every $\epsilon > 0$ the strategy 
$\sigma_{K'}^{\epsilon}$ from the nested induction assumption 
for each such non-terminal $T_j$ and applying the same argument 
as in (i), then property $(A)_{K'}^i$ is also satisfied.

If $T_i$ is of \textsf{M}-form, then 
$\exists a^* \in \Gamma^i:\; T_i \xrightarrow{a^*} T_j,\; 
T_j \in D_{K'}$. Again let $h := T_i(u, T_j)$. 
By combining, for every $\epsilon > 0$, the witness strategy 
$\sigma_{K'}^{\epsilon}$ for non-terminal $T_j$ 
(from the nested induction assumption), as if the play 
starts in it, with the initial local 
choice of choosing action $a^*$ deterministically 
starting at a non-terminal $T_i$, we obtain an augmented 
strategy $\sigma_{K'}^{\epsilon}$ for a starting non-terminal 
$T_i$ such that $Pr_{T_i}^{\sigma_{K'}^{\epsilon}} 
[\bigcap_{q \in K'} Reach(T_q)] = Pr_h^{\sigma_{K'}^{\epsilon}} 
[\bigcap_{q \in K'} Reach(T_q)] = 
Pr_{T_j}^{\sigma_{K'}^{\epsilon}} 
[\bigcap_{q \in K'} Reach(T_q)] \ge 1 - \epsilon$.

If $T_i$ is of \textsf{Q}-form ($T_i \xrightarrow{1} T_j \; T_r$), 
then $T_j \in D_{K'} \vee T_r \in D_{K'}$, i.e., 
for every $\epsilon > 0$, $\exists \sigma_{K'}^{\epsilon} \in \Psi$ 
such that $Pr_{T_y}^{\sigma_{K'}^{\epsilon}}[\bigcap_{q \in K'} 
Reach(T_q)] \ge 1 - \epsilon \Leftrightarrow 
Pr_{T_y}^{\sigma_{K'}^{\epsilon}}[\bigcup_{q \in K'} 
Reach^{\complement}(T_q)] \le \epsilon$, where $y \in \{j, r\}$. 
Let $h_y := T_i(x, T_y)$ and 
$h_{\bar{y}} := T_i(\bar{x}, T_{\bar{y}})$, where 
$\bar{y} \in \{j, r\} - \{y\}$, $x \in \{l, r\}$ and 
$\bar{x} \in \{l, r\} - \{x\}$. By augmenting strategy 
$\sigma_{K'}^{\epsilon}$ to be used from the next generation 
from the child of non-terminal $T_y$ as if the play starts in it 
and using an arbitrary strategy from the child of non-terminal 
$T_{\bar{y}}$, it follows that 
$Pr_{T_i}^{\sigma_{K'}^{\epsilon}}[\bigcup_{q \in K'} 
Reach^{\complement}(T_q)] \le Pr_{h_y}^{\sigma_{K'}^{\epsilon}} 
[\bigcup_{q \in K'} Reach^{\complement}(T_q)] \cdot 
Pr_{h_{\bar{y}}}^{\sigma_{K'}^{\epsilon}}[\bigcup_{q \in K'} 
Reach^{\complement}(T_q)] \le 
Pr_{T_y}^{\sigma_{K'}^{\epsilon}} [\bigcup_{q \in K'} 
Reach^{\complement}(T_q)] \le \epsilon \Leftrightarrow 
Pr_{T_i}^{\sigma_{K'}^{\epsilon}}[\bigcap_{q \in K'} 
Reach(T_q)] \ge 1 - \epsilon$, i.e., property $(A)_{K'}^i$ holds.
\end{proof}

\medskip
Next, we show that if $T_i \in S_{K'}$, then 
property $(B)_{K'}^i$ is satisfied.

\begin{lemma}
	Every non-terminal $T_i \in S_{K'}$ satisfies 
property $(B)_{K'}^i$.
	\label{lemma:property_B-LS}
\end{lemma}

\begin{proof}
    Again this is proved via a nested induction based on the time 
a non-terminal is added to set $S_{K'}$. Assuming that all 
non-terminals $T_j$, added already to set $S_{K'}$ 
in previous iterations and steps of the inner loop, 
satisfy $(B)_{K'}^j$, then we need to show that for 
a new addition $T_i$ to set $S_{K'}$, 
property $(B)_{K'}^i$ also holds.

Consider the non-terminals $T_i$ added to set $S_{K'}$ 
at the initialization step II.4. 

If $T_i$ is of \textsf{L}-form where 
$T_i \rightarrow \varnothing \vee T_i \rightarrow T_j,\; 
T_j \in Z_{K'}$, then with a constant positive probability 
non-terminal $T_i$ immediately either does not generate 
any offspring at all or generates a child of non-terminal 
$T_j \in Z_{K'}$, for which we already know that $(B)_{K'}^j$ holds. 
It is clear that property $(B)_{K'}^i$ is also satisfied.

If, for some subset $K'' \subset K'$, $T_i \in S_{K''}$, i.e., 
property $(B)_{K''}^i$ holds, then there is a value $g > 0$ 
such that $\forall \sigma \in \Psi:\; 
Pr_{T_i}^{\sigma}[\bigcap_{q \in K'} Reach(T_q)] \le 
Pr_{T_i}^{\sigma}[\bigcap_{q \in K''} Reach(T_q)] \le 1 - g$ 
and so property $(B)_{K'}^i$ is also satisfied. Note that if, 
for some subset $K'' \subset K'$, $T_i \in Z_{K''}$, then 
similarly $T_i \in Z_{K'}$ and so already $T_i \not\in X$.

If $T_i$ is a target non-terminal in set $K'$ (i.e., $i \in K'$), 
then since it has not been added to set $D_{K'}$ in 
step II.1: (1) if of \textsf{L}-form, 
it generates with a constant positive probability a child of 
non-terminal $T_j \in S_{K'_{-i}} \cup Z_{K'_{-i}}$, 
where $(B)_{K'_{-i}}^j$ holds; (2) if of \textsf{M}-form, 
irrespective of the strategy it generates a child 
of non-terminal $T_j \in S_{K'_{-i}} \cup Z_{K'_{-i}}$, where 
again $(B)_{K'_{-i}}^j$ holds; (3) and if of \textsf{Q}-form, 
it generates two children of non-terminals $T_j, T_r$, 
for which no matter how we split the rest of 
the target non-terminals from set $K'_{-i}$ 
(into subsets $K_L \subseteq K'_{-i}$ and $K'_{-i} - K_L$), 
either $(B)_{K_L}^j$ holds or $(B)_{K'_{-i} - K_L}^r$ holds. 
In other words, for a target non-terminal $T_i$ in the initial 
set $S_{K'}$ there is no sequence of strategies to ensure 
that the rest of the target non-terminals are reached 
with probability arbitrarily close to 1 
(the reasoning behind this last statement is the same as the 
arguments in (i) - (iii) below, since for a starting (target) 
non-terminal $T_i$: $\forall \sigma \in \Psi:\; 
Pr_{T_i}^{\sigma}[\bigcap_{q \in K'} Reach(T_q)] = 
Pr_{T_i}^{\sigma}[\bigcap_{q \in K'_{-i}} Reach(T_q)]$).

Observe that by the end of step II.4. all target non-terminals 
$T_q, q \in K'$ belong either to set $D_{K'}$ or set $S_{K'}$. 
Now consider a non-terminal $T_i$ added to set $S_{K'}$ 
in step II.5. during some iteration of the inner loop.
	\begin{enumerate}[label=(\roman*)]
		\item Suppose $T_i$ is of \textsf{L}-form. 
Then $T_i \rightarrow T_j,\; T_j \in S_{K'} \cup Z_{K'}$, 
where $(B)_{K'}^j$ holds. So irrespective of the strategy 
there is a constant positive probability to generate 
a child of the above non-terminal $T_j$ such that 
$Pr_{T_j}^*[\bigcap_{q \in K'} Reach(T_q)] < 1$, 
or in other words, $\exists g > 0$ such that 
$\forall \sigma \in \Psi:\; Pr_{T_j}^{\sigma} 
[\bigcap_{q \in K'} Reach(T_q)] \le 1 - g$. 
Let $h := T_i(u, T_j)$. 
But, there is a value $g > 0$ such that 
$\forall \sigma \in \Psi:\; 
Pr_{T_i}^{\sigma}[\bigcup_{q \in K'} 
Reach^{\complement}(T_q)] \ge p_{ij} \cdot 
Pr_h^{\sigma}[\bigcup_{q \in K'} Reach^{\complement}(T_q)] \ge 
p_{ij} \cdot g$ if and only if $\forall \sigma \in \Psi:\; 
Pr_{T_j}^{\sigma}[\bigcup_{q \in K'} 
Reach^{\complement}(T_q)] \ge g$, where $p_{ij} > 0$ 
is the probability of the rule $T_i \xrightarrow{p_{ij}} T_j$. 
And since the latter part of the statement holds, then the 
former, showing property $(B)_{K'}^i$, also holds.
			
		\item Suppose $T_i$ is of \textsf{M}-form. 
Then $\forall a \in \Gamma^i:\; T_i \xrightarrow{a} T_j,\; 
T_j \in S_{K'} \cup Z_{K'}$. So irrelevant of strategy $\sigma$ 
for the player, starting in a non-terminal $T_i$ 
the next generation surely consists of some non-terminal 
$T_j$ with the property $\sup_{\sigma \in \Psi} 
Pr_{T_j}^{\sigma}[\bigcap_{q \in K'} Reach(T_q)] < 1$, 
i.e., $\forall \sigma \in \Psi:\; 
Pr_{T_j}^{\sigma}[\bigcap_{q \in K'} Reach(T_q)] \le 1 - g$, 
for some value $g > 0$. 
Clearly, for some value $g > 0$, $\forall \sigma \in \Psi:\; 
Pr_{T_i}^{\sigma}[\bigcap_{q \in K'} Reach(T_q)] \le 
\max_{\{T_j \in S_{K'} \cup Z_{K'}\}} 
Pr_{T_i(u, T_j)}^{\sigma}[\bigcap_{q \in K'} Reach(T_q)] \le 1 - g$ 
(i.e., property $(B)_{K'}^i$) 
if and only if $\forall \sigma \in \Psi:\; 
\max_{\{T_j \in S_{K'} \cup Z_{K'}\}} 
Pr_{T_j}^{\sigma} [\bigcap_{q \in K'} Reach(T_q)] \le 1 - g$, 
where the latter is satisfied.

		\item Suppose $T_i$ is of \textsf{Q}-form 
(i.e., $T_i \xrightarrow{1} T_j \; T_r$), 
then $T_j, T_r \in S_{K'} \cup Z_{K'}$, i.e., 
both $(B)_{K'}^j$ and $(B)_{K'}^r$ are satisfied. We know that:
\begin{enumerate}
    \setlength{\topsep}{0em}
    \item[1)] Neither of the two children can 
single-handedly reach all target non-terminals from set $K'$ 
with probability arbitrarily close to $1$. 
That is, for some value $g > 0$, 
for every $\sigma \in \Psi$, $Pr_{T_j}^{\sigma} 
[\bigcap_{q \in K'} Reach(T_q)] \le 1 - g$ and 
$Pr_{T_r}^{\sigma}[\bigcap_{q \in K'} Reach(T_q)] \le 1 - g$.

    \item[2)] Moreover, since $T_i$ was not added to 
set $D_{K'}$ in step II.1., then $\forall K_L \subset K'$ 
(where $K_L \not= \emptyset$) either $(B)_{K_L}^j$ holds 
(i.e., $T_j \not\in F_{K_L}$) or $(B)_{K' - K_L}^r$ holds 
(i.e., $T_r \not\in F_{K' - K_L}$), i.e., there is 
some value $g > 0$ such that either 
$\forall \sigma \in \Psi:\; Pr_{T_j}^{\sigma} 
[\bigcap_{q \in K_L} Reach(T_q)] \le 1 - g$ or 
$\forall \sigma \in \Psi:\; Pr_{T_r}^{\sigma} 
[\bigcap_{q \in K' - K_L} Reach(T_q)] \le 1 - g$.

(Statements 1) and 2) hold for the same value $g > 0$, since 
there are only finitely many subsets of $K'$, so we can take 
$g$ to be the minimum of all such values from all 
the properties $(B)_{K''}^{j/r}$ ($K'' \subseteq K'$).)
\end{enumerate}

Let $h_l := T_i(l, T_j)$ and $h_r := T_i(r, T_r)$. 
Notice that for any $\sigma \in \Psi$ and for any 
$q' \in K'$, $Pr_{T_i}^{\sigma}[\bigcup_{q \in K'} 
Reach^{\complement}(T_q)] \ge Pr_{T_i}^{\sigma} 
[Reach^{\complement}(T_{q'})] = 
Pr_{h_l}^{\sigma}[Reach^{\complement}(T_{q'})] \cdot 
Pr_{h_r}^{\sigma}[Reach^{\complement}(T_{q'})]$.

We claim that $\exists g_i > 0$ such that 
$\forall \sigma \in \Psi:\; \bigvee_{q \in K'} 
Pr_{T_j}^{\sigma}[Reach^{\complement}(T_q)] \cdot 
Pr_{T_r}^{\sigma}[Reach^{\complement}(T_q)] \ge g_i$. 
But for any $q \in K'$ and for any $\sigma \in \Psi$ 
one can obviously construct $\sigma' \in \Psi$ such that 
$Pr_{T_j}^{\sigma}[Reach^{\complement}(T_q)] = 
Pr_{h_l}^{\sigma'}[Reach^{\complement}(T_q)]$ and 
similarly for non-terminal $T_r$. 
Therefore, it follows from the claim that 
$\forall \sigma \in \Psi:\; \bigvee_{q \in K'} 
Pr_{h_l}^{\sigma}[Reach^{\complement}(T_q)] \cdot 
Pr_{h_r}^{\sigma}[Reach^{\complement}(T_q)] \ge g_i$ 
and, therefore, it follows that $\forall \sigma \in \Psi:\; 
Pr_{T_i}^{\sigma}[\bigcup_{q \in K'} 
Reach^{\complement}(T_q)] \ge g_i \Leftrightarrow 
Pr_{T_i}^{\sigma}[\bigcap_{q \in K'} Reach(T_q)] 
\le 1 - g_i$.

Suppose the opposite, i.e., assume $(\mathcal{P})$ that 
$\forall g' > 0,\; \exists \sigma_{g'} \in \Psi:\; 
\bigwedge_{q \in K'} Pr_{T_j}^{\sigma_{g'}} 
[Reach^{\complement}(T_q)] \cdot Pr_{T_r}^{\sigma_{g'}} 
[Reach^{\complement}(T_q)] < g'$. 
Now for any $q \in K'$, by statement 2) above, we know 
that $T_j \not\in F_{\{q\}} \vee T_r \not\in F_{K'_{-q}}$ and 
$T_j \not\in F_{K'_{-q}} \vee T_r \not\in F_{\{q\}}$. 
First, suppose that in fact for some $q' \in K'$ it 
is the case that $T_j \not\in F_{\{q'\}} \wedge 
T_r \not\in F_{\{q'\}}$ (i.e., $T_j \in S_{\{q'\}} \cup Z_{\{q'\}} 
\wedge T_r \in S_{\{q'\}} \cup Z_{\{q'\}}$). 
That is, for some value $g > 0$, $\forall \sigma \in \Psi:\; 
Pr_{T_j}^{\sigma}[Reach^{\complement}(T_{q'})] \ge g$ 
and $Pr_{T_r}^{\sigma}[Reach^{\complement}(T_{q'})] \ge g$, 
where our claim follows directly by letting $g_i := g^2$ 
(hence, contradiction to $(\mathcal{P})$). 
Second, suppose that for some $q' \in K'$ it is the case that 
$T_j \not\in F_{K'_{-q'}} \wedge T_r \not\in F_{K'_{-q'}}$ 
(i.e., $T_j \in S_{K'_{-q'}} \cup Z_{K'_{-q'}} \wedge 
T_r \in S_{K'_{-q'}} \cup Z_{K'_{-q'}}$). 
But then $T_i$ would have been added to set $S_{K'_{-q'}}$ 
at step II.5.(c) when constructing the answer for subset of 
targets $K'_{-q'}$. However, we already know that 
$T_i \in \bigcap_{K'' \subset K'} F_{K''}$ (following from 
steps II.3. and II.4. that $T_i \not\in \bigcup_{K'' \subset K'} 
(S_{K''} \cup Z_{K''})$). 
Hence, again a contradiction.

Therefore, it follows that for every $q \in K'$, either 
$T_j \not\in F_{\{q\}} \wedge T_j \not\in F_{K'_{-q}}$ or 
$T_r \not\in F_{\{q\}} \wedge T_r \not\in F_{K'_{-q}}$. 
And in particular, the essential part is that 
$\forall q \in K'$, either $T_j \not\in F_{\{q\}}$ or 
$T_r \not\in F_{\{q\}}$.
That is, for every $q \in K'$, for some value $g > 0$ 
either $\forall \sigma \in \Psi:$ 
$Pr_{T_j}^{\sigma}[Reach^{\complement}(T_q)] \ge g$, 
or $\forall \sigma \in \Psi:$ 
$Pr_{T_r}^{\sigma}[Reach^{\complement}(T_q)] \ge g$. 
But then, combined with assumption $(\mathcal{P})$, 
it actually follows that there exists a subset $K'' \subseteq K'$ 
such that $\forall \epsilon > 0,\; 
\exists \sigma_{\epsilon} \in \Psi:\; \bigwedge_{q \in K''} 
Pr_{T_r}^{\sigma_{\epsilon}}[Reach^{\complement}(T_q)] 
\le \epsilon \; \wedge \; \bigwedge_{q \in K' - K''} 
Pr_{T_j}^{\sigma_{\epsilon}}[Reach^{\complement}(T_q)] 
\le \epsilon$. And by Proposition \ref{prop:equiv}(5.), 
it follows that $\forall \epsilon > 0,\; 
\exists \sigma'_{\epsilon} \in \Psi:\; 
Pr_{T_r}^{\sigma'_{\epsilon}}[\bigcap_{q \in K''} 
Reach(T_q)] \ge 1 - \epsilon \wedge 
Pr_{T_j}^{\sigma'_{\epsilon}}[\bigcap_{q \in K' - K''} 
Reach(T_q)] \ge 1 - \epsilon$, i.e., $T_j \in F_{K' - K''} \wedge 
T_r \in F_{K''}$, contradicting the known facts 1) and 2). 
Hence, assumption $(\mathcal{P})$ is wrong and our claim 
is satisfied.
\end{enumerate}

Now consider non-terminals $T_i$ added to set $S_{K'}$ in 
step II.10. at some iteration of the inner loop, i.e., 
$T_i \in Y_{K'} := X - (S_{K'} \cup F_{K'}) \subseteq \bar{Z}_{K'}$. 
Due to the fact that $T_i$ has not been added previously 
to sets $D_{K'}$, $S_{K'}$ or $F_{K'}$, 
then all of the following hold:

\begin{enumerate}[label=(\arabic*.)]
	\item $i \not\in K'$;
	
	\item if $T_i$ is of \textsf{L}-form, then a non-terminal 
$T_i$ generates with probability 1 a non-terminal which 
belongs to set $Y_{K'}$ (otherwise $T_i$ would have been added 
to sets $S_{K'}$ or $F_{K'}$ in step II.4., II.5. or step II.9., 
respectively);
	
	\item if $T_i$ is of \textsf{M}-form, then 
$\forall a \in \Gamma^i:\; T_i \xrightarrow{a} T_d,\; 
T_d \not\in F_{K'} \cup D_{K'}$ (otherwise $T_i$ 
would have been added to sets $F_{K'}$ or $D_{K'}$ in 
step II.2. or step II.9., respectively), and 
$\exists 'a \in \Gamma^i:\; T_i \xrightarrow{a'} T_j,\; 
T_j \not\in S_{K'} \cup Z_{K'}$, i.e., $T_j \in Y_{K'}$ 
(otherwise $T_i$ would have been added to set $S_{K'}$ 
in step II.5.); and
	
	\item if $T_i$ is of \textsf{Q}-form 
($T_i \xrightarrow{1} T_j \; T_r$), then w.l.o.g. 
$T_j \in Y_{K'}$ and $T_r \in Y_{K'} \cup S_{K'} \cup Z_{K'}$ 
(since $T_i$ has not been added to the other sets in 
steps II.2., II.5., or II.9.).
\end{enumerate}

Observe that any MEC in subgraph $G[X - S_{K'}]$, that 
contains a node from set $Y_{K'}$, is in fact entirely 
contained in subgraph $G[Y_{K'}]$, and also that 
there is at least one MEC in $G[Y_{K'}]$. 
This is due to statements (2.) - (4.) and the two key facts 
that all nodes in $G[Y_{K'}]$ have at least one outgoing edge 
and there is only a finite number of nodes.

However, consider any MEC, $C$, in $G[Y_{K'}]$ 
($Y_{K'} \subseteq X - S_{K'}$). As $C$ has not been added to 
set $F_{K'}$ at step II.8., then $P_C \not= K'$ 
(where $P_C = \{q \in K' \mid C \cap H_q \not= \emptyset\}$) and:

\begin{itemize}[topsep=0em]
	\setlength{\itemsep}{0em}
	\item either $P_C = \emptyset$,
	
	\item or $P_C \not= \emptyset$ and for every $T_u \in C$ 
of \textsf{M}-form it holds that $\forall b \in \Gamma^u:\; 
T_u \xrightarrow{b} T_v,\; T_v \not\in F_{K' - P_C}$.
\end{itemize}

First, let us focus on the second point. Note that for any 
non-terminal $T_j \in C$, clearly $T_j \in F_{P_C}$, 
and in fact, $\exists \sigma_{P_C} \in \Psi:\; 
Pr_{T_j}^{\sigma_{P_C}}[\bigcap_{q \in P_C} Reach(T_q)] = 1$. 
That is because, starting at a non-terminal $T_j \in C$, 
due to $C$ being a MEC in $G[Y_{K'}]$, such a strategy 
$\sigma_{P_C}$ can ensure that, for each $q \in P_C$, 
infinitely often a copy of a \textsf{Q}-form 
non-terminal in set $H_q \cap C$ is generated, 
which in turn spawns an independent copy of 
some non-terminal in set $\bar{Z}_{\{q\}}$ 
and thus infinitely often provides a positive probability 
bounded away from zero 
(by Proposition \ref{prop:QualMoR-NonReach}) 
to reach target non-terminal $T_q$.

\begin{enumerate}[topsep=0em]
	\item[(*)] We claim that for any \textsf{Q}-form non-terminal 
$T_i \in C$ (i.e., $T_i \xrightarrow{1} T_j \; T_r$ where 
w.l.o.g. $T_j \in C \subseteq Y_{K'}$), it is guaranteed 
that $T_r \not\in F_{K' - P_C}$. To see this, if it was 
the case that $T_r \in F_{K' - P_C}$, then, since 
$T_j \in F_{P_C}$, it would follow that $T_i$ would have been 
added to set $D_{K'}$ in step II.1., leading to a contradiction. 
	
	\item[(**)] What is more, due to the definition of set $P_C$, 
it follows that for any \textsf{Q}-form non-terminal 
$T_i \in C$ (i.e., $T_i \xrightarrow{1} T_j \; T_r$ where 
w.l.o.g. $T_j \in C$), $T_r \in \bigcap_{q' \in K' - P_C} Z_{\{q'\}}$, 
i.e., $\sup_{\sigma \in \Psi} Pr_{T_r}^{\sigma} 
[Reach(T_{q'})] = 0$, for each $q' \in K' - P_C$. 
Note also that $T_r \not\in C$, since $C \subseteq Y_{K'} 
\subseteq \bar{Z}_{K'} \subseteq \bar{Z}_{\{q\}},\; 
\forall q \in K'$ (so if $T_r \in C$, then $P_C = K'$ 
and $C$ would have been added to set $F_{K'}$ in step II.8.).

	\item[(***)] Furthermore, as stated in 
the second bullet point above, for every non-terminal 
$T_u \in C$ of \textsf{M}-form and $\forall b \in \Gamma^u:\; 
T_u \xrightarrow{b} T_v,\; T_v \not\in F_{K' - P_C}$.
\end{enumerate}

\medskip
And as we know, for every $T_v \in S_{K' - P_C} \cup 
Z_{K' - P_C}$, property $(B)_{K' - P_C}^v$ holds. 
In other words, there exists a value $g > 0$ such that 
regardless of strategy $\sigma$, for any 
$T_v \not\in F_{K' - P_C}$, $Pr_{T_v}^{\sigma} 
[\bigcap_{q \in K' - P_C} Reach(T_q)] \le 1 - g$.

Now let $\sigma$ be an arbitrary strategy fixed for the player. 
Denote by $w$ the path (in the play), where $w$ begins 
at a starting non-terminal $T_i \in C$ and evolves in 
the following way. If the current copy $o$ on the path $w$ 
is of a \textsf{L}-form or a \textsf{M}-form non-terminal 
$T_j \in C$, then $w$ follows along the unique successor 
of $o$ in the play. And if the current copy $o$ on path 
$w$ is of a \textsf{Q}-form non-terminal $T_j \in C$ 
($T_j \xrightarrow{1} T_{j'} \; T_r$ where w.l.o.g. 
$T_{j'} \in C$), then $w$ follows along the child of 
non-terminal $T_{j'}$. If the current copy $o$ on path $w$ 
is of a non-terminal not belonging in $C$, then the path $w$ 
terminates. Denote by $\square C$ the event that path $w$ 
is infinite, i.e., all non-terminals observed along path $w$ 
are in $C$ and path $w$ never leaves $C$ and never terminates. 
Then for any starting non-terminal $T_i \in C$:
\begin{align*}
	& Pr_{T_i}^{\sigma} \Big[ \bigcap_{q \in K'} 
Reach(T_q) \Big] = Pr_{T_i}^{\sigma} 
\Big[ \Big( \bigcap_{q \in P_C} Reach(T_q) \Big) \cap 
\Big( \bigcap_{q \in K' - P_C} Reach(T_q) \Big) \Big] \\
	& \le Pr_{T_i}^{\sigma} \Big[ \bigcap_{q \in K' - P_C} 
Reach(T_q) \Big] = Pr_{T_i}^{\sigma} 
\Big[ \Big( \bigcap_{q \in K' - P_C} Reach(T_q) \Big) \cap 
\square C \Big] + \\
	& Pr_{T_i}^{\sigma} \Big[ \Big( \bigcap_{q \in K' - P_C} 
Reach(T_q) \Big) \cap \neg \square C \Big] = 
Pr_{T_i}^{\sigma} \Big[ \Big( \bigcap_{q \in K' - P_C} 
Reach(T_q) \Big) \cap \neg \square C \Big] \\
    & \le \max_{T_v \not\in F_{K' - P_C}} \; \sup_{\tau \in \Psi} 
Pr_{T_v}^{\tau} \Big[ \bigcap_{q \in K' - P_C} 
Reach(T_q) \Big] \le 1 - g
\end{align*}
The event of reaching all target non-terminals 
from set $K' - P_C$ can be split into the event of 
reaching all targets non-terminals from set $K' - P_C$ and 
path $w$ being infinite union with the event of 
reaching all targets non-terminals from set $K' - P_C$ and 
path $w$ being finite. 
Moreover, $Pr_{T_i}^{\sigma}[(\bigcap_{q \in K' - P_C} 
Reach(T_q)) \cap \square C] = 0$, due to statements (1.), 
(*) and (**). The second to last inequality follows: because of 
statement (**); and also due to statements (*), (**) and (***), 
once event $\neg \square C$ occurs and path $w$ leaves 
MEC, $C$, it terminates immediately in some non-terminal 
$T_v \not\in C$ which also satisfies that 
$T_v \not\in F_{K' - P_C}$. 
And the last inequality follows from property $(B)_{K' - P_C}^v$ 
for any such non-terminal $T_v \not\in F_{K' - P_C}$.

And since $\sigma$ was an arbitrary strategy for the player, 
then it follows that for any such MEC, $C$, in $G[Y_{K'}]$ 
(where  $P_C \not= \emptyset$) and for any non-terminal $T_i \in C$: 
$Pr_{T_i}^*[\bigcap_{q \in K'} Reach(T_q)] < 1$, i.e., 
property $(B)_{K'}^i$ holds.

Analysing MECs, $C$, where $P_C = \emptyset$, the argument is 
similar. Property (**) holds by definition of set $P_C$. 
And by property (3.), for every \textsf{M}-form 
non-terminal $T_u \in C$ and for every $b \in \Gamma^u: 
T_u \xrightarrow{b} T_{u'},\; 
T_{u'} \in (Y_{K'} \cup S_{K'} \cup Z_{K'})$. 
Then because of properties (1.), (3.) and (**), it follows 
that for any $T_i \in C$, $\forall \sigma \in \Psi:\; 
Pr_{T_i}^{\sigma}[\bigcap_{q \in K'} Reach(T_q)] \le 
\max_{T_{u'} \in (Y_{K'} \cup S_{K'} \cup Z_{K'})} 
Pr_{T_{u'}}^{\sigma}[\bigcap_{q \in K'} Reach(T_q)]$.

For non-terminals $T_{u'}$ 
in sets $S_{K'}$ and $Z_{K'}$, we already know by induction 
that property $(B)_{K'}^{u'}$ is satisfied. 
Moreover, from standard algorithms for MEC-decomposition, 
one can see that there is an ordering of the MECs 
in $G[Y_{K'}]$ where the bottom level (level $0$) 
consists of MECs, $C''$, that have no out-going edges 
from the MEC at all (these are analogous to 
bottom strongly connected components in an SCC-decomposition) 
and for which $P_{C''} \not= K'$, and for further ``levels'' of 
MECs in the ordering the following is true: MECs or 
nodes that do not belong to any MEC, at level $t \ge 1$, 
have directed paths out of them leading to MECs (or nodes 
not in any MEC) at levels $< t$. 
If we rank the MECs and the independent nodes (not belonging 
to any MEC) in $G[Y_{K'}]$, using this ordering, 
and use an inductive argument, it can be shown that, in 
the case when the above mentioned non-terminal $T_{u'}$ 
belongs to $Y_{K'}$ and MEC, $C$, has rank $t \ge 1$ 
in the ordering, then $T_{u'}$ belongs to a lower rank $< t$, 
and thus by the inductive argument, has been shown to 
have property $(B)_{K'}^{u'}$.

Therefore, for any non-terminal $T_i$ in any MEC, $C$, 
in $G[Y_{K'}]$, $(B)_{K'}^i$ holds. 
And also by the inductive argument above for the ordering 
of nodes in $G[Y_{K'}]$, same holds for any 
non-terminal $T_i \in Y_{K'}$ not belonging to a MEC.
\end{proof}

\medskip
Now we show that for non-terminals $T_i \in F_{K'}$, 
when the inner loop for subset $K' \subseteq K$ terminates, 
the property $(A)_{K'}^i$ is satisfied. That is:
\begin{align*}
    \forall \epsilon > 0,\; \exists \sigma_{K'}^{\epsilon} \in \Psi:\; 
Pr_{T_i}^{\sigma_{K'}^{\epsilon}} \Big[ \bigcap_{q \in K'} 
Reach(T_q) \Big] \ge 1 - \epsilon
\end{align*}
We will also show how to construct such a strategy
$\sigma_{K'}^{\epsilon}$, for a given $\epsilon > 0$.
Since we have already proved it for non-terminals in set $D_{K'}$, 
in the following Lemma we refer to the part of set $F_{K'}$ 
not containing set $D_{K'}$, i.e., to set $F_{K'} = X - S_{K'}$.

\begin{lemma}
	Every non-terminal $T_i \in F_{K'}$ satisfies 
the property $(A)_{K'}^i$.
	\label{lemma:property_A-LS}
\end{lemma}

\begin{proof}
    Denote by $F_{K'}^0$ the initialized set of non-terminals 
from step II.8. Let us first observe the properties 
for non-terminals $T_i \in F_{K'} = X - S_{K'}$. 
None of them is a target non-terminal from set $K'$, 
i.e., $i \not\in K'$. If $T_i$ is of \textsf{L}-form, then:
	\begin{enumerate}
		\setlength{\itemsep}{0em}
		\item[(L.0)] if $T_i$ belongs to a MEC, 
$C \subseteq F_{K'}^0$, then a non-terminal $T_i$ 
generates with probability 1 as offspring some non-terminal 
either in set $C$ or in set $D_{K'}$ 
(since \textsf{L}-form non-terminals in $X - S_{K'}$ do not 
have associated probabilistic rules to non-terminals 
in $S_{K'} \cup Z_{K'}$).
		
		\item[(L)] otherwise, a non-terminal $T_i$ 
generates with probability 1 as offspring some non-terminal 
either in set $F_{K'}$ or in set $D_{K'}$.
	\end{enumerate}
If $T_i$ is of \textsf{M}-form, then $\forall a \in \Gamma^i:\; 
T_i \xrightarrow{a} T_d,\; T_d \not\in D_{K'}$ and:
	\begin{enumerate}
		\setlength{\itemsep}{0em}
		\item[(M.0)] if $T_i$ belongs to a MEC, 
$C \subseteq F_{K'}^0$, then $\exists a^* \in \Gamma^i:\; 
T_i \xrightarrow{a^*} T_j,\; T_j \in C$.
		
		\item[(M)] otherwise, $\exists a^* \in \Gamma^i:\; 
T_i \xrightarrow{a^*} T_j,\; T_j \in F_{K'}$.
	\end{enumerate}
If $T_i$ is of \textsf{Q}-form 
(i.e., $T_i \xrightarrow{1} T_j \; T_r$), then 
$T_j, T_r \not\in D_{K'}$ and:
	\begin{enumerate}[topsep=0em]
		\setlength{\itemsep}{0.5em}
		\item[] if $T_i$ belongs to a MEC, 
		        $C \subseteq F_{K'}^0$, then:
			\begin{enumerate}[topsep=0.1em]
				\setlength{\itemsep}{0.3em}
				\item[(Q.0)] either, w.l.o.g., $T_j \in C$ and there exists 
				some $q \in K'$ such that $T_r \in \bar{Z}_{\{q\}}$,
				
				\item[(Q.1)] or, w.l.o.g., $T_j\in C$ and there is no 
				$q \in K'$ such that $T_r \in \bar{Z}_{\{q\}}$.
			\end{enumerate}
				
		\item[(Q)] otherwise, i.e., if $T_i \not\in F_{K'}^0$, 
		then w.l.o.g., $T_j \in F_{K'}$.
	\end{enumerate}

\medskip
\begin{enumerate}
	\item[($\mathfrak{P}$)] Let us recall that 
for every $q \in K'$, there is a deterministic static strategy 
$\sigma'_{\{q\}}$ for the player and a value $b_{\{q\}} > 0$ 
such that, for each non-terminal $T_r \in \bar{Z}_{\{q\}}$, 
$Pr_{T_r}^{\sigma'_{\{q\}}}[Reach(T_q)] \ge b_{\{q\}}$. 
Let $b := \min_{q \in K'} \{b_{\{q\}}\} > 0$.
\end{enumerate}

\medskip
Given $\epsilon > 0$, let $\epsilon' := (1 - \sqrt{1 - \epsilon})/k$ 
(where $k = |K|$) and let us prove the Lemma and construct 
the randomized non-static strategy $\sigma_{K'}^{\epsilon}$ 
inductively. 

Consider the non-terminals added to set $F_{K'}$ at 
the initialization step II.8. during the last iteration 
of the inner loop. And, in particular, consider every MEC, $C$, 
added at step II.8. There is one of two reasons for why $C$ 
was added to set $F_{K'}^0$.

For the first reason, suppose that $1 \le |P_C| < l = |K'|$ and 
that there is a non-terminal $T_u \in C$ of 
\textsf{M}-form where $\exists b \in \Gamma^u:\; 
T_u \xrightarrow{b} T_{u'},\; T_{u'} \in F_{K' - P_C}$.

Consider any finite ancestor history $h$ of height $t$ 
(meaning the length of the sequence of ancestors that 
the history represents is $t$) such that $h$ starts at 
a non-terminal $T_v \in C$ and all non-terminals in $h$ 
belong to the MEC, $C$. 
Let $o$ denote the non-terminal copy at the end of 
the ancestor history $h$.

If $o$ is a copy of the non-terminal $T_u \in C$ (from above), let strategy 
$\sigma_{K'}^{\epsilon}$ choose uniformly at random 
among actions from statement (M.0) if it is not the case that, 
for each $q \in P_C$, at least $d := \lceil \log_{(1 - \frac{b}{k})} 
\epsilon' \rceil$ copies of the \textsf{Q}-form non-terminals 
$T_j \in C \cap H_q$ have been encountered along 
the ancestor history $h$. Otherwise, $\sigma_{K'}^{\epsilon}$ 
chooses deterministically action $b$, and therefore 
generates immediately a child $o''$ of non-terminal $T_{u'}$ 
(from above). In the entire subtree, rooted at 
$o''$, strategy $\tau$ is employed as if the play 
starts in $o''$, where 
$Pr_{T_{u'}}^{\tau}[\bigcap_{q' \in K' - P_C} 
Reach(T_{q'})] \ge \sqrt{1 - \epsilon}$ 
(exist by the induction assumption due to 
$T_{u'} \in F_{K' - P_C}$).

If $o$ is of another \textsf{M}-form non-terminal $T_i \in C$, 
let $\sigma_{K'}^{\epsilon}$ choose uniformly at random 
among actions from statement (M.0) and so in the next generation 
the single generated successor $o'$ is of a non-terminal $T_j \in C$, 
where we proceed to use strategy $\sigma_{K'}^{\epsilon}$ (that 
is being described). If $o$ is of a non-terminal 
$T_i \in C$ of \textsf{L}-form, from statement (L.0) we know 
that in the next generation the single generated successor 
$o'$ is of some non-terminal $T_j \in C \cup D_{K'}$. 
If $T_j \in D_{K'}$, then we use at $o'$ and its subtree of 
descendants the randomized non-static strategy 
from property $(A)_{K'}^j$, that guarantees probability 
$\ge 1 - \epsilon$ of reaching all targets in set $K'$, 
as if the play starts in $o'$. 
If $T_j \in C$, then we proceed by using the same strategy 
$\sigma_{K'}^{\epsilon}$ (that is currently being described) 
at $o'$. If $o$ is of a non-terminal $T_i \in C$ of 
\textsf{Q}-form ($T_i \xrightarrow{1} T_j \; T_r$), 
there are two cases for the two successor children $o'$ 
(of non-terminal $T_j$) and $o''$ (of non-terminal $T_r$):
\begin{itemize}
	\item \textbf{either} property (Q.0) is satisfied, 
i.e., $T_j \in C$ and $T_r \in \bar{Z}_{\{q\}}$, 
for some $q \in K'$. Then, in the next generation, 
we continue using the same strategy $\sigma_{K'}^{\epsilon}$ 
(that is currently being described) 
at $o'$ and for the entire subtree of play, 
rooted at $o''$, strategy $\sigma_{K'}^{\epsilon}$ 
chooses uniformly at random a target non-terminal 
$T_q, q \in K'$, such that $T_r \in \bar{Z}_{\{q\}}$, 
and employs the strategy $\sigma'_{\{q\}}$ from 
statement ($\mathfrak{P}$) as if the play starts at $o''$. 
Note that $Pr_{h(r, T_r)}^{\sigma_{K'}^{\epsilon}}[Reach(T_q)] \ge 
\frac{b}{|P_C|} \ge \frac{b}{k} > 0$, where 
$h(r,T_r)$ refers to the ancestor history for 
the right child $o''$ and where $|P_C| < l = |K'| \le k = |K|$.
	
	\item \textbf{or} property (Q.1) is satisfied. 
Then, in the next generation, we continue using strategy 
$\sigma_{K'}^{\epsilon}$ for $o'$, whereas for $o''$ 
the strategy is irrelevant and an arbitrary one is chosen 
for $o''$ and thereafter in $o''$'s tree of descendants.
\end{itemize}

That concludes the description of the randomized non-static 
strategy $\sigma_{K'}^{\epsilon}$ for non-terminals in MEC, $C$. 
Now we need to show that, indeed, that for any $T_i \in C:\; 
Pr_{T_i}^{\sigma_{K'}^{\epsilon}}[\bigcap_{q \in K'} 
Reach(T_q)] \ge 1 - \epsilon$.

Denote by $w$ the path (in the play) that begins 
at a starting non-terminal $T_i \in C$ and is defined as follows. 
If the current copy $o$ on the path $w$ is of a \textsf{L}-form or 
a \textsf{M}-form non-terminal $T_j \in C$, then $w$ follows 
along the unique successor of $o$ in the play. And if 
the current copy $o$ on path $w$ is of a \textsf{Q}-form 
non-terminal $T_j \in C$ ($T_j \xrightarrow{1} T_{j'} \; T_r$ 
where w.l.o.g. $T_{j'} \in C$), then $w$ follows along 
the child of non-terminal $T_{j'}$. If the current copy $o$ 
on path $w$ is: either of a non-terminal not belonging in $C$; or 
of the non-terminal $T_{u'} \in F_{K' - P_C}$ (from above) and, 
for each $q \in P_C$, at least $d$ copies of 
the \textsf{Q}-form non-terminals in set $C \cap H_q$ 
have already been encountered along $w$ - 
then the path $w$ terminates. 
Denote by $\square C$ the event that path $w$ (as defined) 
is infinite, i.e., path $w$ never terminates, and by 
$\neg \square_{D} C$ (respectively, $\neg \square_{u'} C$) 
the event that path $w$ is finite and terminates 
(according to the above definition of when it can terminate) 
in a copy of a non-terminal in set $D_{K'}$ (respectively, 
in a copy of non-terminal $T_{u'} \in F_{K' - P_C}$). 
Observe that under strategy $\sigma_{K'}^{\epsilon}$ 
for any starting non-terminal $T_i \in C$, 
$P_{T_i}^{\sigma_{K'}^{\epsilon}}[\square C] = 0$, and 
let $p := P_{T_i}^{\sigma_{K'}^{\epsilon}}[\neg \square_D C]$ 
(note that 
$P_{T_i}^{\sigma_{K'}^{\epsilon}}[\neg \square_{u'} C] = 1 - p$).

Now under strategy $\sigma_{K'}^{\epsilon}$ and starting at 
any non-terminal $T_i \in C$, with probability $1$: 
\begin{itemize}
    \item[(i)] either path $w$ 
terminates in a copy $o$ of a non-terminal in set $D_{K'}$, 
for which we already know that there is a strategy 
to reach all target non-terminals from set $K'$ with 
probability $\ge 1 - \epsilon$ (and according to 
$\sigma_{K'}^{\epsilon}$ such a strategy is employed 
at $o$ and its subtree of descendants). Hence, 
in the event of $\neg \square_D C$, 
with probability $\ge 1 - \epsilon$ all target non-terminals 
from set $K'$ are contained in the generated play, i.e., 
$Pr_{T_i}^{\sigma_{K'}^{\epsilon}}[\bigcap_{q \in K'} 
Reach(T_q) \mid \neg \square_D C] \ge 1 - \epsilon$.

    \item[(ii)] or, path $w$ terminates in a copy of a 
non-terminal $T_{u'} \in F_{K' - P_C}$. 
Then, for each $q \in P_C$, with probability $1$ 
(due to $C$ being a MEC and due to the 
description of strategy $\sigma_{K'}^{\epsilon}$) 
at least $d = \lceil \log_{(1 - \frac{b}{k})} \epsilon' \rceil$ 
copies $o$ of the \textsf{Q}-form non-terminals 
$T_j \in C \cap H_q$ were generated along the path $w$. 
And each such copy $o$ generates two children, $o'$ of some 
non-terminal $T_{j'} \in C$ (the successor on path $w$) and 
$o''$ of some non-terminal 
$T_r \in \bar{Z}_{\{q\}}$, where $o''$ has independently 
a positive probability bounded away from zero 
(in fact, $\ge \frac{b}{k}$ due to the uniformly at 
random choice over strategies from statement ($\mathfrak{P}$), 
where, by Proposition \ref{prop:QualMoR-NonReach}, 
the value $b > 0$ does not depend on the history or 
the time when $o''$ is generated) to reach the respective 
target non-terminal $T_q$ in a finite number of generations.
\end{itemize}

So suppose event $\neg \square_{u'} C$ occurs and 
let, for each $q \in P_C$, $Pr_{T_i}^{\sigma_{K'}^{\epsilon}} 
[\Diamond_{\le m} T_q \mid \neg \square_{u'} C]$ denote 
the conditional probability, starting 
at a non-terminal $T_i \in C$ and under the described 
strategy $\sigma_{K'}^{\epsilon}$, to reach target $T_q$ 
with at most $m$ generated copies of the \textsf{Q}-form 
non-terminals in 
set $C \cap H_q$ along the path $w$ in the play, conditioned 
on event $\neg \square_{u'} C$ occurring. 
Note that $\forall q \in P_C:\; 
Pr_{T_i}^{\sigma_{K'}^{\epsilon}}[\Diamond_{\le 1} T_q \mid 
\neg \square_{u'} C] \ge \frac{b}{|P_C|} \ge \frac{b}{k}$. 
That is, because with 
probability $1$ under strategy $\sigma_{K'}^{\epsilon}$, 
starting at a non-terminal $T_i \in C$, a copy $o$ of a 
\textsf{Q}-form non-terminal in set $C \cap H_q$ is generated 
along path $w$ and then there is a probability $\ge \frac{b}{k}$ 
to reach target $T_q$ from the right child of $o$. It follows 
that for any $T_i \in C$ and any $q \in P_C$:
\begin{align*}
    Pr_{T_i}^{\sigma_{K'}^{\epsilon}} 
[\neg \Diamond_{\le d} T_q \mid \neg \square_{u'} C] \le 
\Big( 1 - \frac{b}{k} \Big)^d 
\Leftrightarrow Pr_{T_i}^{\sigma_{K'}^{\epsilon}} 
[\Diamond_{\le d} T_q \mid \neg \square_{u'} C] \ge 1 - \Big (1 - \frac{b}{k} \Big)^d 
\end{align*}

Since $d \ge \log_{(1 - \frac{b}{k})} \epsilon'$, then 
$Pr_{T_i}^{\sigma_{K'}^{\epsilon}}[\Diamond_{\le d} T_q \mid 
\neg \square_{u'} C] \ge 1 - \epsilon'$. 
Then for any $T_i \in C$ and any $q \in P_C$:
\begin{align*}
	& Pr_{T_i}^{\sigma_{K'}^{\epsilon}}[Reach(T_q) \mid 
\neg \square_{u'} C] \ge 
Pr_{T_i}^{\sigma_{K'}^{\epsilon}}[\Diamond_{\le d} T_q \mid 
\neg \square_{u'} C] \ge 1 - \epsilon' \Leftrightarrow \\
    & Pr_{T_i}^{\sigma_{K'}^{\epsilon}}[Reach^{\complement}(T_q) \mid 
\neg \square_{u'} C] \le \epsilon'
\end{align*}

So, by the union bound:
\begin{align}
    & Pr_{T_i}^{\sigma_{K'}^{\epsilon}} 
\Big[ \bigcup_{q \in P_C} Reach^{\complement}(T_q) \Bigm\vert 
\neg \square_{u'} C \Big] \le |P_C| \cdot \epsilon' 
\le k \cdot \epsilon' = 1 - \sqrt{1 - \epsilon} 
\nonumber \\
    & \Leftrightarrow Pr_{T_i}^{\sigma_{K'}^{\epsilon}} 
\Big[ \bigcap_{q \in P_C} Reach(T_q) \Bigm\vert 
\neg \square_{u'} C \Big] \ge \sqrt{1 - \epsilon}
    \label{eq:P_C}
\end{align}

And in some finite number of generations, in a copy of 
the non-terminal $T_u$ along path $w$ action $b \in \Gamma^u$ 
is chosen deterministically, where 
$T_u \xrightarrow{b} T_{u'},\; T_{u'} \in F_{K' - P_C}$. 
There exists $\sigma_{K' - P_C}^{1 - \sqrt{1 - \epsilon}} \in \Psi$ 
such that 
$Pr_{T_{u'}}^{\sigma_{K' - P_C}^{1 - \sqrt{1 - \epsilon}}} 
[\bigcap_{q' \in K' - P_C} Reach(T_{q'})] \ge \sqrt{1 - \epsilon}$. 
Then for any starting non-terminal $T_i \in C$:
\begin{align}
	& Pr_{T_i}^{\sigma_{K'}^{\epsilon}} 
\Big[ \bigcap_{q' \in K' - P_C} Reach(T_{q'}) \Bigm\vert 
\neg \square_{u'} C \Big] = 
Pr_{T_{u'}}^{\sigma_{K' - P_C}^{1 - \sqrt{1 - \epsilon}}} 
\Big[ \bigcap_{q' \in K' - P_C} Reach(T_{q'}) \Big] \ge 
\sqrt{1 - \epsilon}
    \label{eq:K'-P_C}
\end{align}
The equality follows from the fact that there is zero 
probability to reach targets from set $K' - P_C$ before 
path $w$ terminates and also 
from the fact that strategy 
$\sigma_{K'}^{\epsilon}$ utilizes strategy 
$\sigma_{K'- P_C}^{1 - \sqrt{1 - \epsilon}}$ from the 
occurrence of $T_{u'}$ (when event $\neg \square_{u'} C$ 
happens) as if the play starts in it.

Using (\ref{eq:P_C}) and (\ref{eq:K'-P_C}), it follows that 
for any starting non-terminal $T_i \in C$:
\begin{align*}
	& Pr_{T_i}^{\sigma_{K'}^{\epsilon}} 
\Big[ \bigcap_{q \in K'} Reach(T_q) \Bigm\vert 
\neg \square_{u'} C \Big] \\
    & = Pr_{T_i}^{\sigma_{K'}^{\epsilon}} 
\Big[ \bigcap_{q \in P_C} Reach(T_q) \Bigm\vert 
\neg \square_{u'} C \Big] \cdot 
Pr_{T_i}^{\sigma_{K'}} \Big[ \bigcap_{q' \in K' - P_C} 
Reach(T_{q'}) \Bigm\vert \neg \square_{u'} C \Big] \\
	& = Pr_{T_i}^{\sigma_{K'}^{\epsilon}} 
\Big[ \bigcap_{q \in P_C} Reach(T_q) \Bigm\vert 
\neg \square_{u'} C \Big] \cdot 
Pr_{T_{u'}}^{\sigma_{K'- P_C}^{1 - \sqrt{1 - \epsilon}}} 
\Big[ \bigcap_{q' \in K' - P_C} Reach(T_{q'}) \Big] \ge 
(\sqrt{1 - \epsilon})^2 = 1 - \epsilon
\end{align*}
And putting it all together, it follows that for any starting 
non-terminal $T_i \in C$:
\begin{align*}
    & Pr_{T_i}^{\sigma_{K'}^{\epsilon}} \Big[ \bigcap_{q \in K'} 
Reach(T_q) \Big] = Pr_{T_i}^{\sigma_{K'}^{\epsilon}} 
\Big[ \Big( \bigcap_{q \in K'} Reach(T_q) \Big) \cap 
\square C \Big] \\
    & + Pr_{T_i}^{\sigma_{K'}^{\epsilon}} 
\Big[ \Big( \bigcap_{q \in K'} Reach(T_q) \Big) \cap 
\neg \square_{D} C \Big] + Pr_{T_i}^{\sigma_{K'}^{\epsilon}} 
\Big[ \Big( \bigcap_{q \in K'} Reach(T_q) \Big) \cap 
\neg \square_{u'} C \Big] \\
    & = Pr_{T_i}^{\sigma_{K'}^{\epsilon}} 
\Big[ \bigcap_{q \in K'} Reach(T_q) \Bigm\vert 
\neg \square_{D} C \Big] \cdot Pr_{T_i}^{\sigma_{K'}^{\epsilon}} 
\Big[ \neg \square_{D} C \Big] + \\
    & Pr_{T_i}^{\sigma_{K'}^{\epsilon}} 
\Big[ \bigcap_{q \in K'} Reach(T_q) \Bigm\vert 
\neg \square_{u'} C \Big] \cdot Pr_{T_i}^{\sigma_{K'}^{\epsilon}} 
\Big[ \neg \square_{u'} C \Big] \ge (1 - \epsilon) \cdot p + 
(1 - \epsilon) \cdot (1 - p) = 1 - \epsilon
\end{align*}

Now the second reason, why a MEC, $C$, in $G[F_{K'}]$ 
was added to $F_{K'}^0$ at step II.8., is if $P_C = K'$. 
Consider any finite ancestor history $h$, that starts at 
a non-terminal $T_v \in C$ and that all non-terminals in $h$ 
belong to the MEC, $C$. 
Let $o$ denote the non-terminal copy at the end of 
the ancestor history $h$. 
If $o$ is of a \textsf{L}-form or \textsf{Q}-form non-terminal 
in $C$, let $\sigma_{K'}^{\epsilon}$ behave the same way 
as was described before. And if $o$ is of a 
\textsf{M}-form non-terminal $T_i \in C$, let 
$\sigma_{K'}^{\epsilon}$ choose uniformly at random 
among actions from statement (M.0). So with probability $1$: 
either a copy of a \textsf{L}-form non-terminal in $C$ 
generates a child $o'$ of some non-terminal in set $D_{K'}$, 
where $\sigma_{K'}^{\epsilon}$ employs a strategy at $o'$ 
and its subtree of descendants such that all targets in set $K'$ 
are reached with probability $\ge 1 - \epsilon$ (exists by 
the induction assumption); or, for each $q \in P_C = K'$, 
infinitely often copies of 
the \textsf{Q}-form non-terminals $T_j \in C \cap H_q$ 
are observed. 
In the latter case, it follows that, for each $q \in P_C = K'$, 
infinitely many independent copies $o'$ of non-terminals 
$T_r \in \bar{Z}_{\{q\}}$ are generated, each of which 
has independently a positive probability bounded away from zero 
(again, $\ge \frac{b}{k}$ where, 
by Proposition \ref{prop:QualMoR-NonReach}, 
value $b > 0$ does not depend on the history or the time 
when entity $o'$ is generated) to reach 
the corresponding target non-terminal $T_q$ 
in a finite number of generations. 
Hence for any $T_v \in C$, it is satisfied that 
$Pr_{T_v}^{\sigma_{K'}^{\epsilon}}[\bigcap_{q \in K'} 
Reach(T_q)] \ge 1 - \epsilon$.

Therefore, for each type $T_i$ in some MEC, 
$C \subseteq F_{K'}^0$, property $(A)_{K'}^i$ is satisfied.

\medskip
Now consider the non-terminals $T_i$ added to set $F_{K'}$ 
in step II.9. during the last iteration of the inner loop.
\begin{enumerate}[label=(\roman*)]
	\item If $T_i$ is of \textsf{L}-form, then 
by statement (L) we know that with probability $1$ 
a copy $o$ of non-terminal $T_i$ 
in the next generation produces a single successor $o'$ 
of some non-terminal $T_j \in F_{K'} \cup D_{K'}$, 
where by induction $(A)_{K'}^j$ holds. 
So using, for any given $\epsilon > 0$, the strategy 
$\sigma_{K'}^{\epsilon}$ from the induction assumption 
for each such non-terminal $T_j$ in the next generation 
as if the play starts in it, 
then property $(A)_{K'}^i$ is also satisfied. 

    \item If $T_i$ is of \textsf{M}-form, then 
by statement (M), $\exists a^* \in \Gamma^i:\; 
T_i \xrightarrow{a^*} T_j,\; T_j \in F_{K'}$. 
Let $h := T_i(u, T_j)$. 
So, for every $\epsilon > 0$, combining the already 
described strategy $\sigma_{K'}^{\epsilon}$ 
for non-terminal $T_j$ (from the induction assumption), 
as if the play starts 
in it, with the initial local 
choice of choosing deterministically action $a^*$, 
starting at a non-terminal $T_i$, we obtain an 
augmented strategy $\sigma_{K'}^{\epsilon}$ for 
a starting non-terminal $T_i$ such that 
$Pr_{T_i}^{\sigma_{K'}^{\epsilon}}[\bigcap_{q \in K'} 
Reach(T_q)] =  Pr_h^{\sigma_{K'}^{\epsilon}} 
[\bigcap_{q \in K'} Reach(T_q)] = 
Pr_{T_j}^{\sigma_{K'}^{\epsilon}}[\bigcap_{q \in K'} 
Reach(T_q)] \ge 1 - \epsilon$, i.e., $(A)_{K'}^i$ holds.

	\item If $T_i$ is of \textsf{Q}-form 
(i.e., $T_i \xrightarrow{1} T_j \; T_r$), then, 
by statement (Q), w.l.o.g. $T_j \in F_{K'}$, 
where we already know that, for every $\epsilon > 0$,
there is a strategy $\sigma_{K'}^{\epsilon}$ such that 
$Pr_{T_j}^{\sigma_{K'}^{\epsilon}}[\bigcap_{q \in K'} 
Reach(T_q)] \ge 1 - \epsilon$. Let $h_l := T_i(l, T_j)$ 
and $h_r := T_i(r, T_r)$. 
Augmenting strategy $\sigma_{K'}^{\epsilon}$ to be used 
from the next generation from the child of non-terminal $T_j$ 
as if the play starts in it and using an arbitrary strategy 
from the child of non-terminal $T_r$, it follows that 
$Pr_{T_i}^{\sigma_{K'}^{\epsilon}}[\bigcup_{q \in K'} 
Reach^{\complement}(T_q)] \le Pr_{h_l}^{\sigma_{K'}^{\epsilon}} 
[\bigcup_{q \in K'} Reach^{\complement}(T_q)] \cdot 
Pr_{h_r}^{\sigma_{K'}^{\epsilon}}[\bigcup_{q \in K'} 
Reach^{\complement}(T_q)] \le 
Pr_{T_j}^{\sigma_{K'}^{\epsilon}}[\bigcup_{q \in K'} 
Reach^{\complement}(T_q)] \le \epsilon$, resulting 
in property $(A)_{K'}^i$ also being satisfied.
\end{enumerate}
\end{proof}

This completes the proof of 
Theorem \ref{theorem:QualMoR-LS-Reach} 
and the analysis of the limit-sure algorithm. 
The proof of Lemma \ref{lemma:property_A-LS} describes how to construct, 
for any subset $K' \subseteq K$ and any given $\epsilon > 0$, 
the witness strategy $\sigma_{K'}^{\epsilon}$ for the non-terminals 
in set $F_{K'}$. These non-static strategies $\sigma_{K'}^{\epsilon}$ 
are described as functions that map finite ancestor histories 
belonging to the controller to distributions over actions 
for the current non-terminal in the ancestor history, and 
can be described in such a form 
in time $(\log \frac{1}{\epsilon})^{O(1)} \cdot 4^{k}  \cdot |\mathcal{A}|^{O(1)}$.
\end{proof}

\section{Algorithm for deciding $\stackrel{?}{\exists} \sigma \in \Psi:\; Pr_{T_i}^{\sigma}[\bigcap_{q \in K} Reach(T_q)] = 1$}

In this section we present an algorithm for solving 
the qualitative almost-sure multi-target reachability problem 
for an OBMDP, $\mathcal{A}$, i.e., given a set $K \subseteq [n]$ 
of $k = |K|$ target non-terminals and a starting non-terminal 
$T_i$, deciding whether there is a strategy for the player 
under which the probability of generating a tree that 
contains all target non-terminals from set $K$ is $1$. 
The algorithm runs in time $4^k \cdot |\mathcal{A}|^{O(1)}$, and 
hence is fixed-parameter tractable with respect to $k$.

As in the previous section, first as a preprocessing step, 
for each subset of 
target non-terminals $K' \subseteq K$, we compute the set 
$Z_{K'} := \{T_i \in V \mid \forall \sigma \in \Psi:\; Pr_{T_i}^{\sigma}[\bigcap_{q \in K'} Reach(T_q)] = 0\}$, 
using the algorithm from Proposition 
\ref{prop:QualMoR-NonReach}. Let also denote by $AS_q$, 
for every $q \in K$, the set of non-terminals types $T_j$ 
(including the target non-terminal $T_q$ itself) for which 
there exists a strategy $\tau$ such that 
$Pr_{T_j}^{\tau}[Reach(T_q)] = 1$. 
These sets can be computed in P-time by applying the 
algorithm from \cite[Theorem 9.3]{ESY-icalp15-IC} to each 
target non-terminal $T_q,\; q \in K$. Recall that it was shown 
in \cite{ESY-icalp15-IC} that for OBMDPs with a single 
target the almost-sure and limit-sure reachability 
problems coincide.

After this preprocessing step, we apply the algorithm in 
Figure \ref{fig:QualMoR-AS-Reach} to identify 
the non-terminals $T_i$ for which there is a strategy 
$\sigma^*$ for the player such that $Pr_{T_i}^{\sigma^*}[\bigcap_{q \in K} Reach(T_q)] = 1$. 
Again $K'_{-i}$ denotes the set $K' - \{i\}$.

\begin{figure}[hp!]
	\small
	\begin{enumerate}[label=\Roman*.]
		\item Let $F_{\{q\}} := AS_q$, for each $q \in K$. $F_{\emptyset} := V$.
		
		\item For $l = 2 \ldots k$: \\
			\hspace*{0.1in} For every subset of target 
			non-terminals $K' \subseteq K$ of size $|K'| = l$: 
		\begin{enumerate}[label=\arabic*.]
			\item $D_{K'} := \{T_i \in V - Z_{K'} \mid $ one of the following holds:
			\begin{itemize}
				\item[-] $T_i$ is of \textsf{L}-form where 
				$i \in K'$, $T_i \not\rightarrow \varnothing$ 
				and $\forall T_j \in V$: if $T_i \rightarrow T_j$, then $T_j \in F_{K'_{-i}}$.
				
				\item[-] $T_i$ is of \textsf{M}-form where 
				$i \in K'$ and $\exists a^* \in \Gamma^i: T_i \xrightarrow{a^*} T_j,\; T_j \in F_{K'_{-i}}$.
				
				\item[-] $T_i$ is of \textsf{Q}-form 
				($T_i \xrightarrow{1} T_j \; T_r$) where 
				$i \in K'$ and $\exists K_L \subseteq K'_{-i}: T_j \in F_{K_L} \wedge T_r \in F_{K'_{-i} - K_L}$.
				
				\item[-] $T_i$ is of \textsf{Q}-form 
				($T_i \xrightarrow{1} T_j \; T_r$) where 
				$\exists K_L \subset K'\; (K_L \not= \emptyset): T_j \in F_{K_L} \wedge T_r \in F_{K' - K_L}.\}$
			\end{itemize}
			
			\item Repeat until no change has occurred to $D_{K'}$:
			\begin{enumerate}[label=(\alph*)]
				\item add $T_i \not\in D_{K'}$ to $D_{K'}$, if 
				of \textsf{L}-form, 
				$T_i \not\rightarrow \varnothing$ and 
				$\forall T_j \in V$: if $T_i \rightarrow T_j$, 
				then $T_j \in D_{K'}$.
				
				\item add $T_i \not\in D_{K'}$ to $D_{K'}$, if 
				of \textsf{M}-form and 
				$\exists a^* \in \Gamma^i: T_i \xrightarrow{a^*} T_j,\; T_j \in D_{K'}$.
				
				\item add $T_i \not\in D_{K'}$ to $D_{K'}$, if 
				of \textsf{Q}-form 
				($T_i \xrightarrow{1} T_j \; T_r$) and 
				$T_j \in D_{K'} \vee T_r \in D_{K'}$.
			\end{enumerate}
			
			\item Let $X := V - (D_{K'} \cup Z_{K'})$.
			
			\item Initialize $S_{K'} := \{T_i \in X \mid $ 
			either $i \in K'$, or $T_i$ is of \textsf{L}-form 
			and $T_i \rightarrow \varnothing \vee T_i \rightarrow T_j,\; T_j \in Z_{K'} \} \; \cup \; \bigcup_{\emptyset \subset K'' \subset K'} (X \cap S_{K''})$.
			
			\item Repeat until no change has occurred to $S_{K'}$:
			\begin{enumerate}[label=(\alph*)]
        		\item add $T_i \in X - S_{K'}$ to $S_{K'}$, if 
        		of \textsf{L}-form and $T_i \rightarrow T_j,\; T_j \in S_{K'} \cup Z_{K'}$.
        			
            	\item add $T_i \in X - S_{K'}$ to $S_{K'}$, if 
            	of \textsf{M}-form and 
            	$\forall a \in \Gamma^i:\; T_i \xrightarrow{a} T_j,\; T_j \in S_{K'} \cup Z_{K'}$.
            		
            	\item add $T_i \in X - S_{K'}$ to $S_{K'}$, if 
            	of \textsf{Q}-form 
            	($T_i \xrightarrow{1} T_j \; T_r$) and 
            	$T_j \in S_{K'} \cup Z_{K'}\; \wedge \; T_r \in S_{K'} \cup Z_{K'}$.
			\end{enumerate}
			
			\item $\mathcal{C} \leftarrow$ SCC decomposition of $G[X-S_{K'}]$.
			
			\item For every $q \in K'$, let 
			$H_q := \{T_i \in X - S_{K'} \mid T_i$ is of \textsf{Q}-form ($T_i \xrightarrow{1} T_j \; T_r$) 
			and $( (T_j \in X - S_{K'} \wedge T_r \in \bar{Z}_{\{q\}} ) \vee (T_j \in \bar{Z}_{\{q\}} \wedge T_r \in X - S_{K'}) ) \}$.
			
			\item Let $F_{K'} := \bigcup \; \{\cup_{q \in K'} (H_q \cap C) \mid C \in \mathcal{C}$ s.t. $\forall q' \in K': H_{q'} \cap C \not= \emptyset \}$.
			
			\item Repeat until no change has occurred to $F_{K'}$:
			\begin{enumerate}[label=(\alph*)]
 	       		\item add $T_i \in X - (S_{K'} \cup F_{K'})$ to 
 	       		$F_{K'}$, if of \textsf{L}-form and 
 	       		$T_i \rightarrow T_j,\; T_j \in F_{K'} \cup D_{K'}$.
 	       		
    		   	\item add $T_i \in X - (S_{K'} \cup F_{K'})$ to 
    		   	$F_{K'}$, if of \textsf{M}-form and 
    		   	$\exists a^* \in \Gamma^i: T_i \xrightarrow{a^*} T_j,\; T_j \in F_{K'}$.
    		   	
    		   	\item add $T_i \in X - (S_{K'} \cup F_{K'})$ to 
    		   	$F_{K'}$, if of \textsf{Q}-form 
    		   	($T_i \xrightarrow{1} T_j \; T_r$) and 
    		   	$T_j \in F_{K'} \vee T_r \in F_{K'}$.
			\end{enumerate}
			
			\item If $X \not= S_{K'} \cup F_{K'}$, let $S_{K'} := X - F_{K'}$ and go to step 5.
			
			\item Else, i.e., if $X = S_{K'} \cup F_{K'}$, let $F_{K'} := F_{K'} \cup D_{K'}$.
		\end{enumerate}
		
		\item \textbf{Output} $F_K$.
	\end{enumerate}
		\caption{Algorithm for almost-sure 
	multi-target reachability. The output is the set 
	$F_K = \{T_i \in V \mid \exists \sigma \in \Psi:\; 
	Pr_{T_i}^{\sigma}[\bigcap_{q \in K} Reach(T_q)] = 1\}$.}
	\label{fig:QualMoR-AS-Reach}
\end{figure}

\begin{theorem}
    The algorithm in Figure \ref{fig:QualMoR-AS-Reach} 
computes, given an OBMDP, $\mathcal{A}$, 
and a set $K \subseteq [n]$ of $k = |K|$ target non-terminals, 
for each subset $K' \subseteq K$, the set of non-terminals 
$F_{K'} := \{T_i \in V \mid \exists \sigma \in \Psi:\; Pr_{T_i}^{\sigma}[\bigcap_{q \in K'} Reach(T_q)] = 1\}$. 
The algorithm runs in time 
$4^k \cdot |\mathcal{A}|^{O(1)}$. Moreover, for each 
$K' \subseteq K$, the algorithm can also be augmented 
to compute a randomized non-static strategy 
$\sigma_{K'}^*$ such that 
$Pr_{T_i}^{\sigma_{K'}^*}[\bigcap_{q \in K'} Reach(T_q)] = 1$ 
for all non-terminals $T_i \in F_{K'}$.
	\label{theorem:QualMoR-AS-Reach}
\end{theorem}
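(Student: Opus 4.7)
My plan is to mirror the structure of the proof of Theorem 4.1 (the limit-sure case), proceeding by outer induction on $l = |K'|$ and establishing three parallel lemmas at each inductive step: (a) every $T_i \in D_{K'}$ satisfies property $(A)_{K'}^i$, namely $\exists \sigma \in \Psi$ with $Pr_{T_i}^{\sigma}[\bigcap_{q \in K'} Reach(T_q)] = 1$; (b) every $T_i \in S_{K'}$ satisfies $(B)_{K'}^i$, i.e., no strategy achieves probability exactly $1$; and (c) every $T_i \in F_{K'}$ satisfies $(A)_{K'}^i$. The base case $|K'|=1$ follows directly from \cite[Theorem~9.3]{ESY-icalp15-IC} together with the coincidence of almost-sure and limit-sure single-target reachability for (O)BMDPs. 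The running-time bound parallels the limit-sure analysis: the outer loop runs $2^k$ times, each inner-loop iteration is polynomial, and the inner loop itself runs at most $|V|$ rounds since step~10 strictly enlarges $S_{K'}$ each time it fires.

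Lemma (a) is proved by a nested induction on the time of addition to $D_{K'}$ that parallels Lemma 4.2 almost verbatim; the only difference is that combining witness strategies which each achieve probability exactly $1$ (for the two children of a Q-form non-terminal, or the chosen action of an M-form or L-form target) still yields probability exactly $1$, with no $\epsilon$-slack. Lemma (b) is similar in structure to Lemma 4.3, but the required conclusion is weaker: rather than having to bound the supremum away from $1$ by a fixed constant, we only need to exhibit for each $T_i \in S_{K'}$ a single target $T_{q^*}$ that is missed with positive probability under every strategy. The essential new observation is that any SCC $C$ in $G[X - S_{K'}]$ that is NOT selected at step~8 must satisfy $H_{q^*} \cap C = \emptyset$ for some $q^* \in K'$, so starting at any node of $C$ the only way to reach $T_{q^*}$ is to eventually leave $C$ via an L-form probabilistic edge or a Q-form right-child; by the outer inductive hypothesis together with the stratification by $S_{K''}$ for smaller $K'' \subseteq K'$, this route carries a uniformly positive failure probability.

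The genuinely new content is Lemma (c). For non-terminals added in step~9, the witness strategy is obtained by the same ``combination'' recipe used in the limit-sure proof but with $\epsilon = 0$. The essential case is the SCC construction for $T_i \in F_{K'}^0 = \bigcup_{q \in K'}(H_q \cap C)$ where $C$ is an SCC with $H_{q'} \cap C \neq \emptyset$ for every $q' \in K'$. I would construct $\sigma^*_{K'}$ with a ``main thread'' that is kept inside $C$ (at an M-form node in $C$, play deterministically an action whose successor is in $C$; at a Q-form node in $H_q \cap C$, treat the child in $C$ as the continuation of the main thread, and the child in $\bar{Z}_{\{q\}}$ as an independent ``side thread'' in which the static witness strategy $\sigma'_{\{q\}}$ from Proposition~3.1 is played), using the strong connectivity of $C$ and a round-robin schedule to visit, almost surely infinitely often, every $H_{q'} \cap C$ for each $q' \in K'$. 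Each such visit spawns an independent side thread with success probability $\geq b_{\{q'\}} > 0$ of reaching $T_{q'}$, so a Borel--Cantelli argument applied to these independent trials yields $Pr^{\sigma^*_{K'}}_{T_i}[Reach(T_{q'})] = 1$ for every $q' \in K'$, and a union bound on complements gives $Pr^{\sigma^*_{K'}}_{T_i}[\bigcap_{q \in K'} Reach(T_q)] = 1$.

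The main obstacle will be handling L-form non-terminals inside $C$ whose probabilistic rules may leak out of the SCC into other parts of $G[X - S_{K'}]$, since the ``main thread stays in $C$'' idea cannot avoid purely probabilistic leaks. My plan is to exploit the iterative fixed-point structure of steps~10--11: when the inner loop finally terminates with $X = S_{K'} \cup F_{K'}$, any leak target from such an L-form node must itself lie in $F_{K'} \cup D_{K'}$ (otherwise step~10 would have moved it into $S_{K'}$ and forced one more iteration, in which step~5 would then propagate the offending L-form node into $S_{K'}$). Hence the inductively constructed witness strategy for the leak target can be spliced in at the leak point to maintain almost-sure reachability after a probabilistic departure from $C$. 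Extracting the strategy $\sigma^*_{K'}$ directly from this construction gives a description of size $4^k \cdot |\mathcal{A}|^{O(1)}$, completing the proof.
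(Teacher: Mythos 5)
Your proposal follows essentially the same route as the paper's proof: the same outer induction with the three lemmas for $D_{K'}$, $S_{K'}$, and $F_{K'}$, the same SCC-level stratification for the $(B)_{K'}^i$ direction, and your ``main thread / side thread'' construction with Borel--Cantelli is exactly the paper's ``queen / workers'' strategy, including the splicing of inductive witness strategies at probabilistic leaks into $D_{K'}$. The only cosmetic difference is your round-robin scheduling of targets where the paper uses uniform randomization over the available actions and over the targets $q$ with $T_r \in \bar{Z}_{\{q\}}$; both yield the required infinitely many independent bounded-away-from-zero trials.
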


\begin{proof}
    We will refer to the loop executing steps II.5. 
through II.10. for a specific subset $K' \subseteq K$ 
as the ``inner" loop and the iteration through all subsets 
of $K$ as the ``outer" loop. Clearly the inner loop 
terminates, due to step II.10. always adding at least one 
non-terminal to set $S_{K'}$ and step II.11. 
eventually executing. 
The running time of the algorithm follows from the facts 
that the outer loop executes for $2^k$ iterations and inside 
each iteration of the outer loop, 
steps II.1. and II.4. require time at most 
$2^k \cdot |\mathcal{A}|^{O(1)}$ and 
the inner loop executes for at most $|V|$ iterations, 
where during each inner loop iteration the nested loops 
execute in time at most $|\mathcal{A}|^{O(1)}$.

For the proof of correctness, we show that for every subset 
of target non-terminals $K' \subseteq K$, $F_{K'}$ 
(from the decomposition $V = F_{K'} \cup S_{K'} \cup Z_{K'}$) 
is the set of non-terminals $T_i$ for which 
the following property holds:
\begin{center}
	$(A)_{K'}^i$: $\exists \sigma_{K'} \in \Psi$ such that 
	$Pr_{T_i}^{\sigma_{K'}}[\bigcap_{q \in K'} Reach(T_q)] = 1$.
\end{center}
Otherwise, if $T_i \in S_{K'}$, then the following property holds: 
\begin{center}
	$(B)_{K'}^i$: $\forall \sigma \in \Psi:\; Pr_{T_i}^{\sigma}[\bigcap_{q \in K'} Reach(T_q)] < 1 \Leftrightarrow Pr_{T_i}^{\sigma}[\bigcup_{q \in K'} Reach^{\complement}(T_q)] > 0$, i.e., 
	the probability of generating a tree that contains 
	at least one copy for each of the $T_q, q \in K'$ 
	target non-terminals, is $< 1$.
\end{center}
Clearly, for non-terminals $T_i \in Z_{K'}$, property 
$(B)_{K'}^i$ holds because, 
by Proposition \ref{prop:QualMoR-NonReach}, 
$\forall \sigma \in \Psi:\; 
Pr_{T_i}^{\sigma}[\bigcap_{q \in K'} Reach(T_q)] = 0 < 1$. 
Finally, the answer for the full set of targets is $F := F_K$.
	
As in the proof from the previous section, 
we base this proof on an induction on the size of subset $K'$, 
i.e. on the time of computing sets $S_{K'}$ and $F_{K'}$ 
for $K' \subseteq K$. And in the process, 
for each subset $K' \subseteq K$ of target non-terminals, 
we construct a randomized \textit{non-static} strategy 
$\sigma_{K'}$ for the player that ensures 
$Pr_{T_i}^{\sigma_{K'}}[\bigcap_{q \in K'} Reach(T_q)] = 1$
for each non-terminal $T_i \in F_{K'}$. In the end, 
$\sigma := \sigma_{K}$ is the strategy that guarantees 
almost-sure reachability of all given targets in the same play.

To begin with, observe that clearly for any subset of 
target non-terminals, $K' := \{q\} \subseteq K$, of 
size $l = 1$, each non-terminal $T_i \in F_{\{q\}}$ 
(respectively, $T_i \in V - F_{\{q\}}$) satisfies property 
$(A)_{\{q\}}^i$ (respectively, $(B)_{\{q\}}^i$), due to step I. 
and the definition of the $AS_q, q \in K$ sets. 
Hence, for each such subset $\{q\} \subseteq K$, there is 
a strategy $\sigma_{\{q\}}$ such that 
$\forall T_i \in F_{\{q\}}: \; Pr_{T_i}^{\sigma_{\{q\}}}[Reach(T_q)] = 1$. Moreover, 
by \cite[Theorem 9.4]{ESY-icalp15-IC} this strategy 
$\sigma_{\{q\}}$ is non-static and deterministic. 
Analysing subset $K'$ of target non-terminals of size $l$ 
as part of step II., assume that, for every $K'' \subset K'$ 
of size $\le l - 1$, sets $S_{K''}$ and $F_{K''}$ 
have already been computed, and for each non-terminal 
$T_j$ belonging to set $F_{K''}$ (respectively, set $S_{K''}$) 
property $(A)_{K''}^j$ (respectively, $(B)_{K''}^j$) holds. 
That is, by induction assumption, for each $K'' \subset K'$, 
there is a randomized non-static strategy $\sigma_{K''}$ 
such that for any $T_j \in F_{K''}$: 
$Pr_{T_j}^{\sigma_{K''}}[\bigcap_{q \in K''} Reach(T_q)] = 1$, and for any $T_j \in S_{K''}$: 
$\forall \sigma \in \Psi,\; Pr_{T_j}^{\sigma}[\bigcap_{q \in K''} Reach(T_q)] < 1$. We now need to show that at end of 
the inner loop analysis of subset $K'$, property 
$(A)_{K'}^i$ (respectively, $(B)_{K'}^i$) holds for every 
non-terminal $T_i \in F_{K'}$ (respectively, $T_i \in S_{K'}$).

First we show that property $(A)_{K'}^i$ holds for each 
non-terminal $T_i$ belonging to set $D_{K'}$ 
($\subseteq F_{K'}$), pre-computed prior to the execution 
of the inner loop for subset $K'$.

\begin{lemma}
	Every non-terminal $T_i \in D_{K'}$ satisfies property 
	$(A)_{K'}^i$.
	\label{lemma:D_K'-analysis-AS}
\end{lemma}

\begin{proof}
    The lemma is proved via a nested induction based on 
the time of a non-terminal being added to set $D_{K'}$. 
Consider the base case where $T_i \in D_{K'}$ is a 
non-terminal, added at the initialization step II.1.
\begin{enumerate}[label=(\roman*)]
	\item Suppose $T_i$ is of \textsf{L}-form where 
$i \in K'$ and for all associated rules a child is 
generated that is of a non-terminal $T_j \in F_{K'_{-i}}$, 
where property $(A)_{K'_{-i}}^j$ holds. 
Then using the witness strategy from property $(A)_{K'_{-i}}^j$ 
for all such non-terminals $T_j$ in the next generation 
as if the play starts in it and, since target non-terminal 
$T_i$ is already reached, clearly property $(A)_{K'}^i$ holds.
		
	\item Suppose $T_i$ is of \textsf{M}-form where 
$i \in K'$ and $\exists a^* \in \Gamma^i$ such that 
$T_i \xrightarrow{a^*} T_j,\; T_j \in F_{K'_{-i}}$, where 
property $(A)_{K'_{-i}}^j$ holds. Let $h := T_i(u, T_j)$. 
Then, by combining the witness strategy $\sigma_{K'_{-i}}$ 
from the induction assumption for non-terminal $T_j$, as 
if the play starts in it, with the initial local choice of choosing 
deterministically action $a^*$ starting at 
a non-terminal $T_i$, we obtain a combined strategy 
$\sigma_{K'}$ such that starting at 
a (target) non-terminal $T_i$, we satisfy 
$Pr_{T_i}^{\sigma_{K'}}[\bigcap_{q \in K'} Reach(T_q)] = 
Pr_{T_i}^{\sigma_{K'}}[\bigcap_{q \in K'_{-i}} Reach(T_q) \mid 
Reach(T_i)] \cdot Pr_{T_i}^{\sigma_{K'}}[Reach(T_i)] = 
Pr_{T_i}^{\sigma_{K'}}[\bigcap_{q \in K'_{-i}} Reach(T_q)] = 
Pr_h^{\sigma_{K'}}[\bigcap_{q \in K'_{-i}} Reach(T_q)] = 
Pr_{T_j}^{\sigma_{K'_{-i}}}[\bigcap_{q \in K'_{-i}} Reach(T_q)] = 1$.
		
	\item Suppose $T_i$ is of \textsf{Q}-form 
($T_i \xrightarrow{1} T_j \; T_r$) where $i \in K'$ and 
there exists a split of the rest of the target non-terminals, 
implied by $K_L \subseteq K'_{-i}$ and $K'_{-i} - K_L$, 
such that $T_j \in F_{K_L} \wedge T_r \in F_{K'_{-i} - K_L}$. 
Let $h_l := T_i(l, T_j)$ and $h_r := T_i(r, T_r)$. 
By combining the two witness strategies $\sigma_{K_L}$ and 
$\sigma_{K'_{-i} - K_L}$ from the induction assumption 
for non-terminals $T_j$ and $T_r$, respectively, to be 
used from the next generation as if the play starts in it, 
and the fact that target $T_i$ is reached (since $T_i$ is the 
starting non-terminal), it follows that 
$\exists \sigma_{K'} \in \Psi$ such that 
$Pr_{T_i}^{\sigma_{K'}}[\bigcap_{q \in K'} Reach(T_q)] = 
Pr_{T_i}^{\sigma_{K'}}[\bigcap_{q \in K'_{-i}} Reach(T_q)] \ge Pr_{h_l}^{\sigma_{K'}}[\bigcap_{q \in K_L} Reach(T_q)] \cdot Pr_{h_r}^{\sigma_{K'}}[\bigcap_{q \in K'_{-i} - K_L} Reach(T_q)] = Pr_{T_j}^{\sigma_{K_L}}[\bigcap_{q \in K_L} Reach(T_q)] \cdot Pr_{T_r}^{\sigma_{K'_{-i} - K_L}}[\bigcap_{q \in K'_{-i} - K_L} Reach(T_q)] = 1$.
		
	\item Suppose $T_i$ is of \textsf{Q}-form 
($T_i \xrightarrow{1} T_j \; T_r$) where there exists a proper 
split of the target non-terminals from set $K'$, implied by 
$K_L \subset K'$ (where $K_L \not= \emptyset$) and $K' - K_L$, 
such that $T_j \in F_{K_L} \wedge T_r \in F_{K' - K_L}$. 
Combining the two witness strategies from the induction 
assumption for non-terminals $T_j, T_r$ in the same way 
as in (iii), it follows that there exists a strategy 
$\sigma_{K'} \in \Psi$ such that property $(A)_{K'}^i$ holds.
\end{enumerate}

Now consider non-terminals $T_i$ added to set $D_{K'}$ at 
step II.2., i.e., the inductive step. If non-terminal $T_i$ 
is of \textsf{L}-form, then all rules, associated with it, 
generate a child of a non-terminal $T_j$ already in $D_{K'}$, 
for which $(A)_{K'}^j$ holds by the (nested) induction. 
Hence, $(A)_{K'}^i$ clearly also holds for 
the same reason as in (i) above.

If non-terminal $T_i$ is of \textsf{M}-form, then 
$\exists a^* \in \Gamma^i: T_i \xrightarrow{a^*} T_j, T_j \in D_{K'}$. 
Again let $h := T_i(u, T_j)$. 
By combining the witness strategy $\sigma_{K'}$ for 
non-terminal $T_j$ (from the nested induction assumption), 
as if the play starts in it, 
with the initial local choice of choosing deterministically 
action $a^*$ starting at a non-terminal $T_i$, we obtain an 
augmented strategy $\sigma_{K'}$ for a starting non-terminal 
$T_i$ such that $Pr_{T_i}^{\sigma_{K'}}[\bigcap_{q \in K'} 
Reach(T_q)] = Pr_h^{\sigma_{K'}}[\bigcap_{q \in K'} Reach(T_q)] = 
Pr_{T_j}^{\sigma_{K'}}[\bigcap_{q \in K'} Reach(T_q)] = 1$.

If $T_i$ is of \textsf{Q}-form ($T_i \xrightarrow{1} T_j \; T_r$), 
then either $T_j \in D_{K'}$ or $T_r \in D_{K'}$, i.e., 
$\exists \sigma_{K'} \in \Psi$ such that 
$Pr_{T_y}^{\sigma_{K'}}[\bigcap_{q \in K'} Reach(T_q)] = 1 \Leftrightarrow Pr_{T_y}^{\sigma_{K'}}[\bigcup_{q \in K'} Reach^{\complement}(T_q)] = 0$, where $y \in \{j, r\}$. 
Let $h_y := T_i(x, T_y)$ and 
$h_{\bar{y}} := T_i(\bar{x}, T_{\bar{y}})$, where 
$\bar{y} \in \{j, r\} - \{y\}$, $x \in \{l, r\}$ and 
$\bar{x} \in \{l, r\} - \{x\}$. By augmenting 
this $\sigma_{K'}$ to be used from the next generation 
from the child of non-terminal $T_y$ as if the play starts in it 
and using an arbitrary strategy from the child of non-terminal 
$T_{\bar{y}}$, it follows that 
$Pr_{T_i}^{\sigma_{K'}}[\bigcup_{q \in K'} 
Reach^{\complement}(T_q)] \le Pr_{h_y}^{\sigma_{K'}} 
[\bigcup_{q \in K'} Reach^{\complement}(T_q)] \cdot 
Pr_{h_{\bar{y}}}^{\sigma_{K'}}[\bigcup_{q \in K'} 
Reach^{\complement}(T_q)] \le Pr_{T_y}^{\sigma_{K'}} 
[\bigcup_{q \in K'} Reach^{\complement}(T_q)] = 0$, i.e., 
property $(A)_{K'}^i$ is satisfied.
\end{proof}

\medskip

Next we show that if $T_i \in S_{K'}$, then property $(B)_{K'}^i$ holds.
\begin{lemma}
	Every non-terminal $T_i \in S_{K'}$ satisfies property $(B)_{K'}^i$.
	\label{lemma:property_B-AS}
\end{lemma}

\begin{proof}
    This can be done again via another (nested) induction, 
based on the time a non-terminal is added to set $S_{K'}$. 
That is, assuming all non-terminals $T_j$, added already to 
set $S_{K'}$ in previous iterations and steps of the inner loop, 
satisfy property $(B)_{K'}^j$, then we show that for 
a new addition $T_i$ to set $S_{K'}$, property $(B)_{K'}^i$ 
is also satisfied.

Consider the initialized set $S_{K'}$ of non-terminals $T_i$ 
constructed at step II.4. 

If $T_i$ is of \textsf{L}-form, where 
$T_i \rightarrow \varnothing \vee T_i \rightarrow T_j,\; T_j \in Z_{K'}$, 
then with a positive probability non-terminal $T_i$ 
immediately either does not generate a child at all or generates 
a child of non-terminal $T_j \in Z_{K'}$, for which we already 
know that $(B)_{K'}^j$ holds. Clearly, this results in 
$(B)_{K'}^i$ being also satisfied.

If, for some subset $K'' \subset K'$, $T_i \in S_{K''}$, 
then $\forall \sigma \in \Psi:\; 
Pr_{T_i}^{\sigma}[\bigcup_{q \in K''} 
Reach^{\complement}(T_q)] > 0$ 
(i.e., property $(B)_{K''}^i$). 
But, $\forall \sigma \in \Psi:\; 
Pr_{T_i}^{\sigma}[\bigcup_{q \in K'} 
Reach^{\complement}(T_q)] \ge 
Pr_{T_i}^{\sigma}[\bigcup_{q \in K''} 
Reach^{\complement}(T_q)] > 0$, so property $(B)_{K'}^i$ 
also holds. Note that if, for some subset $K'' \subset K'$, 
$T_i \in Z_{K''}$, then $T_i \in Z_{K'}$ and so already 
$T_i \not\in X$.

And if $T_i$ is a target non-terminal in set $K'$, then due to not 
being added to set $D_{K'}$ in step II.1. 
it follows that: (1) if of \textsf{L}-form, it generates with a 
positive probability a child of a non-terminal 
$T_j \in S_{K'_{-i}} \cup Z_{K'_{-i}}$, for which 
$(B)^j_{K'_{-i}}$ holds; (2) if of \textsf{M}-form, 
irrespective of the strategy it generates a child of a non-terminal 
$T_j \in S_{K'_{-i}} \cup Z_{K'_{-i}}$, for which again 
$(B)^j_{K'_{-i}}$ holds; (3) and if of \textsf{Q}-form, 
it generates two children of non-terminals $T_j, T_r$, 
for which no matter how we split the rest of 
the target non-terminals in set $K'_{-i}$ 
(into subsets $K_L \subseteq K'_{-i}$ and $K'_{-i} - K_L$), either 
$(B)^j_{K_L}$ holds or $(B)^r_{K'_{-i} - K_L}$ holds. 
In other words, a target $T_i$ in the initial set $S_{K'}$ 
has no strategy to ensure that the rest of the 
target non-terminals are reached with probability 1 
(the reasoning behind this last statement is the same as 
the arguments in (i) - (iii) below, since for 
a starting (target) non-terminal $T_i$: 
$\forall \sigma \in \Psi:\; Pr_{T_i}^{\sigma} 
[\bigcap_{q \in K'} Reach(T_q)] = 
Pr_{T_i}^{\sigma}[\bigcap_{q \in K'_{-i}} Reach(T_q)]$).

Observe that by the end of step II.4. all target non-terminals 
$T_q, q \in K'$ belong either to set $D_{K'}$ or set $S_{K'}$. 
Now consider a non-terminal $T_i$ added to set $S_{K'}$ in 
step II.5. during some iteration of the inner loop.

\begin{enumerate}[label=(\roman*)]
	\item Suppose $T_i$ is of \textsf{L}-form. Then 
$T_i \rightarrow T_j,\; T_j \in S_{K'} \cup Z_{K'}$, where 
property $(B)_{K'}^j$ holds. So regardless of the strategy 
$\sigma$ for the player, there is a positive probability 
to generate a child of the above non-terminal $T_j$, where 
$Pr_{T_j}^{\sigma}[\bigcup_{q \in K'} 
Reach^{\complement}(T_q)] > 0$. Let $h := T_i(u, T_j)$. 
But note that, $\forall \sigma \in \Psi$: 
$Pr_{T_i}^{\sigma}[\bigcup_{q \in K'} 
Reach^{\complement}(T_q)] \ge p_{ij} \cdot 
Pr_h^{\sigma}[\bigcup_{q \in K'} Reach^{\complement}(T_q)] > 0$ 
if and only if $\forall \sigma \in \Psi:\; p_{ij} \cdot 
Pr_{T_j}^{\sigma}[\bigcup_{q \in K'} 
Reach^{\complement}(T_q)] > 0$, 
where $p_{ij} > 0$ is the probability of the rule 
$T_i \xrightarrow{p_{ij}} T_j$. And since the latter part 
of the statement holds, then 
the former (i.e., property $(B)_{K'}^i$) is satisfied.

	\item Suppose $T_i$ is of \textsf{M}-form. Then 
$\forall a \in \Gamma^i:\; T_i \xrightarrow{a} T_j,\; 
T_j \in S_{K'} \cup Z_{K'}$. So irrelevant of strategy 
$\sigma$ for the player, 
starting in a non-terminal $T_i$, the next generation 
surely consists of some non-terminal $T_j$ such that 
$\forall \sigma \in \Psi:\; 
Pr_{T_j}^{\sigma}[\bigcap_{q \in K'} Reach(T_q)] < 1$. 
Clearly $\forall \sigma \in \Psi:\; 
Pr_{T_i}^{\sigma}[\bigcap_{q \in K'} 
Reach(T_q)] \le \max_{\{T_j \in S_{K'} \cup Z_{K'}\}} 
Pr_{T_i(u, T_j)}^{\sigma}[\bigcap_{q \in K'} 
Reach(T_q)] < 1$ (i.e., property $(B)_{K'}^i$) 
if and only if $\forall \sigma \in \Psi:\; 
\max_{\{T_j \in S_{K'} \cup Z_{K'}\}} 
Pr_{T_j}^{\sigma}[\bigcap_{q \in K'} Reach(T_q)] < 1$, 
where the latter is satisfied.
	
	\item Suppose $T_i$ is of \textsf{Q}-form 
(i.e., $T_i \xrightarrow{1} T_j \; T_r$). 
Then $T_j, T_r \in S_{K'} \cup Z_{K'}$, i.e., 
both $(B)_{K'}^j$ and $(B)_{K'}^r$ are satisfied. We know that: 
\begin{enumerate}
    \setlength{\topsep}{0em}
    \item[1)] Neither of the two children can 
single-handedly reach all target non-terminals 
from set $K'$ with probability $1$. 
That is, for every $\sigma \in \Psi$, 
$Pr_{T_j}^{\sigma}[\bigcap_{q \in K'} Reach(T_q)] < 1$ and 
$Pr_{T_r}^{\sigma}[\bigcap_{q \in K'} Reach(T_q)] < 1$.

    \item[2)] Moreover, since $T_i$ was not added to 
set $D_{K'}$ in step II.1., then $\forall K_L \subset K'$ 
(where $K_L \not= \emptyset$) either $(B)_{K_L}^j$ holds 
(i.e., $T_j \not\in F_{K_L}$) or $(B)_{K' - K_L}^r$ holds 
(i.e., $T_r \not\in F_{K' - K_L}$), i.e., either 
$\forall \sigma \in \Psi:\; Pr_{T_j}^{\sigma}[\bigcap_{q \in K_L} Reach(T_q)] < 1$ or 
$\forall \sigma \in \Psi:\; Pr_{T_r}^{\sigma}[\bigcap_{q \in K' - K_L} Reach(T_q)] < 1$.
\end{enumerate}

\noindent Let $h_l := T_i(l, T_j)$ and 
$h_r := T_i(r, T_r)$. 
Notice that for any $\sigma \in \Psi$ and for any 
$q' \in K'$, $Pr_{T_i}^{\sigma}[\bigcup_{q \in K'} Reach^{\complement}(T_q)] \ge Pr_{T_i}^{\sigma}[Reach^{\complement}(T_{q'})] = Pr_{h_l}^{\sigma}[Reach^{\complement}(T_{q'})] \cdot Pr_{h_r}^{\sigma}[Reach^{\complement}(T_{q'})]$.

We claim that $\forall \sigma \in \Psi:\; \bigvee_{q \in K'} 
Pr_{T_j}^{\sigma}[Reach^{\complement}(T_q)] \cdot 
Pr_{T_r}^{\sigma}[Reach^{\complement}(T_q)] > 0$. 
But for any $q \in K'$ and for any $\sigma \in \Psi$ 
one can easily construct $\sigma' \in \Psi$ such that 
$Pr_{T_j}^{\sigma}[Reach^{\complement}(T_q)] = 
Pr_{h_l}^{\sigma'}[Reach^{\complement}(T_q)]$ and 
similarly for non-terminal $T_r$. 
So it follows from the claim that 
$\forall \sigma \in \Psi:\; \bigvee_{q \in K'} 
Pr_{h_l}^{\sigma}[Reach^{\complement}(T_q)] \cdot 
Pr_{h_r}^{\sigma}[Reach^{\complement}(T_q)] > 0$ and, 
therefore, it follows that $\forall \sigma \in \Psi:\; 
Pr_{T_i}^{\sigma}[\bigcup_{q \in K'} 
Reach^{\complement}(T_q)] > 0 \Leftrightarrow 
Pr_{T_i}^{\sigma}[\bigcap_{q \in K'} 
Reach(T_q)] < 1$.

Suppose the opposite, i.e., assume $(\mathcal{P})$ such that 
$\exists \sigma' \in \Psi:\; \bigwedge_{q \in K'} 
Pr_{T_j}^{\sigma'}[Reach^{\complement}(T_q)] \cdot 
Pr_{T_r}^{\sigma'}[Reach^{\complement}(T_q)] = 0$. 
Now for any $q \in K'$, by statement 2) above, we know 
that $T_j \not\in F_{\{q\}} \vee T_r \not\in F_{K'_{-q}}$ and 
$T_j \not\in F_{K'_{-q}} \vee T_r \not\in F_{\{q\}}$. 
First, suppose that in fact for some $q' \in K'$ it 
is the case that $T_j \not\in F_{\{q'\}} \wedge T_r \not\in 
F_{\{q'\}}$ (i.e., $T_j \in S_{\{q'\}} \cup Z_{\{q'\}} 
\wedge T_r \in S_{\{q'\}} \cup Z_{\{q'\}}$). 
That is, $\forall \sigma \in \Psi:\; 
Pr_{T_j}^{\sigma}[Reach^{\complement}(T_{q'})] > 0$ 
and $Pr_{T_r}^{\sigma}[Reach^{\complement}(T_{q'})] 
> 0$, where our claims follows directly 
(hence, contradiction to $(\mathcal{P})$). 
Second, suppose that for some $q' \in K'$ it is the case that 
$T_j \not\in F_{K'_{-q'}} \wedge T_r \not\in F_{K'_{-q'}}$ 
(i.e., $T_j \in S_{K'_{-q'}} \cup Z_{K'_{-q'}} \; \wedge \; 
T_r \in S_{K'_{-q'}} \cup Z_{K'_{-q'}}$). 
But then $T_i$ would have been added to set $S_{K'_{-q'}}$ 
at step II.5.(c) when constructing the answer for subset of 
targets $K'_{-q'}$. However, we already know that 
$T_i \in \bigcap_{K'' \subset K'} F_{K''}$ (follows from 
steps II.3 and II.4. that $T_i \not\in 
\bigcup_{K'' \subset K'} (S_{K''} \cup Z_{K''})$). 
Hence, again a contradiction.

Therefore, it follows that for every $q \in K'$, either 
$T_j \not\in F_{\{q\}} \wedge T_j \not\in F_{K'_{-q}}$ or 
$T_r \not\in F_{\{q\}} \wedge T_r \not\in F_{K'_{-q}}$. 
And in particular, the essential part is that 
$\forall q \in K'$, either $T_j \not\in F_{\{q\}}$ or 
$T_r \not\in F_{\{q\}}$. 
That is, for every $q \in K'$, either 
$\forall \sigma \in \Psi: \; 
Pr_{T_j}^{\sigma}[Reach^{\complement}(T_q)] > 0$, or 
$\forall \sigma \in \Psi:\; 
Pr_{T_r}^{\sigma}[Reach^{\complement}(T_q)] > 0$. 
But then, combined with assumption $(\mathcal{P})$, 
it actually follows that there exists a subset 
$K'' \subseteq K'$ such that 
$\exists \sigma' \in \Psi:\; \bigwedge_{q \in K''} 
Pr_{T_r}^{\sigma'}[Reach^{\complement}(T_q)] = 0 \; 
\wedge \; \bigwedge_{q \in K' - K''} Pr_{T_j}^{\sigma'} 
[Reach^{\complement}(T_q)] = 0$. 
And by Proposition \ref{prop:equiv}(1.), it follows that 
$\exists \sigma' \in \Psi:\; 
Pr_{T_r}^{\sigma'}[\bigcap_{q \in K''} Reach(T_q)] = 1 
\wedge Pr_{T_j}^{\sigma'}[\bigcap_{q \in K' - K''} Reach(T_q)] = 1$, i.e., 
$T_j \in F_{K' - K''} \wedge T_r \in F_{K''}$, 
contradicting the known facts 1) and 2). 
Hence, assumption $(\mathcal{P})$ is wrong and our claim 
is satisfied.
\end{enumerate}

Now consider any non-terminal $T_i$ that is added to 
set $S_{K'}$ in step II.10. at some iteration of 
the inner loop (i.e., $T_i \in Y_{K'} := X - (S_{K'} 
\cup F_{K'}) \subseteq \bar{Z}_{K'}$). Since non-terminal $T_i$ has not been 
previously added to sets $D_{K'}$, $S_{K'}$ or $F_{K'}$, 
then all of the following hold:

\begin{enumerate}[label=(\arabic*.)]
	\item $i \not\in K'$;
	
	\item if $T_i$ is of \textsf{L}-form, then a non-terminal 
$T_i$ generates with probability 1 a non-terminal which 
belongs to $Y_{K'}$ (otherwise $T_i$ would have been 
added to sets $S_{K'}$ or $F_{K'}$ in step II.4, II.5. 
or II.9., respectively);
	
	\item if $T_i$ is of \textsf{M}-form, then 
$\forall a \in \Gamma^i:\; T_i \xrightarrow{a} T_d,\; 
T_d \not\in F_{K'} \cup D_{K'}$ 
(otherwise $T_i$ would have been added to 
sets $D_{K'}$ or $F_{K'}$ in step II.2. or step II.9., respectively), and 
$\exists a' \in \Gamma^i:\; T_i \xrightarrow{a'} T_j,\; 
T_j \not\in S_{K'} \cup Z_{K'}$, i.e., $T_j \in Y_{K'}$ 
(otherwise $T_i$ would have been added to set $S_{K'}$ 
in step II.5.); and
	
	\item if $T_i$ is of \textsf{Q}-form 
($T_i \xrightarrow{1} T_j \; T_r$), then w.l.o.g. 
$T_j \in Y_{K'}$ and $T_r \in (Y_{K'} \cup S_{K'} \cup Z_{K'})$ 
(as $T_i$ has not been added to the other sets in 
steps II.2., II.5. or II.9.).
\end{enumerate}

Due to the statements (2.) - (4.) above, notice that the 
dependency graph $G$ does not contain outgoing edges 
from set $Y_{K'}$ to sets $D_{K'}$ and $F_{K'}$. So any SCC in subgraph $G[X - S_{K'}]$, that contains a node from set $Y_{K'}$, is in fact entirely contained in subgraph $G[Y_{K'}]$. 

Furthermore, one of the following is the reason for a 
\textsf{Q}-form non-terminal $T_i \in Y_{K'}$ not 
having been added to set $F_{K'}$ at 
the initialization step II.8.:
\begin{itemize}
	\item[(4.1.)] either $T_i$ does not belong to any of 
the sets $H_q, q \in K'$. So, from step II.7., 
$T_r \in Z_{\{q\}}$ for every $q \in K'$ 
(recall that w.l.o.g. $T_j \in Y_{K'} \subseteq \bar{Z}_{K'} 
\subseteq \bar{Z}_{\{q\}},\; \forall q \in K'$),
	
	\item[(4.2.)] or $T_i$ does belong to some 
set $H_{q'}, {q'} \in K'$, but if $T_i$ belongs to a 
strongly connected component $C'$ in $G[Y_{K'}]$, then 
$\exists q'' \in K'$ such that $H_{q''} \cap C' = \emptyset$.
\end{itemize}

We can treat the \textsf{Q}-form non-terminals 
with property (4.1.) as if they have only one child 
(namely the child of non-terminal $T_j$), since 
the other child (of non-terminal $T_r$) does not 
contribute to reaching, even with a positive probability, 
any of the target non-terminals from set $K'$.

We need to show that for every non-terminal 
$T_i \in Y_{K'}$ property $(B)_{K'}^i$ holds, i.e., 
$\forall \sigma \in \Psi:\; Pr_{T_i}^{\sigma} 
[\bigcap_{q \in K'} Reach(T_q)] < 1$. 

From standard algorithms about SCC-decomposition, it is 
known that there is an ordering of the SCCs in 
$G[Y_{K'}]$, where the bottom level in this ordering 
(level $0$) consists of bottom strongly connected components 
(BSCCs) that have no edges leaving the BSCC at all, 
and for further levels in the ordering of SCCs 
the following is true: SCCs or nodes not in any SCC, 
at level $t \ge 1$, have directed paths out of them 
leading to SCCs or nodes not in any SCC, 
at levels $< t$. We rank the SCCs and 
the independent nodes (not belonging to any SCC) in 
$G[Y_{K'}]$ according to this ordering, denoting 
by $Y_{K'}^t,\; t \ge 0$ the nodes (non-terminals) 
at levels up to and including $t$, and use 
the following induction on the level:
\begin{itemize}
	\item[--] For the base case: for any BSCC, $C$, at 
level $0$ (i.e., $C \subseteq Y_{K'}^0$), clearly 
for any non-terminal $T_i \in C$, 
$\exists \sigma \in \Psi:\; 
Pr_{T_i}^{\sigma}[\bigcap_{q \in K'} Reach(T_q)] = 1$ 
if and only if $H_q \cap C \not= \emptyset,\; \forall q \in K'$.
But, by property (4.2), there is no such component $C$ 
in $G[Y_{K'}]$ that contains a \textsf{Q}-form non-terminal 
from each of the sets $H_q,\; q \in K'$.

	\item[--] As for the inductive step, assume that 
for some $t \ge 1$ for any $T_v \in Y_{K'}^{t-1},\; 
\forall \sigma \in \Psi:\; 
Pr_{T_v}^{\sigma} [\bigcap_{q \in K'} Reach(T_q)] < 1$, 
i.e., $(B)_{K'}^v$ is satisfied. 
Let $\sigma$ be an arbitrary strategy fixed for the player. 
For a SCC, $C'$, at level $t \ge 1$, let $w$ 
denote the path (in the play), where $w$ begins 
at a starting non-terminal $T_i \in C'$ and evolves 
in the following way. If the current copy $o$ on the path $w$ 
is of a \textsf{L}-form or a \textsf{M}-form 
non-terminal $T_j \in C'$, then $w$ follows along 
the unique successor of $o$ in the play. 
And if the current copy $o$ on path $w$ is of 
a \textsf{Q}-form non-terminal $T_j \in C'$ 
($T_j \xrightarrow{1} T_{j'} \; T_r$, where w.l.o.g. 
$T_{j'} \in C'$), then $w$ follows along the child 
of non-terminal $T_{j'}$. 
(Note that $T_r \not\in C'$, 
since we already know from (4.) that 
$T_{j'} \in Y_{K'} \subseteq \bar{Z}_{K'} \subseteq \bar{Z}_{\{q\}},\; 
\forall q \in K'$, and so if $T_r \in C' \subseteq Y_{K'}$ 
then property (4.2.) will be contradicted.) 
If the current copy $o$ on path $w$ is of a non-terminal 
not belonging in $C'$, then the path $w$ terminates. 
Denote by $\square C'$ the event that path $w$ is infinite, 
i.e., all non-terminals observed along path $w$ are in $C'$ 
and path $w$ never leaves $C'$ and never terminates. 
Then for any starting non-terminal $T_i \in C'$: 
	\begin{align*}
		& Pr_{T_i}^{\sigma} \Big[ \bigcap_{q \in K'} 
Reach(T_q) \Big] = Pr_{T_i}^{\sigma} 
\Big[ \Big( \bigcap_{q \in K'} Reach(T_q) \Big) \cap 
\square C' \Big] \\
		& + Pr_{T_i}^{\sigma} \Big[ \Big( \bigcap_{q \in K'} 
Reach(T_q) \Big) \cap \neg \square C' \Big] = 
Pr_{T_i}^{\sigma} \Big[ \Big( \bigcap_{q \in K'} 
Reach(T_q) \Big) \cap \neg \square C' \Big] 
	\end{align*}
Observe that $Pr_{T_i}^{\sigma}[(\bigcap_{q \in K'} 
Reach(T_q)) \cap \square C'] = 0$, due to 
statements (1.) and (4.2.). 

By property (3.) and also due to the ranking of SCCs and 
nodes in $G[Y_{K'}]$, if path $w$ terminates, then it does in 
a non-terminal $T_v \in S_{K'} \cup Z_{K'} \cup Y_{K'}^{t-1}$. 
Also due to properties (1.) - (4.) and (4.2.), 
in the case of event $\neg \square C'$ occurring, 
all the targets in set $K'$ are reached with probability $1$, 
starting in $T_i \in C'$, if and only if they are all reached 
with probability $1$, starting from such a non-terminal 
$T_v \in S_{K'} \cup Z_{K'} \cup Y_{K'}^{t-1}$.
To see this, note that for any of 
the \textsf{Q}-form non-terminals 
$T_j \in C'$ ($T_j \xrightarrow{1} T_{j'} \; T_r$, 
where w.l.o.g. $T_{j'} \in C'$), 
$T_r \not\in F_{\{q\}}$ for any $q \in K'$ (otherwise, 
if $T_r \in F_{\{q'\}}$ for some $q' \in K'$, then 
since $T_j$ was not added to set $D_{K'}$ at step II.1., 
it follows that $T_{j'} \not\in F_{K'_{-q'}}$, i.e., 
$T_{j'} \in S_{K'_{-q'}} \cup Z_{K'_{-q'}}$, and hence 
$T_{j'} \in S_{K'} \cup Z_{K'}$, which contradicts 
that $T_{j'} \in C' \subseteq Y_{K'}$).
So none of the targets in set $K'$ is reached 
with probability $1$ (but it is possible with a positive 
probability) from a non-terminal spawned off 
of the path $w$. 

It follows that for a starting non-terminal $T_i \in C'$:
    \begin{align*}
        & \exists \sigma' \in \Psi:\; Pr_{T_i}^{\sigma'} 
\Big[ \Big( \bigcap_{q \in K'} Reach(T_q) \Big) \cap 
\neg \square C' \Big]  = 1 
	       \hspace*{0.1in} \text{if and only if} \\ 
	    & \exists \sigma'' \in \Psi:\; 
\max_{\langle T_v \in S_{K'} \cup Z_{K'} \cup Y_{K'}^{t-1} \; \mid \;
\exists T_j \in C', b \in \Gamma^j:\; 
T_j \xrightarrow{b} T_v \rangle} 
Pr_{T_v}^{\sigma''} \Big[ \bigcap_{q \in K'} 
Reach(T_q) \Big] = 1
    \end{align*}
The right-hand side of this statement is clearly 
not satisfied since we already know that  
$T_v \in S_{K'} \cup Z_{K'} \cup Y_{K'}^{t-1}$ 
satisfy property $(B)_{K'}^v$. 

So it follows that 
$\forall \sigma' \in \Psi:\; 
Pr_{T_i}^{\sigma'} [\bigcap_{q \in K'} Reach(T_q)] = 
Pr_{T_i}^{\sigma'} 
[(\bigcap_{q \in K'} Reach(T_q)) \cap \neg \square C'] < 1$.

As for nodes (non-terminals) $T_i \in Y_{K'}^t$ at level $t$, 
that do not belong to any SCC, using a similar argument, 
$\forall \sigma \in \Psi:\; 
Pr_{T_i}^{\sigma}[\bigcap_{q \in K'} Reach(T_q)] < 1$. 
\end{itemize}
By this inductive argument, 
it follows that for any non-terminal $T_i \in Y_{K'}$ and 
for any strategy $\sigma \in \Psi$: 
$Pr_{T_i}^{\sigma}[\bigcap_{q \in K'} Reach(T_q)] < 1$.
\end{proof}

\medskip

Now we show that for non-terminals $T_i \in F_{K'}$, 
when the inner loop for subset $K' \subseteq K$ terminates, 
the property $(A)_{K'}^i$ is satisfied. We will also 
construct a witness strategy, under which this property 
holds for each non-terminal $T_i \in F_{K'}$. 
Since we have already proved it for non-terminals 
in set $D_{K'}$, in the following Lemma we refer to 
the part of set $F_{K'}$ not containing set $D_{K'}$, 
i.e., to set $F_{K'} = X - S_{K'}$.

\begin{lemma}
	Every non-terminal $T_i \in F_{K'}$ satisfies 
the property $(A)_{K'}^i$.
	\label{lemma:property_A-AS}
\end{lemma}

\begin{proof}
    For the rest of this proof denote by $F_{K'}^0$ 
the initialized set at step II.8. Let us first observe 
the properties for the non-terminals $T_i \in F_{K'} = X - S_{K'}$. 
None of the non-terminals is a target non-terminal from 
set $K'$, i.e., $i \not\in K'$. 
If $T_i$ is of \textsf{L}-form, then:
	\begin{enumerate}
		\setlength{\itemsep}{0em}
		\item[(L)] a non-terminal $T_i$ generates with 
probability 1 as offspring some non-terminal belonging either 
to set $F_{K'}$ or to set $D_{K'}$.
	\end{enumerate}
If $T_i$ is of \textsf{M}-form, then 
$\forall a \in \Gamma^i: T_i \xrightarrow{a} T_d,\; 
T_d \not\in D_{K'}$ and:
	\begin{enumerate}
		\setlength{\itemsep}{0em}
		\item[(M)] $\exists a^* \in \Gamma^i: 
		    T_i \xrightarrow{a^*} T_j,\; T_j \in F_{K'}$.
	\end{enumerate}
If $T_i$ is of \textsf{Q}-form (i.e., $T_i \xrightarrow{1} T_j \; T_r$), 
then $T_j, T_r \not\in D_{K'}$ and:
	\begin{enumerate}
		\setlength\itemsep{0em}
		\item[(Q.0)] if $T_i \in F_{K'}^0$, 
		$\exists q \in K'$ such that w.l.o.g. $T_j \in F_{K'} 
		\wedge T_r \in \bar{Z}_{\{q\}}$,
				
		\item[(Q.1)] otherwise, w.l.o.g. $T_{j} \in F_{K'}$.
	\end{enumerate}

\medskip

\begin{enumerate}[label=($\mathfrak{P}$)]
	\item Let us recall that for every $q \in K'$, 
there is a deterministic static strategy $\sigma'_{\{q\}}$ 
for the player and a value $b_{\{q\}} > 0$ such that, 
starting at a non-terminal $T_r \in \bar{Z}_{\{q\}}$, 
$Pr_{T_r}^{\sigma'_{\{q\}}}[Reach(T_q)] \ge b_{\{q\}}$. 
Let $b := \min_{q \in K'} \{b_{\{q\}}\} > 0$.
\end{enumerate}

We construct now the non-static strategy $\sigma_{K'}$ 
for the player in the following way. In each generation, 
there is going to be one non-terminal in the generation 
that is declared to be a ``queen" and the rest of 
the non-terminals in the generation are called ``workers" 
(we will see the difference between the two labels, 
especially in the choices of actions). Suppose 
the initial population is a non-terminal $T_v \in F_{K'}$, 
declared to be the initial queen. 

Consider any finite ancestor history $h$, that starts at 
the initial non-terminal $T_v \in F_{K'}$, and let 
$o$ denote the non-terminal copy at the end of the ancestor history 
$h$. If $o$ is a queen of \textsf{L}-form non-terminal $T_i$, then 
from statement (L) we know that in the next generation 
the single generated successor child $o'$ is of some non-terminal 
$T_j \in F_{K'} \cup D_{K'}$. If $T_j \in D_{K'}$, 
then we use at $o'$ and its subtree of descendants 
the randomized non-static witness strategy 
from property $(A)_{K'}^j$ as if 
the play is starting in $o'$. If $T_j \in F_{K'}$, 
then we label $o'$ as the queen in the next generation 
and use the same strategy $\sigma_{K'}$ 
(that is currently being described) at it. 
If $o$ is a queen of \textsf{M}-form non-terminal $T_i$, 
then $\sigma_{K'}$ chooses at $o$ 
uniformly at random among actions $a^*$ from statement (M) and, 
hence, in the next generation a single child $o'$ of 
some non-terminal $T_j \in F_{K'}$ will be generated. 
Again $o'$ is declared to be the queen in the next generation 
and the same strategy $\sigma_{K'}$ (currently being described) 
is used at it. 
If $o$ is a queen of \textsf{Q}-form non-terminal $T_i$ 
(i.e., $T_i \xrightarrow{1} T_j \; T_r$), then there 
are two cases for the two successor children $o'$ and $o''$ 
of non-terminals $T_j$ and $T_r$, respectively:
\begin{itemize}
	\setlength{\itemsep}{0em}
	\item \textbf{either} property (Q.0) is satisfied, 
i.e., $T_i \in F_{K'}^0$, and 
$T_j \in F_{K'} \wedge T_r \in \bar{Z}_{\{q\}}$, for some 
target $q \in K'$. Then, in the next generation, 
we declare $o'$ to be the queen and use the currently 
described strategy $\sigma_{K'}$ for it. 
As for the child $o''$, it is declared to be a worker 
and the strategy used at the entire subtree, 
rooted at $o''$, is some strategy $\sigma'_{\{q'\}}$ 
(from statement ($\mathfrak{P}$)), 
where $q' \in K'$ is chosen uniformly at random among all 
targets $q \in K$ such that $T_r \in \bar{Z}_{\{q\}}$. 
The randomization in the strategy 
of the worker is needed, since non-terminal $T_i$ can belong to 
more than one set $H_q$, i.e., $T_r$ can belong to more than 
one set $\bar{Z}_{\{q\}}$.
	
	\item \textbf{or} property (Q.1) is satisfied, i.e., 
$T_i \in F_{K'} - F_{K'}^0$ and 
w.l.o.g. $T_j \in F_{K'}$. Then, in the next generation, 
the child $o'$ is again declared to be the queen and 
the same strategy $\sigma_{K'}$ is used for it, 
whereas the child $o''$ is again labelled 
as a worker, but the strategy for it is irrelevant and so 
an arbitrary one is chosen for its entire subtree of descendants.
\end{itemize}
That concludes the description of strategy $\sigma_{K'}$. 
Now we need to show that, indeed, the randomized non-static 
strategy $\sigma_{K'}$ is an almost-sure strategy for the player, 
i.e., that for any $T_i \in F_{K'}:\; 
Pr_{T_i}^{\sigma_{K'}}[\bigcap_{q \in K'} Reach(T_q)] = 1$.

As previously stated, $F_{K'}^0$ is the initial set $F_{K'}$ 
at step II.8. Also let $T_{x_1}, T_{x_2}, \ldots, T_{x_t}$ 
be the non-terminals in set $F_{K'} - F_{K'}^0$ 
indexed with respect to the time 
at which they were added to set $F_{K'}$ at step II.9. 
Let $\gamma := \max_{i \in [n]} |\Gamma^i|$ and 
let $\lambda$ be the minimum of $\frac{1}{\gamma}$ and 
the minimum rule probability in the OBMDP. 
Consider the sequence of queens. We claim that with a 
positive probability $\ge \lambda^n$ 
in the next $n = |V|$ generations we reach a 
\textsf{Q}-form queen of a (specific) non-terminal in 
set $F_{K'}^0$. To show this, we define, for each non-terminal 
$T_i \in F_{K'}$, a finite ``auxiliary" tree $\mathcal{T}_i$, 
rooted at $T_i$, which represents why $T_i$ was added 
to set $F_{K'}$ (i.e., based on steps II.8. and II.9. 
in the last iteration before step II.11. terminates 
the inner loop). If $T_i \in F_{K'}^0$, then 
the tree $\mathcal{T}_i$ is constructed of just a 
single node (leaf) labelled by $T_i$. If $T_i$ is of 
\textsf{L}-form, added at step II.9., then 
$T_i \rightarrow T_j,\; T_j \in F_{K'}$ 
(otherwise $T_i$ would have been added to set $D_{K'}$) 
and the tree $\mathcal{T}_i$ has an edge from its root 
(labelled by $T_i$) to a child labelled by $T_j$ 
(the root of the subtree $\mathcal{T}_j$), for each such 
$T_j \in F_{K'}$. If $T_i$ is of \textsf{M}-form, 
added at step II.9., then the tree $\mathcal{T}_i$ 
has an edge from its root (labelled by $T_i$) to a child 
labelled by $T_j$ (the root of the subtree $\mathcal{T}_j$), 
for every $T_j$ such that 
$\exists a^* \in \Gamma^i:\; T_i \xrightarrow{a^*} T_j,\; 
T_j \in F_{K'}$. And if $T_i$ is of \textsf{Q}-form, 
added at step II.9., then the tree $\mathcal{T}_i$ has an 
edge from its root (labelled by $T_i$) to a child 
labelled by $T_j$ (from property (Q.1)), which is 
the root of the subtree $\mathcal{T}_j$.

The ``auxiliary'' tree, just defined, has depth of at most $n$, 
since there is a strict order in which the non-terminals 
entered set $F_{K'}$. Now observe that, if we consider 
any generation of the play, assuming that the current queen 
(in this generation) is of some non-terminal $T_i \in F_{K'}$, 
it can be inductively shown that with a positive probability 
(at least $\lambda^n$) in at most $n$ generations 
the sequence of queens follows a \textit{specific} 
root-to-leaf path in $\mathcal{T}_i$. That is because if 
we are at a queen of a \textsf{L}-form non-terminal $T_j$ 
(respectively, in node labelled by $T_j$, which is 
the root of tree $\mathcal{T}_j$), then in 
the next generation with probability $\ge \lambda$ 
the queen is of non-terminal $T_{j'} \in F_{K'}$, which 
is a child of the root of $\mathcal{T}_j$ and is also 
itself the root of $\mathcal{T}_{j'}$. And if we are 
at a queen of a \textsf{M}-form non-terminal $T_j$, then 
in the next generation (due to the fixed strategy $\sigma_{K'}$) 
with probability $\ge 1 / |\Gamma^j| \ge 1/\gamma \ge \lambda$ 
the successor queen is of a non-terminal $T_{j_a} \in F_{K'}$, 
which is a child of the root of $\mathcal{T}_j$ and is also 
the root of $\mathcal{T}_{j_a}$. And if we are at a queen 
of a \textsf{Q}-form non-terminal $T_j$, which is not a leaf 
in this ``auxiliary" tree, then in the next generation 
with probability $1$ the queen is of a non-terminal $T_{j'}$, 
which is the root of $\mathcal{T}_{j'}$ and the unique child 
of the root of $\mathcal{T}_j$. Since the depth of 
the ``auxiliary" defined tree is at most $n$, then 
with probability $\ge \lambda^n$, from a current queen 
of some non-terminal $T_i \in F_{K'}$, in the next $\le n$ 
steps we arrive at a \textit{specific} leaf $T_v$ of 
the tree $\mathcal{T}_i$, i.e., 
a queen of non-terminal $T_v \in F_{K'}^0$ is generated.

If somewhere along the sequence of queens, a queen of a 
\textsf{L}-form non-terminal happens to generate a 
non-terminal in $D_{K'}$, then the sequence of queens 
is actually finite. Therefore, if the sequence of queens 
is infinite, since it has to follow root-to-leaf paths 
in the defined ``auxiliary" tree, then it follows 
that with probability $1$ infinitely often a queen of 
a \textsf{Q}-form non-terminal in set $F_{K'}^0$ is observed.

Now consider any $q \in K'$ and any \textsf{Q}-form 
non-terminal $T_u \in F_{K'}^0 \cap H_q$. Since in the 
subgraph of the dependency graph, induced by 
$X - S_{K'} = F_{K'}$ (i.e., $G[F_{K'}]$), node $T_u$ 
is part of a SCC that contains at least one node 
(non-terminal) from each set $H_{q'}, q' \in K'$, then, 
along the sequence of queens, from a queen of non-terminal 
$T_u$, for any $q' \in K'$ there is a non-terminal 
$T_{u'} \in F_{K'}^0 \cap H_{q'}$ that can be reached 
as a queen, under the described strategy $\sigma_{K'}$, 
in at most $n$ generations with a positive probability 
bounded away from zero (in fact, at least $\lambda^n$). 
Note: There is a positive probability, under 
strategy $\sigma_{K'}$, to exit the particular SCC of 
$T_u$. However, %
under $\sigma_{K'}$ and starting at any non-terminal 
$T_i \in F_{K'}$, almost-surely the sequence of queens 
eventually reaches a queen whose non-terminal is in a 
SCC, $C''$, in $G[F_{K'}]$ which can only have an outgoing edge 
to set $D_{K'}$ and where, moreover, for each target in $K'$ 
there is a branching (\textsf{Q}-form) node in $C''$ whose 
``extra'' child can hit that target with a positive probability 
(bounded away from zero).

Hence, starting at a non-terminal $T_i \in F_{K'}$ and 
under strategy $\sigma_{K'}$, the sequence of queens 
follows root-to-leaf paths in the defined ``auxiliary" tree 
and, for each $q \in K'$, infinitely often a queen 
of a \textsf{Q}-form non-terminal from set $H_{q}$ 
is observed. And each 
such queen generates an independent worker, that reaches 
the respective target non-terminal $T_q$ in a finite number 
of generations with a positive probability bounded away 
from zero (due to the uniformly at random choice over 
strategies from statement ($\mathfrak{P}$), for each worker, 
and due to the fact that the value $b > 0$ from statement 
($\mathfrak{P}$) does not depend on history or time when 
the worker is generated). 
And, more importantly, since the 
queens of \textsf{Q}-form non-terminals from the sets 
$H_q, q \in K'$ form SCCs in $G[F_{K'}]$, then 
collectively the independent workers 
(under their respective strategies) have infinitely often 
a positive probability bounded away from zero to reach all 
target non-terminals from set $K'$ in a finite number of 
generations. Hence, all target non-terminals from set $K'$ 
are reached with probability $1$.
\end{proof}

\medskip
This completes the proof of Theorem \ref{theorem:QualMoR-AS-Reach} 
and the analysis of the almost-sure algorithm. 
The proof of Lemma \ref{lemma:property_A-AS} describes how to construct, 
for any subset $K' \subseteq K$, 
the witness strategy $\sigma_{K'}$ for the non-terminals 
in set $F_{K'}$. 
These non-static strategies $\sigma_{K'}$ are described 
as functions that map finite ancestor histories belonging 
to the controller to distributions over actions for the current 
non-terminal of the ancestor history, and 
can be described in such a form 
in time $4^k \cdot |\mathcal{A}|^{O(1)}$.
\end{proof}

\section{Further cases of qualitative multi-objective (non-)reachability}

In this section we present algorithms for deciding some other cases 
of qualitative multi-objective problems for 
OBMDPs,
involving certain
kinds of boolean combinations of qualitative 
reachability and non-reachability queries with respect to 
given target non-terminals.

\subsection{$\stackrel{?}{\exists} \sigma \in \Psi:\; 
\bigwedge_{q \in K} Pr_{T_i}^{\sigma}[Reach(T_q)] < 1$}

\vspace*{0.1in}

\begin{proposition}
There is an algorithm that, given an OBMDP, $\mathcal{A}$, and a set $K \subseteq [n]$ of 
$k = |K|$ target non-terminals, 
computes the set 
$F := \{T_i \in V \mid \exists \sigma \in \Psi:\; 
\bigwedge_{q \in K} Pr_{T_i}^{\sigma}[Reach(T_q)] < 1\}$. 
The algorithm runs in time $k \cdot |\mathcal{A}|^{O(1)}$ and 
can also compute a randomized static witness strategy $\sigma$ 
for the non-terminals in set $F$.
	\label{prop:conjunction_of_<1_reach}
\end{proposition}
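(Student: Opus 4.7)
The plan is to establish the characterization
$$F \;=\; \bigcap_{q \in K} A_q, \qquad \text{where} \qquad A_q := \{T_j \in V \mid \exists\, \sigma \in \Psi :\; Pr^{\sigma}_{T_j}[Reach(T_q)] < 1\}.$$
The inclusion $F \subseteq \bigcap_q A_q$ is immediate, since any single witness $\sigma$ for $T_i \in F$ automatically witnesses $T_i \in A_q$ for every $q \in K$. Each $A_q$ is the complement of the set $N_q := \{T_j \mid \forall\, \sigma:\; Pr^{\sigma}_{T_j}[Reach(T_q)] = 1\}$ of non-terminals from which $T_q$ is unavoidable; this is a single-target qualitative question, solvable in polynomial time by a greatest-fixpoint iteration in the style of the algorithm of Section 3, and positive-probability avoidance witnesses from each $T_j \in A_q$ can be chosen deterministic and static via the single-target tools of \cite{ESY-icalp15-IC}. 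Running these $k$ polynomial-time sub-computations and intersecting their outputs yields the announced time bound $k \cdot |\mathcal{A}|^{O(1)}$.

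For the nontrivial inclusion $F \supseteq \bigcap_q A_q$, I will combine the per-target witnesses by randomization. Given deterministic static $\sigma_q$ with $p_q := Pr^{\sigma_q}_{T_i}[Reach(T_q)] < 1$ for every $q \in K$, the plan is to use a strategy $\pi$ that behaves as though a single target $q \in K$ were drawn uniformly at the start of the play and $\sigma_q$ were then followed throughout the entire generated tree. The induced play distribution is the convex combination $\frac{1}{k}\sum_{q' \in K} Pr^{\sigma_{q'}}$, which gives, for every $q \in K$,
$$Pr^{\pi}_{T_i}[Reach(T_q)] \;=\; \frac{1}{k}\sum_{q' \in K} Pr^{\sigma_{q'}}_{T_i}[Reach(T_q)] \;\le\; \frac{k-1}{k} + \frac{p_q}{k} \;=\; 1 - \frac{1 - p_q}{k} \;<\; 1,$$
which is exactly the property defining $T_i \in F$.

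The main obstacle is implementing the ``pick $q$ once and commit globally'' intuition as a legitimate OBMDP strategy, since a strategy is formally a function of each node's ancestor history and not a global seed. The plan is to define $\pi$ on every ancestor history $h$ by the Bayesian-update rule $\pi(h)(a) := \sum_{q \in K} Pr[q \mid h] \cdot \sigma_q(h)(a)$, where $Pr[q \mid h]$ is the posterior on the initial uniform draw of $q$ given the actions recorded in $h$; an induction on the length of $h$ then shows that the play distribution under $\pi$ coincides with the desired convex combination, and the bound above follows immediately. Because each $\sigma_q$ is deterministic and static, the posterior $Pr[q \mid h]$ collapses to a point mass as soon as the actions in $h$ distinguish among the $\sigma_q$'s, so the resulting randomized witness $\pi$ admits a compact description matching the promised $k \cdot |\mathcal{A}|^{O(1)}$ size bound.
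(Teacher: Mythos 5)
Your route is the same as the paper's: the same characterization $F=\bigcap_{q\in K}A_q$ (the paper calls these sets $W_q$ and computes them via \cite[Theorem 9.3]{ESY-icalp15-IC}), the same easy inclusion, and the same plan of witnessing the hard inclusion by a uniform randomization over the $k$ deterministic static per-target witnesses, giving the bound $1-\frac{1-p_q}{k}<1$. The difference is that you try to justify the step the paper disposes of in one sentence --- that ``draw $q$ uniformly once and follow $\sigma_q$ throughout'' can be realized as a legitimate strategy whose play distribution is the convex combination $\frac{1}{k}\sum_{q'}Pr^{\sigma_{q'}}_{T_i}$ --- and this is exactly where the argument has a genuine gap. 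Your Bayesian-update strategy $\pi(h)(a)=\sum_{q}Pr[q\mid h]\cdot\sigma_q(h)(a)$ reproduces the mixture only along individual root-to-node paths: the induction on $|h|$ controls the marginal law of a single ancestor history, but $Reach(T_q)$ is an event about the entire generated tree, and two incomparable nodes see different ancestor histories and hence sample their actions independently. The global coin choosing $q$ cannot be shared across siblings under this paper's notion of strategy, so the tree distribution under $\pi$ is not the claimed convex combination.

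Concretely, take $K=\{1,2\}$ with rules $M\xrightarrow{a}Q_1$, $M\xrightarrow{b}Q_2$, $Q_1\xrightarrow{1}R_1\,M$, $Q_2\xrightarrow{1}R_2\,M$, $R_1\xrightarrow{1}\varnothing$, $R_2\xrightarrow{1}\varnothing$, and start from $Q$ with $Q\xrightarrow{1}M\,Q$. The play surely contains infinitely many pairwise-incomparable $M$-spines (one per level of the $Q$-spine); avoiding $R_1$ means every $M$ everywhere plays $b$, and avoiding $R_2$ means every $M$ plays $a$. The strategies ``always $b$'' and ``always $a$'' witness $Q\in A_1\cap A_2$, but under your $\pi$ each spine independently commits to all-$a$ or all-$b$ with probability $\frac12$, so $Pr^{\pi}_{Q}[Reach(R_1)]=1-\prod_{i\ge 1}\frac12=1$. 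In fact, for \emph{any} strategy $\sigma$ the first $M$ of each spine is sampled independently with some $\alpha_i=\sigma(h_{M^{(i)}})(a)$, so $Pr^{\sigma}_Q[Reach^{\complement}(R_2)]\le\prod_i\alpha_i$ and $Pr^{\sigma}_Q[Reach^{\complement}(R_1)]\le\prod_i(1-\alpha_i)$, and these cannot both be positive since $\sum_i\alpha_i$ and $\sum_i(1-\alpha_i)$ cannot both converge. So on this instance the inclusion $\bigcap_q A_q\subseteq F$ itself fails under the ancestor-history strategy semantics; the paper's own proof makes the same unjustified leap (it calls the mixture a ``randomized static strategy'' and asserts the convex-combination identity), so the obstacle you isolated is real and is not repaired by posterior updating --- it would require either a strategy model in which the random choice of $q$ can be shared across branches, or a different argument altogether.
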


\begin{proof}
	First, as a preprocessing step, for each $q \in K$ we 
compute the set $W_q := \{T_i \in V \mid \exists \sigma_q \in \Psi:\; 
Pr_{T_i}^{\sigma_q}[Reach(T_q)] < 1\}$, together with a 
single deterministic static strategy $\sigma_q$ that witnesses 
the property for every non-terminal in set $W_q$. 
This can be done in time 
$k \cdot |\mathcal{A}|^{O(1)}$, using the algorithm from 
\cite[Theorem 9.3]{ESY-icalp15-IC}.

Then the Proposition is a direct consequence from the 
following Claim.

\begin{claim}
	$F = \bigcap_{q \in K} W_q$.
\end{claim}

\begin{proof}
	We need to show that $T_i \in \bigcap_{q \in K} W_q$ if 
and only if $\exists \sigma' \in \Psi:\; \bigwedge_{q \in K} 
Pr_{T_i}^{\sigma'}[Reach(T_q)] < 1$.

\noindent ($\Leftarrow$.) Suppose $T_i \not\in \bigcap_{q \in K} W_q$, 
i.e., $T_i \in \bigcup_{q \in K} \overline{W}_q$, where 
$\overline{W}_q := V - W_q$ for each $q \in K$. 
Then there exists some $q' \in K$ such that $T_i \in \overline{W}_{q'}$, 
i.e., $\forall \sigma \in \Psi:\; Pr_{T_i}^{\sigma}[Reach(T_{q'})] 
= 1$. Clearly, this implies that $\forall \sigma \in \Psi:\; 
\bigvee_{q \in K} Pr_{T_i}^{\sigma}[Reach(T_q)] = 1$.

\medskip

\noindent ($\Rightarrow$.) Suppose that $T_i \in \bigcap_{q \in K} W_q$. 
Recall that for each $q \in K$ 
there is a deterministic static witness strategy $\sigma_q$ for 
the non-terminals in set $W_q$. Let $\sigma'$ be a randomized static 
strategy for the player defined as follows: 
$\sigma'$ chooses uniformly at random a target $q \in K$ and 
copies exactly the deterministic static strategy $\sigma_q$. 
Then, for each target $T_q,\; q \in K$, under $\sigma'$ and 
starting at a non-terminal $T_i \in \bigcap_{q \in K} W_q$:
\begin{align*}
	& Pr_{T_i}^{\sigma'}[Reach(T_q)] = \sum_{q' \in K} \frac{1}{k} \cdot Pr_{T_i}^{\sigma_{q'}}[Reach(T_q)] = \\
	& \frac{1}{k} \cdot Pr_{T_i}^{\sigma_q}[Reach(T_q)] + \frac{1}{k} \sum_{q' \in K, q' \not= q} Pr_{T_i}^{\sigma_{q'}}[Reach(T_q)] < \frac{1}{k} + \frac{k - 1}{k} = 1 
\end{align*}
\end{proof}

The randomized static witness strategy $\sigma$ for 
the non-terminals in set $F$ is precisely the $\sigma'$ 
constructed in the proof of the Claim above.
\end{proof}

\subsection{$\stackrel{?}{\exists} \sigma \in \Psi:\; 
Pr_{T_i}^{\sigma}[\bigcap_{q \in K} Reach(T_q)] < 1$}

\vspace*{0.1in}

\begin{proposition}
There is an algorithm that,
given an OBMDP, $\mathcal{A}$, and a set $K \subseteq [n]$ of 
$k = |K|$ target non-terminals, computes the set 
$F := \{T_i \in V \mid \exists \sigma \in \Psi:\; 
Pr_{T_i}^{\sigma}[\bigcap_{q \in K} Reach(T_q)] < 1\}$. 
The algorithm runs in time $k \cdot |\mathcal{A}|^{O(1)}$ and can 
also compute a deterministic static witness strategy $\sigma$ 
for a given starting non-terminal $T_i \in F$.
	\label{prop:interesection_of_reach_<1}
\end{proposition}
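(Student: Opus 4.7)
The plan is to reduce this to $k$ independent single-target problems, in direct analogy with Proposition \ref{prop:conjunction_of_<1_reach}, but using the equivalence from Proposition \ref{prop:equiv}(2.) in place of a union-bound argument.

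First, I would observe that by Proposition \ref{prop:equiv}(2.), for any fixed strategy $\sigma$ and starting non-terminal $T_i$, we have $Pr_{T_i}^{\sigma}[\bigcap_{q \in K} Reach(T_q)] < 1$ if and only if $\bigvee_{q \in K} Pr_{T_i}^{\sigma}[Reach(T_q)] < 1$. Consequently, quantifying over strategies on both sides, $T_i \in F$ if and only if there exists some $q \in K$ and some strategy $\sigma_q$ with $Pr_{T_i}^{\sigma_q}[Reach(T_q)] < 1$. In other words, $F = \bigcup_{q \in K} W_q$, where $W_q := \{T_j \in V \mid \exists \sigma \in \Psi:\; Pr_{T_j}^{\sigma}[Reach(T_q)] < 1\}$.

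Next, I would invoke the single-target algorithm from \cite[Theorem 9.3]{ESY-icalp15-IC}, applied once for each of the $k$ targets $T_q,\; q \in K$. This yields, in time $k \cdot |\mathcal{A}|^{O(1)}$, each set $W_q$ together with a deterministic static witness strategy $\sigma_q$ such that $Pr_{T_j}^{\sigma_q}[Reach(T_q)] < 1$ for every $T_j \in W_q$. Taking the union $F = \bigcup_{q \in K} W_q$ then completes the computation within the claimed time bound.

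For the witness strategy at a given $T_i \in F$: pick any $q \in K$ with $T_i \in W_q$ (such a $q$ must exist by the characterization of $F$) and output the deterministic static strategy $\sigma_q$ from the preprocessing step. This $\sigma_q$ certifies $Pr_{T_i}^{\sigma_q}[\bigcap_{q' \in K} Reach(T_{q'})] \le Pr_{T_i}^{\sigma_q}[Reach(T_q)] < 1$, as required. There is no real obstacle here: the only subtlety is recognizing that, unlike in Proposition \ref{prop:conjunction_of_<1_reach} where a single witness strategy had to handle \emph{all} targets simultaneously (necessitating randomization), here a single $q$ and its deterministic static strategy $\sigma_q$ suffice, so the final witness strategy can be taken to be deterministic static.
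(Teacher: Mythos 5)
Your proposal is correct and follows essentially the same route as the paper: both establish $F = \bigcup_{q \in K} W_q$ via Proposition \ref{prop:equiv}(2.), compute each $W_q$ with its deterministic static witness using \cite[Theorem 9.3]{ESY-icalp15-IC}, and output the single strategy $\sigma_q$ for some $q$ with $T_i \in W_q$ as the witness. No gaps.
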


\begin{proof}
	First, as a preprocessing step, for each $q \in K$ we 
compute the set $W_q := \{T_i \in V \mid \exists \sigma_q \in 
\Psi:\; Pr_{T_i}^{\sigma_q}[Reach(T_q)] < 1\}$, together with a 
single deterministic static strategy $\sigma_q$ that witnesses 
the property for every non-terminal in set $W_q$. This can be done 
in time $k \cdot |\mathcal{A}|^{O(1)}$, using the algorithm 
from \cite[Theorem 9.3]{ESY-icalp15-IC}.

Then the Proposition is a direct consequence from the claim 
that $F = \bigcup_{q \in K} W_q$. To see this claim, note that 
$T_i \in \bigcup_{q \in K} W_q$ if and only if there exists 
$\sigma' \in \Psi$ and some $q \in K$ such that 
$Pr_{T_i}^{\sigma'}[Reach(T_q)] < 1$ (by definition 
of the $W_q,\; q \in K$ sets). Then the claim 
follows directly from Proposition \ref{prop:equiv}(2.).

For each $T_i \in F$, the witness strategy $\sigma$ selects 
deterministically some $q \in K$, such that $T_i \in W_q$, 
and copies exactly the deterministic static strategy $\sigma_q$.
\end{proof}

\medskip
Consider the following two examples of OBMDPs with non-terminals 
$\{M, T, T', L, R_1, R_2\}$ and target non-terminals $R_1$ and $R_2$. 
$M$ is the only controlled non-terminal. The examples provide 
a good idea of the difference between the objectives in Propositions 
\ref{prop:conjunction_of_<1_reach} and \ref{prop:interesection_of_reach_<1}.

\vspace*{0.1in}

\noindent \textbf{Example 2}
\begin{align*}
	& M \xrightarrow{a} T  && T \xrightarrow{1} L \; R_1 && L \xrightarrow{1/2} \varnothing\\
	& M \xrightarrow{b} T' && T' \xrightarrow{1} R_1 \; R_2 && L \xrightarrow{1/2} R_2
\end{align*}

There exists a deterministic static witness strategy $\sigma'$ 
for the player such that 
$Pr_{M}^{\sigma'}[Reach(R_1) \linebreak \cap Reach(R_2)] < 1$, 
namely, starting at a non-terminal $M$, let the player choose 
deterministically action $a$. Thus, the probability of observing 
both target non-terminals in the generated tree is $1/2$. 
However, notice that for any strategy $\sigma$, 
starting at non-terminal $M$, 
target non-terminal $R_1$ is reached with 
probability $1$. That is, $\forall \sigma \in \Psi:\; 
\bigvee_{q \in \{1, 2\}} Pr_{M}^{\sigma}[Reach(R_q)] = 1$.

\medskip \medskip
\noindent \textbf{Example 3}
\begin{align*}
	& M \xrightarrow{a} T  && T \xrightarrow{1} L \; R_1 && L \xrightarrow{1/2} R_1\\
	& M \xrightarrow{b} T' && T' \xrightarrow{1} L \; R_2 && L \xrightarrow{1/2} R_2
\end{align*}

There exists a static strategy $\sigma'$ such that 
$\bigwedge_{q \in \{1, 2\}} Pr_{M}^{\sigma'}[Reach(R_q)] < 1$, 
but the strategy needs to randomize, otherwise a deterministic choice 
in non-terminal $M$ will generate a target non-terminal immediately 
in the next generation. Note that the same strategy $\sigma'$ 
(although a deterministic one suffices) 
also guarantees $Pr_M^{\sigma'}[Reach(R_1) \cap Reach(R_2)] < 1$.

\subsection{$\stackrel{?}{\exists} \sigma \in \Psi:\; 
\bigwedge_{q \in K} Pr_{T_i}^{\sigma}[Reach(T_q)] > 0$}

\vspace*{0.1in}

\begin{proposition}
There is an algorithm that,
given an OBMDP, $\mathcal{A}$, and a set $K \subseteq [n]$ of 
$k = |K|$ target non-terminals, computes the set 
$F := \{T_i \in V \mid \exists \sigma \in \Psi:\; 
\bigwedge_{q \in K} Pr_{T_i}^{\sigma}[Reach(T_q)] > 0\}$. 
The algorithm runs in time $O(k \cdot |V|^2)$ and can 
also compute a randomized static witness strategy $\sigma$ 
for the non-terminals in set $F$.
	\label{prop:conjunction_of_>0_reach}
\end{proposition}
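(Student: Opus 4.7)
My plan follows exactly the same template as Propositions \ref{prop:conjunction_of_<1_reach} and \ref{prop:interesection_of_reach_<1}: reduce the multi-target query to single-target qualitative reachability queries and then combine the per-target witnesses. As a preprocessing step, for each $q \in K$ I would compute the set
$$W_q := \{T_i \in V \mid \exists \sigma_q \in \Psi:\; Pr_{T_i}^{\sigma_q}[Reach(T_q)] > 0\},$$
together with a single deterministic static witness strategy $\sigma_q$ that uniformly witnesses the property for every non-terminal in $W_q$. This is exactly the setup already invoked at the start of Section 3: by \cite[Proposition 4.1]{ESY-icalp15-IC}, each $W_q$ together with its associated $\sigma_q$ can be computed in $O(|V|^2)$ time, so the preprocessing runs in $O(k \cdot |V|^2)$ overall.

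The main claim is $F = \bigcap_{q \in K} W_q$. The inclusion $F \subseteq \bigcap_{q \in K} W_q$ is immediate, since any single strategy $\sigma$ witnessing $\bigwedge_{q \in K} Pr_{T_i}^\sigma[Reach(T_q)] > 0$ individually witnesses $T_i \in W_q$ for every $q$. For the reverse inclusion, given $T_i \in \bigcap_{q \in K} W_q$, I would take the randomized strategy $\sigma'$ that uniformly at random picks a target $q \in K$ and then plays exactly the deterministic static strategy $\sigma_q$ throughout the play, identically to the mixing used in the proof of Proposition \ref{prop:conjunction_of_<1_reach}. Since the induced play distribution is the uniform mixture of those under the $\sigma_q$'s, for each $q \in K$ one obtains
\begin{align*}
Pr_{T_i}^{\sigma'}[Reach(T_q)] \;=\; \sum_{q' \in K} \frac{1}{k} \cdot Pr_{T_i}^{\sigma_{q'}}[Reach(T_q)] \;\ge\; \frac{1}{k} \cdot Pr_{T_i}^{\sigma_q}[Reach(T_q)] \;>\; 0,
\end{align*}
where the final inequality uses $T_i \in W_q$. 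Thus $\sigma'$ simultaneously witnesses positive-probability reachability of every target from $T_i$, so $T_i \in F$.

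This argument is essentially the dual of Proposition \ref{prop:conjunction_of_<1_reach} with $<1$ replaced by $>0$, and there is really no obstacle: the positive-probability setting is in fact easier than the probability-one setting, because even a $1/k$-slice of each $\sigma_q$'s behavior inside the uniform mixture already suffices to preserve positive reachability of $T_q$, whereas achieving $<1$ required a symmetric calculation that could still lose at most $\tfrac{k-1}{k}$. The total running time is $O(k \cdot |V|^2)$, dominated by the $k$ calls to the single-target preprocessing, and the strategy $\sigma'$ constructed above serves as the required randomized static witness for all non-terminals in $F$.
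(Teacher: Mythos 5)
Your proof is correct, and its core is the same as the paper's: both characterize $F$ as the intersection over $q \in K$ of the set of non-terminals from which $T_q$ is reachable with positive probability, and both observe that one direction is trivial while the other needs a single randomized static witness. The differences are in the details. The paper computes this set directly as the attractor $Attr(T_q)$ of $T_q$ in the dependency graph (plain graph reachability, which is what makes the $O(k\cdot|V|^2)$ bound transparent), whereas you invoke the sets $W_q$ from \cite[Proposition 4.1]{ESY-icalp15-IC}; these coincide, since positive-probability reachability in an OBMDP is exactly graph reachability in the dependency graph, so this is a cosmetic difference. The more substantive difference is the witness: the paper uses the single behavioral static strategy that randomizes uniformly over \emph{all} actions at every \textsf{M}-form non-terminal and lower-bounds each $Pr_{T_i}^{\sigma'}[Reach(T_q)]$ by $\lambda^n > 0$, while you take the uniform mixture of the $k$ per-target deterministic static witnesses $\sigma_q$. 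Your identity $Pr_{T_i}^{\sigma'}[Reach(T_q)] = \sum_{q'} \frac{1}{k} Pr_{T_i}^{\sigma_{q'}}[Reach(T_q)]$ literally holds only if $\sigma'$ is read as a mixed strategy (a single coin flip selecting $q'$ once and for all), which is not strictly a behavioral strategy $H^C_{\mathcal{A}} \to \Delta(\Sigma)$ as defined in Section 2; the paper itself uses the same device in Proposition \ref{prop:conjunction_of_<1_reach}, so this is within the paper's own conventions, but if you want a genuinely static behavioral witness you should either randomize independently at each occurrence over $\{\sigma_q(T_j) \mid q \in K\}$ and replace the equality by a positive lower bound of the form $\lambda^n$, or simply adopt the paper's fully uniform strategy. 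Either fix is immediate, so there is no real gap.
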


\begin{proof}
	First, for each $q \in K$, we compute the attractor set 
of target non-terminal $T_q$ with respect to the dependency graph 
$G = (U, E)$, $U = V$, of $\mathcal{A}$. 
That is, for each $q \in K$, we compute the 
set $Attr(T_q)$ as the limit of the following sequence 
$(Attr_t(T_q))_{t \ge 0}$:
\begin{align*}
	& Attr_0(T_q) =  \{T_q\} \\
	& Attr_t(T_q) = Attr_{t-1}(T_q) \cup \{T_i \in V \mid \exists \; 
T_j \in Attr_{t-1}(T_q) \text{ s.t. } (T_i, T_j) \in E\}
\end{align*}
In other words, $Attr(T_q)$ is the set of nodes in $G$ 
(or equivalently, non-terminals in $\mathcal{A}$) that 
have a directed path to the target node (non-terminal) $T_q$ 
in the dependency graph $G$. For each $q \in K$, 
such a set can be computed in time $O(|V|^2)$. 
So all $k$ attractor sets (one for each target non-terminal 
$T_q, q \in K$) can be computed in time $O(k \cdot |V|^2)$. 
The Proposition is a direct consequence from the following Claim.

\begin{claim}
	$F = \bigcap_{q \in K} Attr(T_q)$.
\end{claim}

\begin{proof}
    To prove the Claim, we need to show that
$T_i \in \bigcap_{q \in K} Attr(T_q)$ if and only if 
$\exists \sigma' \in \Psi:\; \bigwedge_{q \in K} 
Pr_{T_i}^{\sigma'}[Reach(T_q)] > 0$.

\noindent ($\Leftarrow.$) Suppose that $T_i \not\in 
\bigcap_{q \in K} Attr(T_q)$, i.e., there exists some $q' \in K$ 
such that $T_i \not\in Attr(T_{q'})$. This implies that 
in the dependency graph $G$ there is even no path from 
$T_i$ to $T_{q'}$. Therefore, regardless of strategy $\sigma$ 
for the player, $Pr_{T_i}^{\sigma}[Reach(T_{q'})] = 0$ 
and hence, $\forall \sigma \in \Psi:\; \bigvee_{q \in K} 
Pr_{T_i}^{\sigma}[Reach(T_q)] = 0$.

\medskip
\noindent ($\Rightarrow.$) Suppose that $T_i \in 
\bigcap_{q \in K} Attr(T_q)$. Let $\sigma'$ be 
the randomized static strategy such that in every non-terminal 
$T_j \in V$ of \textsf{M}-form it chooses uniformly at random 
an action among its set of actions $\Gamma^j$. For each $q \in K$, 
in the dependency graph $G$ there is a directed path from 
$T_i$ to $T_q$. Then under the described strategy $\sigma'$, 
starting at a non-terminal $T_i$, there is a positive probability 
to generate any of the target non-terminals $\{T_q \mid q \in K\}$, 
because there is a positive probability for a path in the play (tree) 
to follow the directed path in $G$ from $T_i$ to $T_q$, 
for any $q \in K$.

Denote by $\lambda$ the minimum of 
$\frac{1}{\max_{j \in [n]} |\Gamma^j|}$ 
and the minimum probability among the probabilistic rules of 
$\mathcal{A}$. Then, in fact, for each $q \in K$, 
under $\sigma'$ there is a probability $\ge \lambda^n$ 
to generate a copy of target non-terminal $T_q$ 
in the next $\le n$ generations, i.e., 
$\bigwedge_{q \in K} Pr_{T_i}^{\sigma'}[Reach(T_q)] 
\ge \lambda^n > 0$.
\end{proof}

The randomized static witness strategy $\sigma$ for the non-terminals 
in set $F$ is the $\sigma'$ constructed in the proof of the Claim above.
\end{proof}

\subsection{$\stackrel{?}{\exists} \sigma \in \Psi:\; 
Pr_{T_i}^{\sigma}[\bigcap_{q \in K} Reach^{\complement}(T_q)] 
\triangle \{0, 1\}$} 

\vspace*{0.1in}

Now let us consider the qualitative cases of multi-objective 
reachability where for a given OBMDP and a given set 
$K \subseteq [n]$ of target non-terminals, the aim is 
to compute those non-terminals $T_i \in V$ that satisfy the 
property that $\exists \sigma \in \Psi:\; Pr_{T_i}^{\sigma} 
[\bigcap_{q \in K} Reach^{\complement}(T_q)] \triangle \{0, 1\}$, 
where $\triangle := \{<, = , >\}$.

First, due to the fact that the complement of 
the set (of plays) $\bigcap_{q \in K} Reach^{\complement}(T_q)$ 
is the set (of plays) $\bigcup_{q \in K} Reach(T_q)$, 
we give the following Lemma to show that 
this complement objective reduces to the objective 
of reachability of a single target non-terminal 
in a slightly modified OBMDP.

\begin{lemma}
There is an algorithm
that, given an OBMDP, $\mathcal{A}$, and a set $K \subseteq [n]$ of 
$k = |K|$ target non-terminals $\{T_q \in V^{\mathcal{A}} 
\mid q \in K\}$, 
runs in linear time $O(|\mathcal{A}|)$ and outputs another OBMDP, $\mathcal{A'}$, with 
a single target non-terminal $T_f$, 
such that for 
any $T_i \in V^{\mathcal{A}} - \{T_q \in V^{\mathcal{A}} \mid 
q \in K\} = V^{\mathcal{A'}} - \{T_f\}$ and any strategy 
$\sigma \in \Psi^{\mathcal{A}}$, 
there exists a strategy $\sigma' \in \Psi^{\mathcal{A'}}$ 
such that 
$Pr_{T_i}^{\sigma, \mathcal{A}} 
[\bigcup_{q \in K} Reach(T_q)] = 
Pr_{T_i}^{\sigma', \mathcal{A'}}[Reach(T_f)]$. 
	\label{lemma:union_of_reachability}
\end{lemma}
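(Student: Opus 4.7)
The plan is to construct $\mathcal{A'}$ by collapsing all targets into a single ``trap'' non-terminal $T_f$, and then couple the two stochastic processes in the natural way. Concretely, I would define $V^{\mathcal{A'}} := (V^{\mathcal{A}} \setminus \{T_q \mid q \in K\}) \cup \{T_f\}$, where $T_f$ is a fresh \textsf{L}-form non-terminal with the single rule $T_f \xrightarrow{1} \varnothing$. For every non-terminal $T_i \in V^{\mathcal{A}} \setminus \{T_q \mid q \in K\}$, I keep its rules from $\mathcal{A}$, except that every occurrence of a target non-terminal $T_q$ (for $q \in K$) on the right-hand side is rewritten to $T_f$. This rewriting is a single pass through the rules of $\mathcal{A}$ and clearly produces an OBMDP in SNF with $|\mathcal{A'}| \in O(|\mathcal{A}|)$.

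Next I would set up the bijection between ancestor histories. Since $T_f$ is \textsf{L}-form with a rule that immediately terminates, $T_f$ never appears as a proper intermediate non-terminal in any longer valid ancestor history of $\mathcal{A'}$: it can only appear as the current non-terminal at the very end. Hence every valid ancestor history $h' \in H_{\mathcal{A'}}$ with $\current(h') \neq T_f$ uses only non-terminals from $V^{\mathcal{A}} \setminus \{T_q \mid q \in K\}$, and corresponds bijectively to a valid ancestor history $h \in H_{\mathcal{A}}$ with the same sequence of moves and labels, in which no ancestor (including the current node) is a target. Given $\sigma \in \Psi^{\mathcal{A}}$, I then define $\sigma' \in \Psi^{\mathcal{A'}}$ by $\sigma'(h') := \sigma(h)$ on all controlled histories $h' \in H^C_{\mathcal{A'}}$ (note that $T_f$ being \textsf{L}-form contributes no controlled histories, so $\sigma'$ is fully specified).

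Then I would exhibit the coupling: jointly generate the two plays $X$ under $(\sigma,\mathcal{A})$ and $X'$ under $(\sigma',\mathcal{A'})$ generation by generation, using the same random bits for the corresponding nodes (i.e., the probabilistic rules of identical \textsf{L}-form non-terminals, and identical controlled distributions at \textsf{M}-form non-terminals), until either (i) the corresponding node in $X$ receives a target label $T_q$ for some $q \in K$, in which case the corresponding node in $X'$ receives the label $T_f$ (and its subplay in $X'$ terminates immediately at the very next step), or (ii) no target is generated, in which case the two processes agree on labels forever. This works because the rule sets of non-target non-terminals in $\mathcal{A'}$ are obtained by relabeling targets to $T_f$, and because the strategy choices coincide via the bijection above (which applies precisely at nodes whose ancestor history contains no target).

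Under the coupling, the event that $X$ contains a node labeled by some $T_q$, $q \in K$, is exactly the event that $X'$ contains a node labeled by $T_f$. Therefore
\[
Pr_{T_i}^{\sigma,\mathcal{A}}\!\Big[\bigcup_{q \in K} Reach(T_q)\Big] \;=\; Pr_{T_i}^{\sigma',\mathcal{A'}}[Reach(T_f)],
\]
as required. The main obstacle is purely notational: verifying that the strategy-translation is well defined (which reduces to the observation that $T_f$ never appears as a non-leaf in $\mathcal{A'}$) and that the coupling respects the ordered-tree structure on both sides (immediate since the relabeling preserves left/right/unique children and rule probabilities). Any histories of $\mathcal{A}$ that traverse a target before terminating have no analogue in $\mathcal{A'}$, but this is harmless: once a target is hit in $\mathcal{A}$, the event $\bigcup_q Reach(T_q)$ has already occurred, so the strategy's behaviour past that point is irrelevant to the probability we are measuring.
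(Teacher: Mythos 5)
Your construction of $\mathcal{A'}$ (fresh terminating non-terminal $T_f$, targets deleted and all RHS occurrences relabeled to $T_f$) and your strategy translation are exactly those in the paper's proof; your coupling argument is just a slightly more explicit phrasing of the paper's observation that all plays of $\mathcal{A}$ collapsing to the same play of $\mathcal{A'}$ have combined probability equal to that play's probability. The proposal is correct and takes essentially the same approach, proving precisely the stated direction (the paper additionally proves the converse translation from $\sigma'$ to $\sigma$, which it needs later but which the lemma statement does not require).
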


\begin{proof}
	Consider the OBMDP, $\mathcal{A'}$, obtained from 
OBMDP, $\mathcal{A}$, by adding a new 
purely probabilistic target non-terminal $T_f$ with 
a single rule $T_f \xrightarrow{1} \varnothing$, removing all 
target non-terminals $\{T_q \in V^{\mathcal{A}} \mid q \in K\}$ 
and their associated rules, 
and replacing any occurrence of a non-terminal 
$T_q \in V^{\mathcal{A}}$, $q\in K$, on the right-hand side 
of some rule with non-terminal $T_f$. 
Hence, $V^{\mathcal{A'}} = (V^{\mathcal{A}} \cup 
\{T_f\}) - \{T_q \in V^{\mathcal{A}} \mid q \in K\}$. Clearly, for any 
$T_q \in V^{\mathcal{A}}$, with $q \in K$ and for any 
$\sigma \in \Psi^{\mathcal{A}}$, 
$Pr_{T_q}^{\sigma, \mathcal{A}}[\bigcup_{q' \in K} 
Reach(T_{q'})] = 1$. Also, for $T_f \in V^{\mathcal{A'}}$ and 
for any $\sigma' \in \Psi^{\mathcal{A'}}$, 
$Pr_{T_f}^{\sigma', \mathcal{A'}}[Reach(T_f)] = 1$.

Observe that for any play (tree) $\mathcal{T}$ in $\mathcal{A}$, 
there is a play $\mathcal{T'}$ in $\mathcal{A'}$ such that any 
copy $o$ of a non-terminal $T_q \in V^{\mathcal{A}}, q \in K$ 
in $\mathcal{T}$ is replaced in $\mathcal{T'}$ by a copy 
of non-terminal $T_f$ and the subtree of descendants of $o$ 
is non-existent in $\mathcal{T'}$. 

Now consider any starting non-terminal 
$T_u \in V^{\mathcal{A}} - \{T_q \in V^{\mathcal{A}} \mid q \in K\} 
= V^{\mathcal{A'}} - \{T_f\}$.

Let $\sigma \in \Psi^{\mathcal{A}}$ be any strategy for 
the player in $\mathcal{A}$. Define strategy 
$\sigma' \in \Psi^{\mathcal{A'}}$ in $\mathcal{A'}$ in 
the following way: for each non-terminal 
$T_i \in V^{\mathcal{A'}} - \{T_f\}$, strategy 
$\sigma'$ behaves exactly like $\sigma$ for all 
ancestor histories ending in $T_i$, and for 
non-terminal $T_f$ strategy $\sigma'$ acts arbitrarily 
in all ancestor histories ending in $T_f$ since it is irrelevant. 
Note that, due to the construction of $\mathcal{A'}$ 
and $\sigma'$, if a play (tree) $\mathcal{T}$, generated under 
strategy $\sigma$, belongs to set (objective) 
$\bigcup_{q \in K} Reach(T_q)$ in $\mathcal{A}$, then 
in $\mathcal{A'}$ under $\sigma'$ the corresponding unique 
play $\mathcal{T'}$ (as described above) belongs to set (objective) 
$Reach(T_f)$. Furthermore, all plays $\mathcal{T}$ in 
$\mathcal{A}$ with the same corresponding play $\mathcal{T'}$ 
in $\mathcal{A'}$ have a combined probability, of being 
generated under $\sigma$, equal to the probability of 
$\mathcal{T'}$ being generated under $\sigma'$ in $\mathcal{A'}$. 
Hence, $Pr_{T_u}^{\sigma, \mathcal{A}}[\bigcup_{q \in K} 
Reach(T_q)] = Pr_{T_u}^{\sigma', \mathcal{A'}}[Reach(T_f)]$. 
But $\sigma$ was an arbitrary strategy.

For the opposite direction, let 
$\sigma' \in \Psi^{\mathcal{A'}}$ be any strategy for 
the player in $\mathcal{A'}$. Define 
$\sigma \in \Psi^{\mathcal{A}}$ to be the strategy that, 
for all non-terminals  
$T_i \in V^{\mathcal{A}} - \{T_q \in V^{\mathcal{A}} \mid q \in K\}$, 
acts the same as $\sigma'$ in all ancestor histories 
ending in $T_i$; and for all non-terminals 
$T_q \in V^{\mathcal{A}},\; q \in K$ 
the strategy $\sigma$ acts arbitrarily in 
all ancestor histories ending in $T_q$ 
as it is irrelevant. Then, for any play 
$\mathcal{T'} \in Reach(T_f)$ in $\mathcal{A'}$ under 
strategy $\sigma'$, there is at least one play 
$\mathcal{T} \in \bigcup_{q \in K} Reach(T_q)$ in 
$\mathcal{A}$ under strategy $\sigma$, such that for any 
copy of non-terminal $T_f$ in tree $\mathcal{T'}$ there is a copy 
of some non-terminal $T_q \in V^{\mathcal{A}}, q \in K$ at 
the corresponding position in tree $\mathcal{T}$. But note that 
the probability of generating $\mathcal{T'}$ in $\mathcal{A'}$ 
under $\sigma'$ is equal to the sum of probabilities of generating 
all such corresponding plays $\mathcal{T}$ in $\mathcal{A}$ 
under $\sigma$. Hence, 
$Pr_{T_u}^{\sigma', \mathcal{A'}}[Reach(T_f)] = 
Pr_{T_u}^{\sigma, \mathcal{A}}[\bigcup_{q \in K} Reach(T_q)]$. 
But $\sigma'$ was an arbitrary strategy.
\end{proof}

We now present a Proposition that deals with 
all four qualitative questions for the (set of plays) 
objective $\bigcap_{q \in K} Reach^{\complement}(T_q)$ 
for a given set $K \subseteq [n]$ of target non-terminals.

\begin{proposition}
There is a P-time algorithm that,
given an OBMDP, $\mathcal{A}$, and a set $K \subseteq [n]$ of 
$k = |K|$ target non-terminals, computes the set $F := \{T_i \in V \mid 
\exists \sigma \in \Psi:\; 
Pr_{T_i}^{\sigma}[\bigcap_{q \in K} 
Reach^{\complement}(T_q)] \triangle \linebreak \{0, 1\}\}$, 
where $\triangle := \{<, = , >\}$. 
The algorithm can also compute a deterministic witness strategy 
$\sigma$ for the non-terminals in set $F$.
	\label{prop:intersetion_of_nonreachability}
\end{proposition}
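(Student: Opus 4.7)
The plan is to apply Lemma~\ref{lemma:union_of_reachability} to reduce every query about the multi-target objective $\bigcap_{q \in K} Reach^{\complement}(T_q)$ in $\mathcal{A}$ to an equivalent qualitative single-target reachability query about $T_f$ in the auxiliary OBMDP $\mathcal{A'}$. By the lemma (whose proof in fact establishes the correspondence in both directions, so that the set of achievable probabilities coincides), and by the identity $Pr_{T_i}^{\sigma,\mathcal{A}}[\bigcap_{q \in K} Reach^{\complement}(T_q)] = 1 - Pr_{T_i}^{\sigma',\mathcal{A'}}[Reach(T_f)]$ for corresponding strategies, the four non-vacuous qualitative queries $\exists \sigma: Pr^{\sigma}[\bigcap_{q \in K} Reach^{\complement}(T_q)] \triangle b$ with $\triangle b \in \{=\!0,\ =\!1,\ <\!1,\ >\!0\}$ become, respectively, the queries $\exists \sigma': Pr_{T_i}^{\sigma',\mathcal{A'}}[Reach(T_f)] \triangle' b'$ with $\triangle' b' \in \{=\!1,\ =\!0,\ >\!0,\ <\!1\}$. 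The queries with $\triangle b \in \{<\!0,\ >\!1\}$ are vacuously answered ``no''.

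Each of the four reduced queries is a qualitative single-target reachability question decidable in polynomial time for OBMDPs. For ``$\exists \sigma': Pr[Reach(T_f)] = 1$'' use the P-time almost-sure single-target reachability algorithm of \cite[Theorem~9.3]{ESY-icalp15-IC} applied to $\mathcal{A'}$ to compute $AS_f^{\mathcal{A'}}$. For ``$\exists \sigma': Pr[Reach(T_f)] < 1$'' use the same reference to compute the set $W_f^{\mathcal{A'}}$ of non-terminals from which the player has a strategy making the reachability probability strictly less than $1$ (together with a deterministic static witness). For ``$\exists \sigma': Pr[Reach(T_f)] > 0$'' compute the dependency-graph attractor $Attr(T_f)$ in $\mathcal{A'}$, as in the proof of Proposition~\ref{prop:conjunction_of_>0_reach}, and output it. For ``$\exists \sigma': Pr[Reach(T_f)] = 0$'' (the almost-sure avoidance question for $T_f$), compute in P-time the greatest set $U \subseteq V^{\mathcal{A'}} \setminus \{T_f\}$ closed under the following local rules: a \textsf{L}-form non-terminal $T_i$ lies in $U$ iff every $T_j$ with $T_i \rightarrow T_j$ lies in $U$; a \textsf{M}-form $T_i$ lies in $U$ iff $\exists a \in \Gamma^i$ with $T_i \xrightarrow{a} T_j$ and $T_j \in U$; a \textsf{Q}-form $T_i \xrightarrow{1} T_j \; T_r$ lies in $U$ iff $T_j, T_r \in U$. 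A standard decreasing-set iteration reaches this greatest fixed point in P-time, and at each \textsf{M}-form non-terminal it also hands us a deterministic witness action.

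Finally, for a starting non-terminal $T_i$ that is itself one of the targets $T_q$ with $q \in K$ (and thus not in $V^{\mathcal{A'}}$), the root of every play is already labelled by a target, so $Pr_{T_i}^{\sigma,\mathcal{A}}[\bigcap_{q \in K} Reach^{\complement}(T_q)] = 0$ for every $\sigma$, and each of the four queries has an obvious trivial answer. Summing the work, the procedure is polynomial-time overall. The bulk of the technical content has already been carried out in Lemma~\ref{lemma:union_of_reachability}, so there is no substantive new obstacle: the main step is merely to match each of the qualitative queries with the right single-target primitive (almost-sure reachability, positive reachability, positive avoidance, almost-sure avoidance) and to invoke it on $\mathcal{A'}$.
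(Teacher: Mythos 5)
Your proposal is correct and follows essentially the same route as the paper: reduce the intersection-of-non-reachability objective to single-target reachability of $T_f$ in the auxiliary OBMDP via Lemma~\ref{lemma:union_of_reachability}, flip each qualitative query accordingly, and invoke the known P-time single-target primitives, handling target starting non-terminals separately. The only cosmetic difference is that you spell out the attractor and greatest-fixed-point computations for the probability-$0$ queries, where the paper simply cites \cite[Proposition 4.1]{ESY-icalp15-IC}.
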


\begin{proof}
    We can rephrase the question of whether 
$\exists \sigma \in \Psi^{\mathcal{A}}:\; 
Pr_{T_i}^{\sigma, \mathcal{A}}[\bigcap_{q \in K} 
Reach^{\complement}(T_q)] \triangle x$ accordingly into 
the form of asking whether $\exists \sigma \in \Psi^{\mathcal{A}}:\; 
Pr_{T_i}^{\sigma, \mathcal{A}}[\bigcup_{q \in K} Reach(T_q)] 
\triangle_{\complement} 1 - x$, where $x \in \{0, 1\}$ and 
$\triangle_{\complement}$ is $<, = , >$ if 
$\triangle$ is $ >, =, <$, respectively. And as a 
consequence of Lemma \ref{lemma:union_of_reachability}, 
there exists a modified OBMDP, $\mathcal{A'}$, 
with a single target non-terminal $T_f$ such that 
$\exists \sigma \in \Psi^{\mathcal{A}}:\; 
Pr_{T_i}^{\sigma, \mathcal{A}}[\bigcup_{q \in K} Reach(T_q)] 
\triangle_{\complement} 1 - x$ if and only if 
$\exists \sigma' \in \Psi^{\mathcal{A'}}:\; 
Pr_{T_i}^{\sigma', \mathcal{A'}}[Reach(T_f)] 
\triangle_{\complement} 1 - x$.

For the case of $1 - x = 0$, by 
\cite[Proposition 4.1]{ESY-icalp15-IC}, there is a 
P-time algorithm to compute the set $F^{\mathcal{A'}}$ 
of non-terminals $T_i$ in $\mathcal{A'}$ and a deterministic static 
witness strategy $\sigma' \in \Psi^{\mathcal{A'}}$ such that 
$T_i \in F^{\mathcal{A'}}$ are precisely the non-terminals 
that satisfy the property 
$Pr_{T_i}^{\sigma', \mathcal{A'}}[Reach(T_f)] 
\triangle_{\complement} 0$.

For the case of $1 - x = 1$ and $\triangle_{\complement}$ 
equal to $<$ (respectively, $=$), by 
\cite[Theorem 9.3, 9.4]{ESY-icalp15-IC}, there is again a 
P-time algorithm to compute the set $F^{\mathcal{A'}}$ 
of non-terminals $T_i$ in $\mathcal{A'}$ and a 
deterministic static (respectively, non-static) witness strategy 
$\sigma' \in \Psi^{\mathcal{A'}}$ such that 
$T_i \in F^{\mathcal{A'}}$ are the non-terminals that satisfy 
the property $Pr_{T_i}^{\sigma', \mathcal{A'}}[Reach(T_f)] 
< 1$ (respectively, $Pr_{T_i}^{\sigma', \mathcal{A'}} 
[Reach(T_f)] = 1$).

Now for the qualitative decision questions where 
tuple $(\triangle_{\complement}, 1 - x)$ is equal to 
$(=, 0)$ or $(<, 1)$, let $F = F^{\mathcal{A}} := F^{\mathcal{A'}}$; 
and where tuple $(\triangle_{\complement}, 1 - x)$ is equal 
to $(>, 0)$ or $(=, 1)$, let 
$F = F^{\mathcal{A}} := (F^{\mathcal{A'}} - \{T_f\}) \cup 
\{T_q \in V^{\mathcal{A}} \mid q \in K\}$. By the proof of 
Lemma \ref{lemma:union_of_reachability}, from a deterministic 
witness strategy $\sigma' \in \Psi^{\mathcal{A'}}$ for the 
starting non-terminals from set $F^{\mathcal{A'}}$ we can 
obtain a corresponding deterministic (non-)static witness strategy 
$\sigma \in \Psi^{\mathcal{A}}$ for the starting non-terminals from 
set $F - \{T_q \in V^{\mathcal{A}} \mid q \in K\}$. As for each 
non-terminal $T_q \in V^{\mathcal{A}}, q \in K$, let strategy 
$\sigma$ make deterministically and statically an arbitrary choice 
of action from the action set $\Gamma^q$ (in the case if 
$T_q$ is of \textsf{M}-form), since if 
$T_q \not\in F$ then strategy is irrelevant at $T_q$ and 
if $T_q \in F$ then the property holds for any choice of 
the strategy in $T_q$.
\end{proof}

\subsection{$\stackrel{?}{\exists} \sigma \in \Psi:\; 
\bigwedge_{q \in K} Pr_{T_i}^{\sigma}[Reach(T_q)] = 0$}

\vspace*{0.1in}

\begin{proposition}
There is a P-time algorithm that,
given an OBMDP, $\mathcal{A}$, and a set $K \subseteq [n]$ of 
$k = |K|$ target non-terminals, 
computes the set 
$F := \{T_i \in V \mid \exists \sigma \in \Psi:\; 
\bigwedge_{q \in K} Pr_{T_i}^{\sigma}[Reach(T_q)] = 0\}$. 
The algorithm can also compute a deterministic static witness 
strategy $\sigma$ for the non-terminals in set $F$.
	\label{prop:conjunction_of_=0_reach}
\end{proposition}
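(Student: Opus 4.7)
The plan is to reduce this question to a single-target non-reachability question and then solve it in P-time via an attractor-style fixed point.

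First, by Proposition \ref{prop:equiv}(3.), for any starting non-terminal $T_i$ and any strategy $\sigma \in \Psi$, we have
\[ \bigwedge_{q \in K} Pr_{T_i}^{\sigma}[Reach(T_q)] = 0 \ \iff\ Pr_{T_i}^{\sigma}\Big[\bigcup_{q \in K} Reach(T_q)\Big] = 0, \]
and moreover a witness $\sigma$ for one side is a witness for the other. Hence the set we must compute is $F = \{T_i \in V \mid \exists \sigma \in \Psi:\ Pr_{T_i}^{\sigma}[\bigcup_{q \in K} Reach(T_q)] = 0\}$. This could be handled by appeal to Proposition \ref{prop:intersetion_of_nonreachability} (with the tuple $(\triangle, x) = (=, 1)$, using that $\bigcap_{q \in K} Reach^{\complement}(T_q)$ is the complement event of $\bigcup_{q \in K} Reach(T_q)$), but I would give a direct attractor-style algorithm as it is more transparent and immediately yields the desired deterministic static witness strategy.

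The algorithm initializes a ``bad'' set $\bar{F} := \{T_q : q \in K\}$ and repeatedly adds to $\bar{F}$ any non-terminal $T_i \notin \bar{F}$ of one of the following three kinds: (i) $T_i$ is of \textsf{L}-form with some positive-probability rule $T_i \xrightarrow{p_{ij}} T_j$ where $T_j \in \bar{F}$; (ii) $T_i$ is of \textsf{Q}-form with $T_i \xrightarrow{1} T_j\, T_r$ and $T_j \in \bar{F}$ or $T_r \in \bar{F}$; (iii) $T_i$ is of \textsf{M}-form and for every action $a \in \Gamma^i$ the rule $T_i \xrightarrow{a} T_j$ satisfies $T_j \in \bar{F}$. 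Upon saturation, output $F := V \setminus \bar{F}$. This is a standard controlled-attractor computation and runs in P-time (in fact $O(|\mathcal{A}|)$).

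For correctness, an easy nested induction on the iteration in which $T_i$ enters $\bar{F}$ shows that for each $T_i \in \bar{F}$ and every $\sigma \in \Psi$, $Pr_{T_i}^{\sigma}[\bigcup_{q \in K} Reach(T_q)] > 0$, so $T_i \notin F$. Conversely, for $T_i \in F$ we construct the witness strategy $\sigma$ by choosing, at each \textsf{M}-form $T_j \in F$, one action $a \in \Gamma^j$ with $T_j \xrightarrow{a} T_k$ and $T_k \in F$ (such an $a$ exists precisely because condition (iii) failed for $T_j$). Under this deterministic static $\sigma$, conditions (i)--(iii) together ensure that the generated play starting at $T_i$ stays confined to $F \cup \{\varnothing\}$ with probability $1$, so no $T_q$ with $q \in K$ is ever produced and therefore $\bigwedge_{q \in K} Pr_{T_i}^{\sigma}[Reach(T_q)] = 0$. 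Since the result follows directly from Proposition \ref{prop:equiv}(3.) combined with a straightforward attractor argument, there is no substantive obstacle: the only care needed is checking that the three local closure conditions exactly capture the ``exists $\sigma$ with reach $= 0$'' semantics in both directions.
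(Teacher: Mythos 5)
Your proof is correct, and it takes a somewhat different route from the paper's. The paper rewrites $\bigwedge_{q \in K} Pr_{T_i}^{\sigma}[Reach(T_q)] = 0$ as $\bigwedge_{q \in K} Pr_{T_i}^{\sigma}[Reach^{\complement}(T_q)] = 1$, applies Proposition \ref{prop:equiv}(1.) to pass to $Pr_{T_i}^{\sigma}[\bigcap_{q \in K} Reach^{\complement}(T_q)] = 1$, and then invokes Proposition \ref{prop:intersetion_of_nonreachability} as a black box; that proposition in turn uses the collapsing construction of Lemma \ref{lemma:union_of_reachability} (replace all targets by a single fresh target $T_f$) and then appeals to the known P-time single-target algorithm of \cite[Proposition 4.1]{ESY-icalp15-IC}. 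You reach the same intermediate reformulation (via the equivalent Proposition \ref{prop:equiv}(3.) rather than (1.) on complements), explicitly note that the black-box appeal would suffice, but instead unfold the machinery into a direct controlled-attractor fixed point on the original OBMDP. Your closure rules (i)--(iii) are exactly the right ones, your induction for soundness of $\bar{F}$ is standard (it implicitly uses that the residual of a strategy after an ancestor history is again a strategy, which holds here), and your witness strategy is genuinely deterministic and static since safety from $F$ can be maintained surely, not merely almost surely. What the paper's route buys is brevity and reuse of already-proved components; what yours buys is a self-contained linear-time argument that produces the witness strategy directly, without constructing the auxiliary OBMDP $\mathcal{A'}$. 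Both are valid proofs of the proposition.
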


\begin{proof}
	Note that the question of deciding whether there 
exists a strategy $\sigma \in \Psi$ for the player such that 
$\bigwedge_{q \in K} Pr_{T_i}^{\sigma}[Reach(T_q)] = 0$ can be 
rephrased as asking whether there exists a strategy 
$\sigma \in \Psi$ such that $\bigwedge_{q \in K} 
Pr_{T_i}^{\sigma}[Reach^{\complement}(T_q)] = 1$. 
By Proposition \ref{prop:equiv}(1.), we already know that 
it is equivalent to ask instead whether there exists 
a strategy $\sigma \in \Psi$ such that 
$Pr_{T_i}^{\sigma}[\bigcap_{q \in K} 
Reach^{\complement}(T_q)] = 1$. 
Hence, $F = \{T_i \in V \mid \exists \sigma \in \Psi:\; 
Pr_{T_i}^{\sigma}[\bigcap_{q \in K} Reach^{\complement}(T_q)] 
= 1\}$. And by Proposition \ref{prop:intersetion_of_nonreachability}, 
there is a P-time procedure to compute the set $F$ and to compute 
a deterministic static witness strategy $\sigma$ for 
the non-terminals in set $F$.
\end{proof}

We leave open the decidability of  general boolean combinations of arbitrary qualitative reachability and non-reachability queries.

\end{document}